\tikzstyle{scissor lines}=[
\tikzset{shiftarr/.style={
        rounded corners,%
        to path={--([#1]\tikztostart.center)
                     -- ([#1]\tikztotarget.center) \tikztonodes
                     -- (\tikztotarget)},
}}
\tikzset{horizontal then diagonal/.style={
        rounded corners,%
        to path={
          -- ($ (\tikztostart.center) !.5! (\tikztostart.center -| \tikztotarget.center) $) \tikztonodes
          -- (\tikztotarget)
                     },
}}
\newcommand{\makecell}[2][]{\begin{tikzpicture}[baseline=(X.base)]
    \node[inner sep=0pt,outer sep=0pt,draw=none,align=left,
    execute at begin node=\setlength{\baselineskip}{3ex}] (X) {#2};
    \end{tikzpicture}%
}
\newcommand{\scissors}{\ding{34}} %
\newcommand\noshowkeys{\def\hideNextShowKeysLabel{test}}
\renewcommand*\showkeyslabelformat[1]{%
\@ifundefined{hideNextShowKeysLabel}{%
\noexpandarg%
\StrSubstitute{#1}{ }{\textvisiblespace}[\TEMP]%
\parbox[t]{\marginparwidth}{\raggedright\normalfont\small\ttfamily\(\{\){\color{red!50!black}\expandafter\seqsplit\expandafter{\TEMP}}\(\}\)}%
}{}%
}
\newcommand{\copar}{\textsf{CoPaR}\xspace}
\newcommand{\berkeleyparser}{\textsf{berkeleyparser}\xspace}
\newcommand\vartextvisiblespace[1][.5em]{%
  \makebox[#1]{%
    \kern.07em
    \vrule height.3ex
    \hrulefill
    \vrule height.3ex
    \kern.07em
  }
}
\newcommand{\id}{\mathsf{id}}
\newcommand{\inl}{\mathsf{inl}}
\newcommand{\inr}{\mathsf{inr}}
\newcommand{\Set}{\ensuremath{\mathsf{Set}}\xspace}
\newcommand{\Pow}{{\mathcal{P}}}
\newcommand{\Powf}{{\mathcal{P}_{\!\omega}}}
\newcommand{\Bag}{\Bagf}
\newcommand{\Bagf}{\ensuremath{\mathcal{B}_{\omega}}\xspace}
\newcommand{\BagM}{\ensuremath{\mathcal{B}(M_{\neq 0})}}
\newcommand{\Dist}{\ensuremath{{\mathcal{D}_\omega}}\xspace}
\newcommand{\R}{\mathds{R}}
\newcommand{\N}{\mathds{N}}
\newcommand{\Z}{\mathds{Z}}
\newcommand{\CO}{\mathcal{O}}
\newcommand{\op}[1]{\ensuremath{\mathsf{#1}}}
\newcommand{\fpair}[1]{\langle#1\rangle}
\newcommand{\rifactor}{\ensuremath{p}} %
\newcommand{\sortprod}[1][i]{{\textstyle\prod_{#1\in I}}}
\newcommand{\sortprodImpl}[1][i]{{\textstyle\prod}}
\newcommand{\Potf}{\Powf} %
\newcommand{\inj}{\ensuremath{\mathsf{in}}}
\newcommand{\pr}{\ensuremath{\mathsf{pr}}}
\newcommand{\sortcoprod}[1][i]{{\textstyle\coprod_{#1\in I}}}
\newcommand{\sortcoprodImpl}[1][i]{{\textstyle\coprod}}
\newcommand{\arity}[3][]{\ensuremath{\mathord{\raisebox{1pt}{\ensuremath{#1#2}}\mkern-1.5mu/\mkern-1.5mu{\raisebox{-1pt}{\ensuremath{#1#3}}}}}}
\newcommand{\descto}[3][]{\arrow[phantom]{#2}[#1]{\text{\footnotesize{}\begin{tabular}{c}#3\end{tabular}}}}
\newcommand{\itemref}[2]{\autoref{#1}.\ref{#2}}
\newcommand{\lmbrace}{\{\mskip-4mu[}
\newcommand{\rmbrace}{]\mskip-4mu\}}
\newcommand{\mbraces}[1]{\lmbrace{}\,#1\,\rmbrace{}}
\tikzset{
  coalgebra drawing/.style={
    state/.append style={
      minimum width=0pt,
      minimum height=0pt,
      inner sep=0.8mm,
    },
    every edge/.append style={
      shorten <= 1pt,
      shorten >= 1pt,
    }
  }
}
\tikzstyle{searchtree}=[
\tikzstyle{root}=[
\tikzstyle{subtree}=[
\tikzstyle{only math nodes}=[%
\def\epito{\twoheadrightarrow}
\newcommand{\csum}{\raisebox{-1pt}{\text{\large$\mathrm{\Sigma}$}}}
\newcommand{\takeout}[1]{\empty}
\newcommand{\geZero}{>^{\!\!\smash{?}}}
\numberwithin{equation}{section}
\setlist[enumerate,1]{label=(\arabic*),font=\normalfont,align=left,leftmargin=0pt,labelindent=0pt,listparindent=\parindent,labelwidth=0pt,itemindent=!,topsep=3pt,parsep=0pt,itemsep=3pt,start=1}
\setlist[enumerate,2]{label=(\alph*),font=\normalfont,labelindent=*,leftmargin=*,start=1}
\newtheorem{theorem}{Theorem}[section]
\newtheorem{definition}[theorem]{Definition}
\newtheorem{notation}[theorem]{Notation}
\newtheorem{example}[theorem]{Example}
\newtheorem{proposition}[theorem]{Proposition}
\newtheorem{corollary}[theorem]{Corollary}
\newtheorem{remark}[theorem]{Remark}
\newtheorem{construction}[theorem]{Construction}
\title{From Generic Partition Refinement to Weighted Tree Automata Minimization}
\author[T.~Wißmann, H.P.~Deifel, S.~Milius, L.~Schröder]
{
 Thorsten Wißmann\texorpdfstring{\thanks{Supported by the DFG project COAX (MI 717/5-2)}$^,$\thanks{Supported by the DFG project COAX (SCHR 1118/12-2)},
  Hans-Peter Deifel\texorpdfstring{\thanks{Supported by the Deutsche Forschungsgemeinschaft (DFG) as part of the Research and Training Group 2475 ``Cybercrime and Forensic Computing'' (393541319/GRK2475/1-2019)}}{},
 Stefan Milius\texorpdfstring{\footnotemark[1]$^,$\footnotemark[3]}{}, and
 Lutz Schröder\texorpdfstring{\footnotemark[2]$^,$\footnotemark[3]}{}\\
 Friedrich-Alexander-Universit\"{a}t Erlangen-N\"urnberg, Germany}{}}
\begin{document}
\makecorrespond

\maketitle
\label{firstpage}

\begin{abstract} Partition refinement is a method for minimizing
  automata and transition systems of various types. Recently, we have
  developed a partition refinement algorithm that is generic in the
  transition type of the given system and matches the run time of the
  best known algorithms for many concrete types of systems,
  e.g.~deterministic automata as well as ordinary, weighted, and
  probabilistic (labelled) transition systems. Genericity is achieved
  by modelling transition types as functors on sets, and systems
  as coalgebras. In the present work, we refine the run time analysis
  of our algorithm to cover additional instances, notably weighted
  automata and, more generally, weighted tree automata.  For weights
  in a cancellative monoid we match, and for non-cancellative monoids
  such as (the additive monoid of) the tropical semiring even
  substantially improve, the asymptotic run time of the best known
  algorithms. We have implemented our algorithm in a generic tool that
  is easily instantiated to concrete system types by implementing a
  simple refinement interface. Moreover, the algorithm and the tool
  are modular, and partition refiners for new types of systems are
  obtained easily by composing pre-implemented basic
  functors. Experiments show that even for complex system types, the
  tool is able to handle systems with millions of transitions.
\end{abstract}
\begin{keywords}
Partition refinement; Markov chains; Lumping; Minimization; Weighted tree automata
\end{keywords}
\section{Introduction}
Minimization is a basic verification task on state-based systems, concerned with
reducing the number of system states as far as possible while preserving the
system behaviour. This can be done by identifying states that exhibit the same
behaviour. Hence, it can be used for equivalence checking of systems, and
constitutes a preprocessing step in further system analysis tasks, such as model
checking.

Notions of equivalent behaviour typically vary quite widely even on fixed system
types~\cite{vanglabbeek2001linear}. We work with various notions of
\emph{bisimilarity}, i.e.\ with branching-time equivalences. Classically,
bisimilarity for labelled transition systems obeys the principle ``states $x$
and $y$ are bisimilar if for every transition $x \to x'$, there exists a
transition $y \to y'$ with $x'$ and $y'$ bisimilar, and vice versa''%
.
It is thus given via a fixpoint
definition, to be understood as a \emph{greatest} fixpoint, and can
therefore be iteratively approximated from above. This is the
principle behind \emph{partition refinement} algorithms: Initially all
states are tentatively considered equivalent, and then this initial
partition is iteratively refined according to observations made on the
states until a fixpoint is reached. Unsurprisingly, such procedures
run in polynomial time.  Its comparative tractability (in contrast,
e.g.~trace equivalence and language equivalence of non-deterministic
systems are PSPACE-complete~\cite{KanellakisS90}) makes minimization
under bisimilarity interesting even in cases where the main
equivalence of interest is linear-time, such as word automata.\twnote{}

Kanellakis and Smolka~\cite{KanellakisS90} in fact provide a minimization
algorithm with run time $\CO(m\cdot n)$ for ordinary transition systems with $n$
states and $m$ transitions. However, even faster partition refinement algorithms
running in $\CO((m+n)\cdot \log n)$ have been developed for various types of
systems over the past 50 years. For example, Hopcroft's algorithm minimizes
deterministic automata for a fixed input alphabet $A$ in $\CO(n\cdot \log n)$
\cite{Hopcroft71}; it was later generalized to variable input alphabets, with
run time $\CO(n\cdot |A|\cdot \log n)$~\cite{Gries1973,Knuutila2001}. The
Paige-Tarjan algorithm minimizes transition systems in time
$\CO((m+n)\cdot \log n)$~\cite{PaigeTarjan87}, and generalizations to labelled
transition systems have the same time
complexity~\cite{HuynhTian92,DerisaviEA03,Valmari09}. Minimization of weighted
systems is typically called \emph{lumping} in the literature; Valmari and
Franchescinis~\cite{ValmariF10} exhibit a simple $\CO((m+n)\cdot \log n)$
lumping algorithm for systems with rational weights.

In earlier work~\cite{concur2017,concurSpecialIssue}\smnote{} we have
developed an efficient \emph{generic} partition refinement algorithm that can be
easily instantiated to a wide range of system types, most of the time either
matching or improving the previous best run time. The genericity of the
algorithm is based on modelling state-based systems as coalgebras following the
paradigm of universal coalgebra~\cite{Rutten00}, in which the branching
structure of systems is encapsulated in the choice of a functor, the \emph{type
  functor}. This allows us to cover not only classical relational systems and
various forms of weighted systems, but also to combine existing system types in
various ways, e.g.~nondeterministic and probabilistic branching. Our algorithm
uses a functor-specific \emph{refinement interface} that supports a graph-based
representation of coalgebras. It allows for a generic complexity analysis, and
indeed the generic algorithm has the same asymptotic complexity as the
above-mentioned specific algorithms. For Segala
systems~\cite{Segala95} (systems that combine probabilistic and
non-deterministic branching, also known as Markov decision processes), it
matches the run time of a recent algorithm~\cite{GrooteEA18} discovered
independently and almost at the same time as ours, and improves on the run time
of the previously best algorithm~\cite{BaierEM00}.

The new contributions of the present paper are twofold. On the theoretical side,
we show how to instantiate our generic algorithm to weighted systems with
weights in a monoid (generalizing the group-weighted case considered
previously~\cite{concur2017,concurSpecialIssue}). We then refine the complexity
analysis of the algorithm, making the complexity of the implementation of the
type functor a parameter $\rifactor(c)$, where $c$ is the input coalgebra. In
the new setup, the previous analysis becomes the special case where
$\rifactor(c) = 1$. Under the same structural assumptions on the type functor
and the refinement interface as previously, our algorithm runs in time $\CO(m
\cdot \log n\cdot \rifactor(c))$ for an input coalgebra $c$ with $n$ states and
$m$ transitions. Instantiated to the case of weighted systems over
non-cancellative monoids (with $\rifactor(c) = \log(m)$ where $m$ is the number
of transitions in $c$) the run time of the generic algorithm is
$\CO(m\cdot\log^2 m)$, thus markedly improving the run time $\CO(m\cdot n)$ of
previous algorithms for weighted automata~\cite{Buchholz08} and, more generally,
(bottom-up) weighted tree automata~\cite{HoegbergEA07}. This includes weighted
tree automata for the additive monoid $(\N,\max,0)$ of the tropical semiring,
which are used in natural language processing~\cite{MayKnight06}.
In addition, for cancellative monoids, we again essentially match the complexity of
the previous algorithms~\cite{Buchholz08,HoegbergEA07}.

Our second main contribution is a generic and modular implementation
of our algorithm, the \emph{Coalgebraic Partition Refiner}
(\copar). Instantiating \copar to coalgebras for a given functor
requires only to implement the refinement interface. We provide such
implementations for a number of basic type functors, e.g.\ for
non-deterministic, weighted, or probabilistic branching, as well as
(ranked) input and output alphabets or output weights. In addition,
\copar is \emph{modular}: For any type functor obtained by composing
basic type functors for which a refinement interface is available,
\copar automatically derives an implementation of the refinement
interface. We explain in detail how this modularity is realized in our
implementation and, extending Valmari and Franchescinis's
ideas~\cite{ValmariF10}, we explain how the necessary data structures
need to be implemented so as to realize the low theoretical
complexity. We thus provide a working efficient partition refiner for
all the above mentioned system types. In particular, our tool is, to
the best of our knowledge, the only available implementation of
partition refinement for many composite system types, notably for
weighted (tree) automata over non-cancellative monoids. The tool
including source code and evaluation data is available at
\url{https://git8.cs.fau.de/software/copar}.

The present paper is an extended and completely reworked version of a previous
conference paper~\cite{coparFM19}. It includes full proofs, additional
benchmarks, and more extensive examples and explanations. Moreover, we formally
show how refinement interfaces can be combined along products of functors
(\autoref{prodInterface} and \autoref{sec:des}). We have optimized the memory
consumption of our implementation which has led to better performance in the
benchmarks on weighted tree automata (\autoref{tab:extended}).

\paragraph{Organization.} The material is structured as follows.  In
\autoref{sec:theofoundations} we recall the necessary technical background and
the modelling of state based systems as coalgebras. In \autoref{sec:partref}, we
describe the tool and the underlying algorithm, discussing in particular tool
usage and implementation, the generic interface, and the modularity principles
that we employ. Some concrete instantiations are exhibited in
\autoref{sec:instances}. We then go on to elaborate the case of weighted systems
in more detail, giving a refinement interface for the basic underlying functor
of such systems in \autoref{sec:monoidvalued}, and showing in \autoref{sec:wta}
how to cover weighted tree automata -- which arise by combining weighted
systems and ranked alphabets -- by means of our modularity principles.
Benchmarks are presented in \autoref{sec:bench}.

\section{Preliminaries: Universal Coalgebra}

\label{sec:theofoundations}
Our algorithmic framework~\cite{concurSpecialIssue} is based on modelling
state-based systems abstractly as \emph{coalgebras} for a (set) \emph{functor}
that encapsulates the transition type, following the paradigm of \emph{universal
  coalgebra}~\cite{Rutten00}. We proceed to recall standard notation for sets
and maps, as well as basic notions and examples in coalgebra. Occasional
comments assume familiarity with basic notions of category theory
(e.g.~\cite{Awodey10}) but the few concepts needed for the main development are
explained in full. We fix a singleton set $1=\{*\}$; for every set~$X$ we have a
unique map $!\colon X\to 1$. We denote composition of maps by $(-)\cdot(-)$, in
applicative order. We denote the disjoint union -- in categorical terms, the
\emph{coproduct} -- of sets~$A,B$ by $A+B$ where we write $\inl\colon A\to A+B$
and $\inr\colon B\to A+B$ for the canonical injections; the disjoint union, or
coproduct, of a family $(X_j)_{j\in J}$ of sets is denoted by
$\coprod_{j \in J} X_j$. Similarly, we write $\prod_{j\in J} X_j$ for the
(cartesian) product of a family of sets. Injection maps of disjoint unions and
projection maps of products, respectively, are denoted by
\[
  \inj_i\colon X_i \to \coprod_{j \in J} X_j
  \qquad
  \text{and}
  \qquad
  \pr_i\colon \prod_{j\in J} X_j \to X_i.
\]
Given two maps $f\colon A\to X$ and $g\colon A\to Y$ we write
$\fpair{f,g}\colon A\to X\times Y$ for the map $a \mapsto
(f(a),g(a))$. Similarly, for a family of maps $(f_i\colon A\to X_i)_{i\in I}$,
we write $\fpair{f_i}_{i\in I}\colon A\to \sortprod[i] X_i$ for the map
$a\mapsto (f_i(a))_{i\in I}$.

We model the transition type of state based systems using \emph{functors}.
Informally, a functor~$F$ assigns to a set~$X$ a set~$FX$, whose elements are
thought of as structured collections over~$X$, and an $F$-coalgebra is a
map~$c\colon X\to FX$ assigning to each state~$x$ in a system a structured
collection $c(x)\in FX$ of successors. The most basic example is that of
transition systems, where~$F$ is powerset, so a coalgebra assigns to each state
a set of successors. Formal definitions are as follows.

\begin{definition}
  \begin{enumerate}
  \item A \emph{functor} $F\colon \Set\to\Set$ assigns to each set~$X$ a
    set~$FX$, and to each map $f\colon X\to Y$ a map $Ff\colon FX\to FY$,
    preserving identities and composition ($F\id_X=\id_{FX}$,
    $F(g\cdot f)=Fg\cdot Ff$).
  \item An \emph{$F$-coalgebra} $(C,c)$ consists of a set~$C$ of \emph{states}
    and a \emph{transition} structure $c\colon C\to FC$.
  \item A \emph{morphism} $h\colon (C,c)\to (D,d)$ of $F$-coalgebras is a map
    $h\colon C\to D$ that preserves the transition structure,
    i.e.~$Fh\cdot c = d\cdot h$.
  \item Two states $x,y\in C$ of a coalgebra $c\colon C\to FC$ are
    \emph{behaviourally equivalent} (notation: $x \sim y$) if there exists a
    coalgebra morphism $h$ such that $h(x) = h(y)$.
  \end{enumerate}
\end{definition}
As above, we usually use the letters $X$ and $Y$ for sets (without structure)
and $C$ or $D$ for state sets of coalgebras.
\begin{example} \label{exManyFunctors}
  \begin{enumerate}
  \item \label{exManyFunctors:Pow} The \emph{finite powerset} functor $\Powf$
    maps a set~$X$ to the set~$\Powf X$ of all \emph{finite} subsets of~$X$, and
    a map $f\colon X\to Y$ to the map $\Powf f = f[-]\colon \Powf X\to \Powf Y$
    taking direct images.  $\Powf$-coalgebras are finitely branching
    (unlabelled) transition systems and two states are behaviourally equivalent
    iff they are bisimilar in the sense of Milner~\cite{Milner80} and
    Park~\cite{Park81}.

  \item A signature $\Sigma$ is a set $\Sigma$ of \emph{operation symbols}
    together with a map $\op{ar}\colon \Sigma\to \N$, which assigns to each
    operation symbol $\sigma \in \Sigma$ its \emph{arity} $\op{ar}(\sigma)$.
    We write $\arity{\sigma}{n}\in \Sigma$ for $\sigma \in \Sigma$
    with $\op{ar}(\sigma) = n$. Every signature $\Sigma$ canonically defines a
    \emph{polynomial functor}
    \[
      F_{\Sigma} X = \coprod_{\arity[\scriptstyle]{\sigma}{n} \in \Sigma} X^{n}.
    \]
    We slightly abuse notation by denoting for each
    $\arity{\sigma}{n} \in \Sigma$ the corresponding injection into the coproduct by
    \[
      \sigma\colon X^n\to F_{\Sigma} X.
    \]
    Moreover, we simply write $\Sigma$ in lieu of $F_{\Sigma}$, so we have
    \[
      \Sigma X = F_\Sigma X = \{\sigma(x_1,\ldots,x_n)\mid \arity{\sigma}{n} \in \Sigma, 
      x_1,\ldots,x_n\in X\};
    \]
    for $\sigma(x_1,\ldots,x_n)$, we sometimes write
    $\arity{\sigma}{n}(x_1,\ldots,x_n)$ to emphasize the arity or disambiguate
    overloaded symbols. This polynomial functor acts component-wise on maps
    $f\colon X\to Y$:
    \[
      \Sigma f\colon \Sigma X\to \Sigma Y
      \qquad
      (\Sigma f)(\sigma(x_1,\ldots,x_n)) = \sigma(f(x_1),\ldots,f(x_n))
      \qquad\text{for $\arity{\sigma}{n} \in \Sigma$}.
    \]
    Every state in a $\Sigma$-coalgebra represents a (possibly infinite)
    \emph{$\Sigma$-tree}, i.e.\ a rooted ordered tree where every node is
    labelled with some operation symbol $\sigma \in \Sigma$ and has precisely
    $\op{ar}(\sigma)$-many children. In particular, a node is a leaf iff it is
    labelled with a $0$-ary operation symbol. For example, for the signature
    $\Sigma = \{\arity{*}{2}, \arity{\ell}{0}\}$ with a binary operation symbol
    and a constant, we have the following example of a $\Sigma$-tree:
    \[
      \takeout{} %
      \begin{tikzpicture}[searchtree,only math nodes,
        child anchor=border,
        level 2/.style={sibling distance=9mm, level distance=6mm},
        level 3/.style={sibling distance=9mm, level distance=6mm}, 
        level 4/.style={sibling distance=9mm, level distance=6mm}, 
        ]
        \node[draw=none] (pseudoroot) {}
        child{
          node{*}
          child{node{\ell}}
          child{node{*}
            child{node{\ell}}
            child{node{*}
              child{node{\ell}}
              child{node[draw=none,rotate=35,inner sep=0pt]{\vdots}}
            }
          }
        }
        ;
      \end{tikzpicture}
    \]
    Given a state $x$ in a coalgebra $c\colon C \to \Sigma C$, we obtain a
    $\Sigma$-tree $t_x$ by unravelling the coalgebra structure at~$x$. More
    precisely, $t_x$ is uniquely defined by
    \[
      t_x =
      \takeout{} %
      \begin{tikzpicture}[searchtree,only math nodes,
        level 1/.style={sibling distance=6mm, level distance=5mm},
        baseline=(dots.base)
        ]
        \node{\sigma}
        child{node[subtree]{t_{x_1}}}
        child{
          node[yshift=-4mm,draw=none] (dots) {\cdots}
          edge from parent[draw=none]}
        child{node[subtree]{t_{x_{\mathrlap{n}\phantom{1}}}}};
      \end{tikzpicture}
      \qquad
      \text{if $c(x) = \sigma(x_1, \ldots, x_n)$}.
    \]
    (this equation constituting a \emph{coinductive}
    definition~\cite{Rutten00}).  For example, the above $\Sigma$-tree is
    obtained by unravelling the coalgebra structure at the state $x$ of the
    $\Sigma$-coalgebra
    \[
      c\colon \{x,y\} \to \Sigma \{x,y\}
      \quad\text{with}\quad \text{$c(x) = *(y, x)$ and $c(y) = \ell$}.
    \]
    Two states in a $\Sigma$-coalgebra are behaviourally equivalent iff they
    represent the same possibly infinite tree: $x \sim y$ iff $t_x = t_y$.
    
  \item For a fixed finite set $A$, the functor given by
    $FX=2\times X^A$, where $2 = \{0,1\}$, sends a set $X$ to the set
    of pairs of boolean values and functions $A\to X$. An
    $F$-coalgebra $(C,c)$ is a deterministic automaton (without
    initial state). For each state $x\in C$, the first component of
    $c(x)$ determines whether $x$ is a final state, and the second
    component is the successor function $A\to X$ mapping each input
    letter $a\in A$ to the successor state of $x$ under input letter
    $a$. States $x,y \in C$ are behaviourally equivalent iff they
    accept the same language in the usual sense.

    This functor is (naturally isomorphic to) the polynomial functor for the
    signature $\Sigma$ consisting of two operation symbols of arity $|A|$:
    $2 \times X^A \cong X^{|A|} + X^{|A|}$.
  \item\label{exManyFunctors:3} For a commutative monoid $(M,+,0)$, the
    \emph{monoid-valued} functor $M^{(-)}$ sends each set $X$ to the set of
    \emph{finitely supported} maps $f\colon X\to M$, i.e.~$f(x) = 0$ for all but
    finitely many $x\in X$. In case~$M$ is even an abelian group, we also refer
    to $M^{(-)}$ as a \emph{group-valued} functor.

    An $F$-coalgebra $c\colon C\to M^{(C)}$ is,
    equivalently, a finitely branching $M$-weighted transition system: For a
    state $x\in C$, $c(x)$ maps each state $y\in C$ to the weight $c(x)(y)$ of
    the transition from $x$ to $y$. For a map $f\colon X\to Y$,
    $M^{(f)}\colon M^{(X)}\to M^{(Y)}$ sends a finitely supported map
    $v\colon X\to M$ to the map $y\mapsto \sum_{x\in X, f(x) = y} v(x)$,
    corresponding to the standard image measure construction. As the notion of
    behavioural equivalence of states in $M^{(-)}$-coalgebras, we obtain
    weighted bisimilarity (cf.~\cite{Buchholz08,KlinS13}), given coinductively
    by postulating that states $x,y\in C$ are behaviourally equivalent
    $(x\sim y)$ iff
      \[\textstyle
        \sum_{z'\sim z} c(x)(z')
        = \sum_{z'\sim z} c(y)(z')\qquad \text{for all $z\in C$}.
      \]
      For the Boolean monoid $(2=\{0,1\},\vee,0)$, the monoid-valued
      functor $2^{(-)}$ is (naturally isomorphic to) the finite
      powerset functor $\Powf$.  For the monoid of real numbers
      $(\R,+,0)$, the monoid-valued functor $\R^{(-)}$ has
      $\R$-weighted systems as coalgebras, e.g.~Markov chains. In
      fact, finite Markov chains are precisely finite coalgebras of
      the \emph{finite distribution functor}, i.e.~the
      subfunctor~$\Dist$ of $\R_{\ge 0}^{(-)}$ (and hence
      of~$\R^{(-)}$) given by
      \[
        \Dist(X) = \{\mu \in \R_{\ge 0}^{(X)}\mid\textstyle{\sum_{x \in X} \mu(x)
        = 1} \}.
      \]
      For the monoid $(\N,+,0)$ of natural numbers, the monoid-valued
      functor is the bag functor $\Bag$, which maps a set $X$ to the
      set of finite multisets over~$X$.
  \end{enumerate}
\end{example}
\begin{notation}\label{N:csum}
  Note that for every commutative monoid $(M,+,0)$, we have the canonical
    summation map
    \begin{equation}\label{eq:csum}
      \csum\colon \Bag M\to M
      \qquad\text{with}\qquad
      \csum f = \sum_{x\in M} \underbrace{x+\cdots+ x}_{f(x)\text{-many}}
    \end{equation}
    It sums up all elements of a bag $f$ of monoid
    elements, where a single element of the monoid can occur multiple
    times.
\end{notation}
  
\begin{remark}\label{R:EilenbergMoore}
  For categorically-minded readers, we note that $\Bag$ is a monad on the
  category of sets. Moreover, commutative monoids are precisely the
  Eilenberg-Moore algebras (e.g.~\cite{Awodey10}) for $\Bag$. In fact, for every
  commutative monoid $(M,+,0)$, the map $\csum$ is the structure of its
  associated Eilenberg-Moore algebra.

    \takeout{}%
\end{remark}

\section{Generic Partition Refinement}
\label{sec:partref}

We recall some key aspects of our generic partition refinement
algorithm~\cite{concurSpecialIssue}, which \emph{minimizes} a
given coalgebra, i.e.~computes its quotient modulo behavioural
equivalence. We centre the presentation around the implementation and
use of our tool.

The algorithm~\cite[Algorithm 4.5]{concurSpecialIssue} is parametrized over a
type functor $F$, represented by implementing a fixed \emph{refinement
  interface}, which in particular allows for a representation of $F$-coalgebras
in terms of nodes and edges (by no means implying a restriction to relational
systems!). Our previous analysis has established that the algorithm minimizes an
$F$-coalgebra $c\colon C\to FC$ with~$n$ nodes and~$m$ edges in time $\CO(m\cdot
\log n)$, assuming $m\ge n$ and that the operations of the refinement interface
run in linear time. In the present paper, we generalize the analysis,
establishing a run time in $\CO(m\cdot \log n\cdot \rifactor(c))$, where
$\rifactor(c)$ is a factor in the time complexity of the operations implementing
the refinement interface which depends on the input coalgebra $c\colon C\to FC$.
For many functors, $\rifactor(c) = 1$, reproducing the previous analysis. In some
cases, $\rifactor(c)$ is not constant, and our new analysis still applies in
these cases, either matching or improving the best known run time complexity in most
instances, most notably weighted systems over non-cancellative monoids.

We proceed to discuss the design of the implementation, including input formats
of our tool \copar for composite functors built from pre-implemented basic
blocks and for systems to be minimized (\autoref{sec:parsing}). We then discuss
the internal representation of coalgebras in the tool
(\autoref{sec:coalgint}). Subsequently, we recall refinement interfaces,
describe their implementation (\autoref{sec:refintface}), and discuss how to
combine them (\autoref{sec:des}). Finally, we note implementation
details of our tool and, in particular, argue that it realizes the theoretical
time complexity (\autoref{sec:imp}). 

\subsection{Generic System Specification}
\label{sec:parsing}
\copar accepts as input a file that represents a finite $F$-coalgebra
$c\colon C\to FC$, and consists of two parts. The first part is a
single line specifying the functor $F$. Each of the remaining lines
describes one state $x\in C$ and its one-step behaviour
$c(x)$. Examples of input files are shown in
\autoref{fig:example-input}.

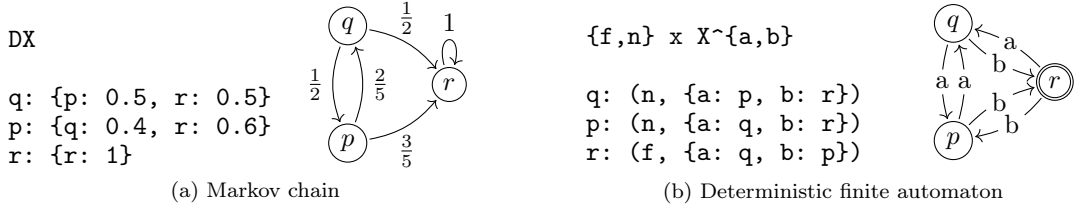
\begin{figure}
  \centering%
  \vspace{-2mm}
  \hfill
  \begin{subfigure}{0.4\textwidth}%
    \begin{minipage}[b]{.55\textwidth}%
\begin{verbatim}
DX

q: {p: 0.5, r: 0.5}
p: {q: 0.4, r: 0.6}
r: {r: 1}
\end{verbatim}
    \end{minipage}
    \hfill%
    \begin{tikzpicture}[coalgebra drawing]
      \node (q1) at (120:9mm) [state] {$q$};
      \node (q2) at (-120:9mm) [state] {$p$};
      \node (q3) at (0:9mm) [state] {$r$};
      \path[->,bend angle=20]
      (q1) edge [left, bend right] node[overlay] {$\frac{1}{2}$} (q2)
      (q1) edge [above,bend left] node[overlay] {$\frac{1}{2}$} (q3)
      (q2) edge [right, bend right] node {$\frac{2}{5}$} (q1)
      (q2) edge [below, bend right,overlay] node {$\frac{3}{5}$} (q3)
      (q3) edge [loop above] node {$1$} (q3);
    \end{tikzpicture}%
    \hfill\hspace*{0pt}%
    \caption{Markov chain}
    \label{subfig:markov}
  \end{subfigure}%
  \hfill%
  \begin{subfigure}{0.4\textwidth}
    \begin{minipage}[b]{.58\textwidth}
\begin{verbatim}
{f,n} x X^{a,b}

q: (n, {a: p, b: r})
p: (n, {a: q, b: r})
r: (f, {a: q, b: p})
\end{verbatim}
    \end{minipage}%
    \hfill%
    \begin{tikzpicture}[coalgebra drawing]
      \node (q) at (120:9mm) [state] {$q$};
      \node (p) at (-120:9mm) [state] {$p$};
      \node (r) at (0:9mm) [state,accepting] {$r$};
      \path[->,bend angle=15,
      every node/.append style={
        shape=circle,
        inner sep=1pt,
        fill=white,
        anchor=center,
      }]
      (q) edge [bend right] node {a} (p)
      (q) edge [bend right] node {b} (r)
      (p) edge [bend right] node {a} (q)
      (p) edge [bend left] node {b} (r)
      (r) edge [bend right] node {a} (q)
      (r) edge [bend left = 20] node {b} (p);
    \end{tikzpicture}
    \hfill%
    \caption{Deterministic finite automaton}
    \noshowkeys\label{fig:dfa-picture}
  \end{subfigure}\hfill\,
  \caption{Examples of input files with encoded coalgebras}
  \vspace{-4mm}
  \label{fig:example-input}
\end{figure}

\subsubsection{Functor Specification}
Functors are specified as composites of basic building blocks; that is, the
functor given in the first line of an input file is an expression determined by
the grammar
\begin{equation}
T ::= \texttt{X} \mid F(T,\ldots,T) \qquad (F\colon \Set^k\to \Set) \in \mathcal{F},
\label{termGrammar}
\end{equation}
where the character $\texttt{X}$ is a terminal symbol and $\mathcal{F}$ is a set
of predefined symbols called \emph{basic functors}, representing a number of
pre-implemented functors of type $F\colon \Set^k\to \Set$. Only for basic
functors, a \emph{refinement interface} needs to be implemented
(\autoref{sec:refintface}); for composite functors, the tool derives an
appropriate refinement interface automatically (\autoref{sec:des}). Basic
functors currently implemented include the (finite) powerset functor~$\Powf$,
the bag functor~$\Bag$, monoid-valued functors~$M^{(-)}$, and polynomial
functors for finite many-sorted signatures $\Sigma$, based on the description of
the respective refinement interfaces given in our previous
work~\cite{concurSpecialIssue} and, in the case of $M^{(-)}$ for
unrestricted commutative monoids~$M$ (rather than only abelian groups), the newly
developed interface described in \autoref{sec:noncan}. Since behavioural
equivalence is preserved and reflected under converting $G$-coalgebras into
$F$-coalgebras for a subfunctor~$G$
of~$F$~\cite[Proposition~2.13]{concurSpecialIssue}, we also cover subfunctors,
such as the finite distribution functor~$\Dist$ as a subfunctor of $\R^{(-)}$.
With the polynomial constructs $+$ and $\times$ written in infix notation as
usual, the currently supported grammar is effectively
\begin{align}
  T &::= \texttt{X}
      \mid
      \Powf\, T \mid \Bag\, T \mid \Dist\, T
          \mid M^{(T)} 
      \mid \Sigma \label{termGrammar2}
\\
  \Sigma &::= C \mid T + T \mid T \times T \mid T^A
      \quad~
  C ::= \N ~\vert~ A
      \quad~
  A ::= \{s_1,\ldots,s_n\} \mid n\notag
\end{align}
where $n\in\N$ denotes the set $\{0,\ldots,n-1\}$, the $s_k$ are strings
subject to the usual conventions for variable names\footnote{a letter or an
  underscore character followed by alphanumeric characters or underscore},
exponents $F^A$ are written \verb|F^A|, and
$M$ is one of the monoids $(\Z,+,0)$, $(\R,+,0)$, $(\mathds{C},+,0)$,
$(\Powf(64), \cup, \emptyset)$ (i.e.\ the monoid of $64$-bit words
with bitwise $\op{or}$),
and $(\N,\max,0)$ (the additive monoid of the tropical semiring). Note
that~$C$ effectively ranges over at most countable sets, and~$A$ over
finite sets. A term~$T$ determines a functor $F\colon \Set\to\Set$ in
the evident way, with~$\texttt{X}$ interpreted as the argument,
i.e.~$F(\texttt{X}) = T$. It should be noted that the implementation
treats composites of polynomial (sub-)terms as a single functor in
order to minimize overhead incurred by excessive decomposition,
e.g.~$X \mapsto \{a,b\} + \Powf(\R^{(X)}) + X \times X$ is composed
of the basic functors $\Powf$, $\R^{(-)}$ and the $3$-sorted
polynomial functor $\Sigma(X,Y,Z) = \{a,b\} + X + Y \times Z$.

\subsubsection{Coalgebra Specification}
The remaining lines of an input file define a finite $F$-coalgebra
$c\colon C\to FC$. Each line of the form $x\texttt{:}\text{\textvisiblespace} t$
defines a state $x\in C$, where $x$ is a variable name, and $t$ represents the
element $t=c(x)\in FC$. The syntax for $t$ depends on the specified functor~$F$,
and follows the structure of the term~$T$ defining~$F$; we write $t\in T$ for a
term $t$ describing an element of~$FC$:
\begin{itemize}
\item $t\in\texttt{X}$ is given by one of the named
  states specified in the file.
\item $t\in T_1\times\dots\times T_n$ is given by
  $t ::= (t_1,\dots,t_n)$ where $t_i \in T_i$, $i=1,\dots, n$.
\item $t\in T_1 + \dots + T_n$ is given by
  $t ::= \texttt{inj}\text{\textvisiblespace}
  i\text{\textvisiblespace} t_i$ where $i=1,\dots, n$ and $t_i \in T_i$.
\item $t\in \Powf T $ and $t\in\Bag T$ are given by
  $t ::= \texttt{\{}t_1,\ldots,t_n\texttt{\}}$ with
  $t_1,\dots,t_n \in T$.
\item $t\in M^{(T)}$ is given by
  $t::=\texttt{\{}t_1\texttt{:}\text{\textvisiblespace}m_1\texttt{,}
  \ldots\texttt{,}$ $t_n\texttt{:}\text{\textvisiblespace}m_n \texttt{\}}$ with
  $m_1,\dots,m_n\in M$ and $t_1,\dots,t_n\in T$, denoting $\mu \in M^{(TC)}$
  with $\mu(t_i) = m_i$ for $i=1,\dots n$, and $\mu(t)=0$ for
  $t\notin\{t_1,\dots,t_n\}$.
\end{itemize}
For example,  the two-line declaration
\begin{verbatim}
P({a,b} x R^(X))
x: {(a, {x: 2.4}), (a,{}), (b,{x: -8})}
\end{verbatim}
defines an $F$-coalgebra
for the functor $FX=\Powf(\{a,b\}\times\R^{(X)}$),
with a single state~$x$, having two
$a$-successors and one~$b$-successor, where successors are elements of
$\R^{(X)}$. One $a$-successor is constantly zero, and the other assigns
weight~$2.4$ to~$x$; the $b$-successor assigns weight $-8$ to~$x$. Two more
examples are shown in Fig.~\ref{fig:example-input}.
 
\subsubsection{Generic Input File Processing}
After reading the functor term $T$, the tool builds a parser for the
functor-specific input format and parses an input coalgebra specified in the
above syntax into an intermediate format described in the next section.
In the case of a composite functor, the parsed coalgebra then undergoes a
substantial amount of preprocessing that also affects how transitions are
counted; we defer the discussion of this point to \autoref{sec:des}, and assume
for the time being that $F\colon \Set\to\Set$ is a basic functor with only one
argument.

\subsection{Internal Representation of Coalgebras}
\label{sec:coalgint}

New functors are added to the framework by implementing a \emph{refinement
  interface} (\autoref{D:refinement-interface}). The interface relates to an
abstract encoding of the functor and its coalgebras in terms of nodes and edges:
\removebrackets
\begin{definition}[\cite{concurSpecialIssue}]\label{D:enc}
  An \emph{encoding} of a functor $F$ consists of a set $A$ of
  \emph{labels} and a family of maps
  \[
    \flat\colon FX\to \Bag(A\times X),
  \]
  one for every set~$X$. The
  \emph{encoding} of an $F$-coalgebra $c\colon C\to FC$ is given by
  the map
  \[
    \fpair{F!,\flat}\cdot c\colon C\to F1\times \Bag(A\times C),
  \] and we
  say that the coalgebra has $n = |C|$ states and
  $m = \sum_{x\in C}|\flat(c(x))|$ edges.
\end{definition}
\noindent
An encoding does by no means imply a reduction from $F$-coalgebras to
$\Bag(A\times (-))$-coalgebras, i.e.\ the notions of behavioural equivalence for
$\Bag(A\times (-))$ and~$F$, respectively, can be radically different. The
encoding just fixes a representation format.
\begin{remark}
  Categorically-minded readers will notice that $\flat$ is not assumed to be a
  natural transformation. In fact,~$\flat$ fails to be natural in all encodings
  we have implemented except the one for polynomial functors.
\end{remark}
Encodings typically match how one intuitively draws coalgebras of various types
as certain labelled graphs. We briefly recall three examples below;
see~\cite{concurSpecialIssue} for more. We note that so far, we see no general
method for deriving an encoding of a functor, which therefore requires
invention.

\begin{example}\label{E:enc}
  \begin{enumerate}
  \item\label{E:enc:1} We have mentioned in
    \autoref{exManyFunctors}\ref{exManyFunctors:Pow} that finitely branching
    transition systems are the coalgebras for $F= \Powf$. For the encoding we
    choose the singleton set $A = 1$ of labels, and
    $\flat\colon \Powf X \to \Bag(1 \times X) \cong \Bag X$ is the obvious
    inclusion, i.e.\ $\flat(S)(x) = 1$ if $x \in S$ and $\flat(S)(x) = 0$
    otherwise, for $S\in\Powf X$.
  \item\label{E:enc:2} For a monoid-valued functor $F = M^{(-)}$ (see
    \autoref{exManyFunctors}\ref{exManyFunctors:3}) we take $A = M_{\neq 0}$,
    the non-zero elements of~$M$, and define
    $\flat\colon M^{(X)} \to \Bag(M_{\neq 0} \times X)$ by taking $\flat(f)$ to be the finite
    set $\{(f(x),x) \mid x \in X, f(x) \neq 0\}$, interpreted as a bag.

    Special cases are the group-valued functors $G^{(-)}$ for an abelian
    group~$G$, in particular $\R^{(-)}$ and its subfunctor $\Dist$, whose
    coalgebras are Markov chains (cf.~Fig.~\ref{fig:example-input}). In the last
    case, we formally inherit $A = \R_{\neq 0}$ from the encoding of $\R^{(-)}$
    but can actually restrict to $A=(0,1]$.
  \item\label{E:enc:3} For a polynomial functor $F = \Sigma$, the set of labels is $A= \N$,
    and the map $\flat\colon \Sigma X \to \Bag(\N \times X)$ is given by 
    \[
      \flat(\sigma(x_1,\ldots,x_n)) = \big\{
      (1,x_1),
      \ldots,
      (n,x_n)
      \big\}.
    \]
  \end{enumerate}
\end{example}
\takeout{}%
\noindent The implementation of a basic functor then consists of two
ingredients: (1)~a parser that transforms the syntactic specification of an
input coalgebra (\autoref{sec:parsing}) into the encoded coalgebra in the above
sense, and (2)~an implementation of the refinement interface, which is
motivated next.

\subsection{Splitting Blocks by $F3$}
\label{sec:f3}
In order to understand the requirements on an interface encapsulating the
functor specific parts of partition refinement, let us look at one step of the
algorithm which is crucial for the overall run time complexity. Partition
refinement algorithms in general maintain a
\emph{partition} of the state space, i.e.\ a disjoint decomposition of the state
space into sets called \emph{blocks}, adhering to the invariant that states in
different blocks are behaviourally inequivalent, and ensuring upon termination
that states in the same block are behaviourally equivalent. Initially, the
algorithm tentatively identifies all states of a coalgebra $c\colon C\to FC$ in
a partition consisting of only one block, $C$. Then, the algorithm splits this
block into smaller blocks whenever states of the coalgebra turn out to be
behaviourally inequivalent and successively applies this procedure to the new
blocks until no further splitting is necessary. In the first iteration, the
algorithm separates states $x,y\in C$ if they are distinguished by
$F!\cdot c\colon C\to F1$, i.e.~if $F!(c(x)) \neq F!(c(y))$.  For example for
automata, i.e.~for $FX= 2 \times X^A$, we have $F1 = 2\times 1^A \cong 2$, so this
first step separates final from non-final states. In the classical Paige-Tarjan
algorithm~\cite{PaigeTarjan87}, i.e.~for $FX = \Powf X$, deadlock states and
states with at least one outgoing transition are separated from each other. In
the subsequent steps, the representation of the coalgebra as labelled edges
(i.e.~$\flat\cdot c\colon C\to \Bagf(A\times C)$) is used to refine the
partition further. Information about the inequivalence of states is propagated from
successor states to predecessor states; this is iterated until a (greatest)
fixed point is reached, i.e.~until no new behavioural inequivalences are
discovered.

\begin{figure} \centering
  \tikzstyle{partitionBlock}=[%
        shape=rectangle,
        rounded corners=2.5mm,
        minimum height=5mm,
        minimum width=5mm,
        draw=black!40,
        inner sep = 2mm,
]%
\begin{tikzpicture}
  \begin{scope}[
    every node/.append style={
      inner sep=0pt,
      outer sep=2pt,
      minimum width=1pt,
      minimum height=4pt,
      anchor=center,
    }
    ]
    \begin{scope}[yshift=1.5cm]
      \node (x1) at (-1.2,0) {$y$};
      \node (x2) at (0,0) {$x$};
      \node (x3) at (1.2,0) {$z$};
    \end{scope}
    \node (y1) at (-1.6,0) {$\bullet$};
    \node (y2) at (-0.5,0) {$\bullet$};
    \node (y3) at (0.5,0) {$\bullet$};
    \node (y4) at (1.2,0) {$\bullet$};
  \end{scope}
  \node[anchor=west,xshift=2mm] (y5) at (y4.east) {$\ldots$};
  \draw (y1) edge[draw=none] node {$\ldots$} (y2);
  \begin{scope}[
    every node/.append style={
      partitionBlock,
      inner sep=2pt,
    },
    every label/.append style={
      font=\footnotesize,
      anchor=south,
      outer sep=1pt,
      inner sep=1pt,
      minimum height=2mm,
      shape=rectangle,
    },
    ]
    \node[fit=(x1) (x2) (x3)] (x123){};
    \node[fit=(y1) (y5)] (C) {};
    \node[every label] at ([xshift=-1mm]C.north east) {\normalsize $B$};
    \begin{scope}[every node/.append style={minimum width=10mm}, 
      ]
      \newcommand{\nextblockdistance}{8mm} 
      \foreach \name/\nodesource/\anchor/\direction in
      {Peast/x123/east/,
        Qeast/C/east/,
        Pwest/x123/west/-,
        Qwest/C/west/-} {
        \begin{scope}
          \clip ([yshift=1mm]\nodesource.north \anchor)
          rectangle ([yshift=-1mm,xshift=\direction\nextblockdistance]
                     \nodesource.south \anchor);
          \node at ([xshift=\direction\nextblockdistance]\nodesource.\anchor) (\name) {};
        \end{scope}
      }
    \end{scope}
  \end{scope}
  \draw[thick, decoration={brace,mirror},decorate]
  ([yshift=-2mm]y1.south west) -- node[anchor=north,yshift=-2pt]{$S$} ([yshift=-2mm]y2.south east) ;
  \draw[thick, decoration={brace,mirror},decorate]
  ([yshift=-2mm]y3.south west) -- node[anchor=north,yshift=-2pt]{$B\setminus S$} ([yshift=-2mm]y3.south west -| y5.south east) ;
  \foreach \name in {Peast,Pwest,Qeast,Qwest} {
    \node[draw=none,minimum width=1pt,minimum height=1pt,text height=1pt,inner xsep=1pt]
       at (\name.center) {$\ldots$};
  }
  \begin{scope}[text depth=2pt]
  \coordinate (mapXCoordinate) at ([xshift=10mm]Qeast.center);
  \node[outer sep=1pt,anchor=center] at (Peast.center -| mapXCoordinate) (X) {$C$};
  \node[outer sep=1pt,anchor=center] at (Qeast.center -| mapXCoordinate) (PY) {$\Potf C$};
  \end{scope}
  \draw[commutative diagrams/.cd, every arrow, every label] (X) edge node {$c$} (PY);
  \begin{scope}[
    bend angle=10,
    space/.style={
      draw=white,
      line width=4pt,
    },
    edge/.style={
      draw=black,
      -{>[length=2mm,width=2mm]},
      preaction={draw,-,line width=1mm,white},
      every node/.append style={
        fill=white,
        shape=rectangle,
        inner sep=1pt,
        anchor=base,
        pos=0.55,
      },
    },
    ]
    \draw[edge,bend right] (x1) to (y1);
    \draw[edge,bend left] (x1) to (y2);
    \draw[edge,bend right] (x2) to (y2);
    \draw[edge,bend left] (x2) to (y3);
    \draw[edge,bend left] (x2) to (y4);
    \draw[edge,bend right] (x3) to (y3);
    \draw[edge] (x3) to (y4);
    \draw[edge,bend left] (x3) to (y5);
  \end{scope}
  \begin{scope}[scissor lines
    ]
    \foreach \leftnode/\rightnode/\northlen/\southlen in
    {x1/x2/4mm/4mm,x2/x3/4mm/4mm,y2/y3/4mm/4mm} {
      \coordinate (northend) at ($ (\leftnode) !.5! (\rightnode) + (0,\northlen)$);
      \coordinate (southend) at ($ (\leftnode) !.5! (\rightnode) - (0,\southlen)$);
      \node[anchor=east,rotate=-90,scissors] at ([xshift=-0.25pt]northend) {\scissors};
      \draw[gapstyle] (northend) -- (southend);
      \draw[linestyle] (northend) -- (southend);
    }
  \end{scope}
\end{tikzpicture}
  \caption{Splitting of a block $B$ into smaller blocks results in further
    refinement of the block $\{x,y,z\}$}
  \label{fig:parttree}
\end{figure}
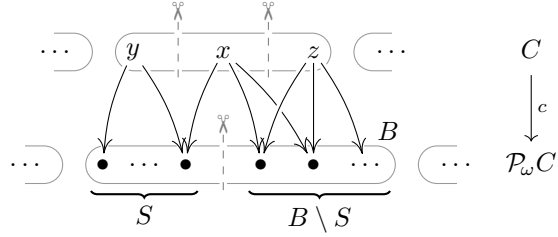

In this process of propagating inequivalences, suppose that the partition
refinement has already computed a block of states $B\subseteq C$ in its
partition and that states in $S\subseteq B$ have different behaviour from those
in $B\setminus S$ (as illustrated in \autoref{fig:parttree}). From this
information, the algorithm infers whether states $x,y\in C$ that are in the same
block and have successors in $B$ exhibit different behaviour and thus have to be
separated. Let us explain on two concrete instances how this inference is
achieved.
\begin{example}\label{E:split}
  \begin{enumerate}
  \item\label{E:split:1} We mentioned in \autoref{exManyFunctors}\ref{exManyFunctors:Pow} that
    for $F=\Powf$, a coalgebra is a finitely branching transition system. A
    partition on the state space represents a bisimulation\twnote{} if
    it has the following property: 

    \medskip\noindent\hspace{1cm}
    \begin{minipage}{15cm}
      For every pair of states $x,y$ in the same block and every block $B$:\\
      $x$ has a successor in $B$ iff $y$ has a successor in $B$.
    \end{minipage}

    \medskip\noindent
    This means that when we split the block $B$ into the two
    blocks $S$ and $B\setminus S$, then $x$ and $y$ can stay in the same block
    provided that (a)~$x$ has a successor in $S$ iff $y$ has one and (b)~$x$ has
    a successor in $B\setminus S$ iff $y$ has one. Equivalently, we can express
    these conditions by the equality
    \begin{equation}
      \label{PowChiBS}
      \Powf\fpair{\chi_S,\chi_{B\setminus S}}(c(x)) =
      \Powf\fpair{\chi_S,\chi_{B\setminus S}}(c(y)),
    \end{equation}
    where $\chi_S, \chi_{B \setminus S}\colon C\to 2$ are the usual characteristic
    functions of the subsets $S, B\setminus S\subseteq C$, respectively. Indeed, the function
    \[
      C \xrightarrow{c} \Powf C
      \xrightarrow{\Powf\fpair{\chi_S,\chi_{B\setminus S}}} \Powf (2\times 2)
    \]
    maps a state $x\in C$ to the set
    \[
      P \coloneqq \Powf\fpair{\chi_S,\chi_{B\setminus S}}(c(x)) \subseteq 2
      \times 2
    \]
    encoding whether $x$ has successors in $S$ resp.\ in $B\setminus S$. In
    fact, we see that $x$ has a successor in $S$ iff $(1,0) \in P$ and $x$ has a
    successor in $B\setminus S$ iff $(0,1) \in P$. Moreover, $x$ has a successor
    in $C \setminus B$ iff $(0,0) \in P$.  Since $S$ and $B\setminus S$ are
    disjoint, we have $(1,1)\not\in P$. Similar observations apply to~$y$. Thus,
    in order to maintain the desired property, we need to separate $x$ and $y$
    iff~\eqref{PowChiBS} holds.\footnote{Note that since the above property
      holds before splitting $B$, we have that $(0,0)$ is either contained in
      both sides of~\eqref{PowChiBS} or in neither of them.}
  \item In the example of Markov chains, i.e.~$F= \Dist$, we can make a similar
    observation. Here, $x,y\in C$ can stay in the same block if the weights of
    all transitions from $x$ to states in $S$ sum up to the same value as the
    weights of all transitions from $y$ to states in $S$ and similarly for
    $B\setminus S$; that is if
    \[
      \sum_{x'\in S} c(x)(x') = \sum_{y' \in S} c(y)(y')
      \qquad\text{and}\qquad
      \sum_{x'\in B\setminus S} c(x)(x') =\sum_{y'\in
        B\setminus S} c(y)(y').
    \]
    Analogously as in item~\ref{E:split:1}, we can equivalently express this by
    stating that $x$ and $y$ can stay in the same block iff the following
    equation holds
    \begin{equation}
      \Dist\fpair{\chi_S,\chi_{B\setminus S}}(c(x))
      = \Dist\fpair{\chi_S,\chi_{B\setminus S}}(c(y)).
      \label{DistChiBS}
    \end{equation}
    The map
    \[
      C \xrightarrow{c} FC
      \xrightarrow{\Dist\fpair{\chi_S,\chi_{B\setminus S}}} \Dist(2\times
      2)\subseteq \R^{(2\times 2)}
    \]
    sends every state $x\in C$ to the function
    $t = \Dist\fpair{\chi_S,\chi_{B\setminus S}}(c(x))\colon 2 \times 2 \to \R$ where
    \begin{itemize}
    \item $t(1,0)$ is the accumulated weight of all transitions from $x$ to
      states in $S$, 
    \item $t(0,1)$ is the accumulated weight of all transitions from $x$ to $B\setminus S$,
      
    \item $t(0,0)$ is the accumulated weight of all transitions from $x$ to
      states outside of $B$,
      
    \item $t(1,1) = 0$, since there is no element in $S \cap (B\setminus S$).
    \end{itemize}
    Two states $x,y$ that are in the same block before splitting $B$ into $S$
    and $B\setminus S$ have the same accumulated weight of transitions to $B$
    and also to $C\setminus B$, so $x$ and $y$ can stay in the same block
    iff~\eqref{DistChiBS} holds.
  \end{enumerate}
\end{example}
\noindent It is now immediate how~\eqref{PowChiBS} and~\eqref{DistChiBS} are
generalized to an arbitrary functor $F$:  States $x$ and $y$ stay in the same
block in a refinement step iff
\begin{equation}
  F\fpair{\chi_S,\chi_{B\setminus S}}(c(x))
  = F\fpair{\chi_S,\chi_{B\setminus S}}(c(y)).
  \label{FChiBS}
\end{equation}
As we have seen in~\autoref{E:split}, $(1,1)$ is never in the image of
$\fpair{\chi_S,\chi_{B\setminus S}}\colon C\to 2\times 2$, because $S$ and $B
\setminus S$ are disjoint. Hence we can restrict
its codomain to $3 = \{0,1,2\}$ by defining the map
\begin{equation}
  \chi_S^B\colon C\to 3
  \quad
  \text{for }S\subseteq B
  \quad
  \text{by }
  \qquad
  \chi_S^B(x) = \begin{cases}
    2 &\text{if }x\in S,\\
    1 & \text{if }x \in B\setminus S,\\
    0 & \text{if }x \in C\setminus B.
  \end{cases}
\end{equation}
Hence, the criterion in \eqref{FChiBS} can be simplified:
the states $x$ and $y$ stay in the same block in the refinement step iff
\begin{equation}
F\chi_S^B(c(x)) = F\chi_S^B(c(y)).\label{FChi3}
\end{equation}
We conclude that the generic partition refinement algorithm needs to compute
the value $F\chi_S^B(c(x)) \in F3$ for every state $x$. Whenever states $x$ and
$y$ are sent to different values by $F\chi_S^B\cdot c$, we know that they are
behaviourally inequivalent and need to be moved to separate blocks.

\subsection{Refinement Interfaces}
\label{sec:refintface}
Computing the values $F\chi_S^B(c(x))$ for states $x$ of interest is the
task of the refinement interface. We start with its formal definition and then
provide an informal explanation of its ingredients. 
\removebrackets
\begin{definition}[\cite{concurSpecialIssue}]\label{D:refinement-interface}
  Given an encoding $(A,\flat)$ of the set functor $F$,
  a \emph{refinement interface} for $F$ consists of 
  a set $W$ of \emph{weights} and functions
  \[
    \op{init}\colon F1×\Bag A\to W \qquad\text{and}\qquad
    \op{update}\colon \Bag A × W \to W× F3× W
  \]
  satisfying the coherence condition that there exists a family of \emph{weight
    maps} $w\colon \mathcal{P} X \to (FX \to W)$ (not themselves part of the
  interface to be implemented!), one for each set~$X$, such that
  \begin{align*}
    \op{init}\big(F!(t), \mbraces{a \mid (a,x) \in\flat (t)} \big) & = w(X)(t)\\
    \op{update}\big(\mbraces{a\mid (a,x)\in \flat(t), x\in S}, w(B)(t)\big) & =
    (w(S)(t),F\chi_S^B(t),w(B\!\setminus\! S)(t))
  \end{align*}
  for $t\in FX$ and $S\subseteq B\subseteq X$.  Here, the notation
  $\mbraces{a \mid \cdots}$ in the arguments of $\op{init}$ and $\op{update}$
  indicates \emph{multiset} comprehension, i.e.\ multiple occurrences of a label
  $a\in A$ in $\flat(t)$ result in multiple occurrences of $a$ in the bag
  $\mbraces{a \mid (a,x)\in \flat(t),\cdots}$.
\end{definition}
\noindent The intuition behind the refinement interface can be understood best
if we consider an $F$-coalgebra $c\colon C\to FC$, put $X := C$, fix a state
$x\in C$, and instantiate $t := c(x)$. As one can see from the types of
$\op{init}$ and $\op{update}$, the refinement interface is designed in such a
way that it computes values of the functor specific type~$W$ of weights that the
calling algorithm saves for subsequent calls to $\op{update}$. For every block
$B \subseteq C$, the value $w(B)(c(x)) \in W$ is the accumulated weight of edges
from $x$ to (states in) $B$ in the coalgebra $(C,c)$. In principle, values
in~$W$ can contain whatever information about the set of edges from $x$ to $B$
helps the implementation of $\op{update}$ to compute the result value of type
$F3$, which is what the caller is actually interested in. However, while more
information contained in the second argument of $\op{update}$ helps this
function to achieve this task more efficiently, both $\op{init}$ and
$\op{update}$ also have to compute values of~$W$, which may require more effort
if these values carry too much information.

This trade-off is guided by the two equational axioms for $\op{init}$ and
$\op{update}$, which represent a contract that their implementation for a
particular functor has to fulfil. The first axiom assumes that $\op{init}$
receives in its first argument the output behaviour of $x$ -- e.g.~whether $x$
is final or non-final (in the case of automata), or for $F=\Powf$ whether $x$
has any successors or is a deadlock state -- and in its second argument the bag
of labels of all outgoing edges of $x$ in the graph representation of
$(C,c)$. The axiom then requires $\op{init}$ to return the accumulated weight
$w(C)(c(x)) \in W$ of all edges from $x$ to the whole state set $C$, which
is the only block in the initial partition of $C$. This corresponds to the use
of $\op{init}$ in the actual algorithm, namely to initialize the
weight value in~$W$ that is later passed to $\op{update}$.

\begin{figure}
\centering%
  \begin{tikzpicture}
  \node (x) at (0.65,1.5) {$x$};
  \begin{scope}[
    every node/.append style={
      shape=circle,
      draw=none,
      fill=black,
      inner sep=0pt,
      outer sep=2pt,
      minimum width=1pt,
      minimum height=4pt,
      anchor=center,
    }
    ]
    \node (y1) at (0,0) {};
    \node (y2) at (1.3,0) {};
    \node (y3) at (2,0) {};
  \end{scope}
  \node[anchor=west] (y4) at (y3.east) {$\ldots$};
  \draw (y1) edge[draw=none] node {$\ldots$} (y2);
  \begin{scope}[
    every node/.append style={
      draw=black!30,
      rounded corners=2mm,
    }
    ]
    \node[fit=(y1) (y2),draw=none,inner sep=0pt] (S) {};
    \node[fit=(y3) (y4),draw=none,inner sep=0pt] (CminusS) {};
    \node[fit=(S) (CminusS)] (C) {};
  \end{scope}
  \begin{scope}[scissor lines]
    \coordinate (northend) at ($ (S.north east) !.5! (CminusS.north west) + (0,2mm)$);
    \coordinate (southend) at ($ (S.south east) !.5! (CminusS.south west) - (0,3mm)$);
    \draw[gapstyle] (northend) -- (southend);
    \draw[linestyle] (northend) -- (southend);
    \node[anchor=east,rotate=90,scissors]
    at ([xshift=0.2pt]southend) 
    {~\scissors};
  \end{scope}
  \begin{scope}[
    bend angle=10,
    space/.style={
      draw=white,
      line width=4pt,
    },
    edge/.style={
      draw=black,
      ->,
      every node/.append style={
        fill=white,
        shape=rectangle,
        inner sep=1pt,
        anchor=base,
        pos=0.55,
      },
    },
    ]
    \draw[space,bend right] (x) edge (y1);
    \draw[edge,bend right] (x) edge node[alias=s1] {$a_1$} (y1);
    \draw[space,bend left] (x) edge (y2);
    \draw[edge,bend left] (x) edge node[alias=sk] {$a_k$} (y2);
    \draw[space,bend left] (x) edge (y3);
    \draw[edge,bend left=20] (x) edge node[alias=c] {$b$} (y3);
    \coordinate (dotends) at ([xshift=8mm]c);
    \draw[edge,-,bend left] (x) edge (dotends);
    \draw[edge,-,shorten >=3mm,dotted] (dotends) edge +(4mm,-4mm);
  \end{scope}
  \draw (s1) edge[draw=none] node {$\cdots$} (sk);
  \draw (c) edge[draw=none] node {$\cdots$} (dotends);
  \begin{scope}[every node/.append style={anchor=north,yshift=-1mm,font=\footnotesize}]
    \draw[decorate,decoration={brace,mirror,amplitude=4pt}]
    ([yshift=-6mm]C.south west) -- node {$B$}
    ([yshift=-6mm]C.south east);
    \draw[decorate,decoration={brace,mirror,amplitude=4pt}]
    ([yshift=-2mm]S.south west) -- node {$S$}
    ([yshift=-2mm]S.south east);
    \draw[decorate,decoration={brace,mirror,amplitude=4pt}]
    ([yshift=-2mm]CminusS.south west) -- node {$B\setminus S$}
    ([yshift=-2mm]CminusS.south east);
  \end{scope}
\end{tikzpicture}
\caption{Computation of the value of type $F3$ for $x$}
\noshowkeys
\label{fig:splitblock}
\end{figure}
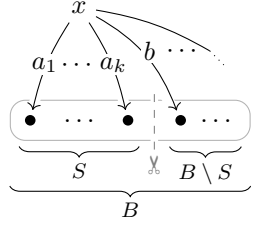
The operation \op{update} is called whenever the algorithm derives that a block
$B$ in the partition of $C$ contains behaviourally inequivalent states,
i.e.~when the block $B$ has been split into smaller blocks, including, say,
$S\subseteq B$, like in~\autoref{fig:splitblock}. This means that every state
$x'\in S$ is behaviourally inequivalent to every state $x''\in B\setminus
S$. The first parameter of $\op{update}$ is then the bag of labels of all edges
from $x$ to $S$, and the second parameter is the weight
$w(B)(c(x)) \in W$ of all edges from $x$ to $B$, which the caller has saved from
return values of previous calls to $\op{init}$ and $\op{update}$,
respectively. From only this information (in particular, $\op{update}$ does not
know $x$, $S$ or $B$ explicitly), $\op{update}$ computes the triple consisting of the
weight $w(S)(c(x))$ of edges from $x$ to $S$, the result of
$F\chi_S^B \cdot c(x)$, and the weight $w(B\setminus S)(c(x))$ of edges from $x$
to $B\setminus S$. The two weights are stored by the caller in order
to supply them to $\op{update}$ in the next refinement step, and
$F\chi_S^B(c(x))$ is used to split the block containing $x$ according
to~\eqref{FChi3}.

For a given functor $F$, it is usually easy to derive the operations $\op{init}$
and $\op{update}$ once appropriate choices of the set $W$ of weights and
the weight maps~$w$ are made. We now recall refinement interfaces for some
functors of interest; see~\cite{concurSpecialIssue} for the verification of the
axioms. 
\removebrackets
\begin{example}[\cite{concurSpecialIssue}]\label{E:refint}
  \begin{enumerate}
  \item\label{E:refint:1} For $F=\Powf$, we put $W= 2 \times \N$. For further use
    in the definition of the weight maps and the refinement interface routines,
    we define an auxiliary function
    \[
      (-) \geZero 0\colon \N \to 2\qquad\text{by }(n\geZero 0) =
      \min(n,1) =
      \begin{cases}
        1 &  \text{if $n > 0$} \\
        0 &  \text{otherwise.}
      \end{cases}
    \]
    The weight maps are defined by
    \[
      w(B)(t) = (|t\setminus B| \geZero 0,|t\cap B|).
    \]
    This records whether there is an edge to $X\setminus B$ and counts the
    numbers of edges to states in the block $B$. This number is crucial to be
    able to implement the $\op{update}$ routine which needs to return
    $\Powf\chi_S^B(c(x)) \in \Powf 3$ for a coalgebra $c\colon C\to FC$ and a
    state $x \in C$. Hence, $\op{update}$ needs to determine whether $x$ has an
    edge to $B\setminus S$ -- i.e.~whether $1\in \Powf\chi_S^B(c(x))$ -- given
    only the number~$k$ of edges from $x$ to $S$ and the weight
    $w(B)(c(x))$ (cf.~\autoref{fig:splitblock}). This task can only be
    accomplished if $w(B)(c(x))$ holds the number $n$ of edges from $x$ to $B$:
    with this information, there are edges to $B\setminus S$ iff $n - k >
    0$. Recall from \autoref{E:enc}\ref{E:enc:1} that the set of labels is $A =
    1$. Hence, every bag of labels is just a natural number because $\Bag
    A = \Bag 1 \cong \N$. Consequently, the interface routines
    \[
      \op{init}\colon \Potf 1\times \N \to 2\times \N
      \qquad\text{and}\qquad
      \op{update}\colon \N \times (2\times \N) \to (2\times \N)\times \Potf 3
      \times (2\times \N) 
    \]
    are implemented as follows:
    \[
      \op{init}(z,n) = (0,n) \qquad\text{and}\qquad
      \begin{array}[t]{r@{\,}l@{}l}
        \op{update}(n_S,(r,n_C)) &=
        \big(&(r\vee (n_{C\setminus S}\geZero 0), n_S), \\[5pt]
        && (r, n_{C\setminus S}\geZero 0, n_S\geZero 0), \\[5pt]
        && (r\vee (n_{S}\geZero 0), n_{C\setminus S})\big),
      \end{array}
    \]
    where $n_{C\setminus S} := \max(n_C - n_S, 0)$,
    $\vee\colon 2\times 2\to 2$ is disjunction, and the middle return
    value in $\Potf 3$ is written as a bit vector of length three. The axioms in
    \autoref{D:refinement-interface} ensure that
    $n_S, n_C, n_{C\setminus S}$ can be understood as the numbers of
    edges to $S$, $C$, and $C\setminus S$, respectively.
    
    The technique of remembering the number of edges from every state to
    every block is already crucial in the classical algorithm by Paige and
    Tarjan~\cite{PaigeTarjan87}. In \autoref{sec:noncan}, we will generalize
    this trick from $\Powf$ to arbitrary monoid-valued functors.

  \item\label{E:refint:2} For the group-valued functor $G^{(-)}$ for the abelian
    group $G$, we put $W=G^{(2)} \cong G \times G$, and the weight map is
    defined by
    \[
      w(B) = G^{(\chi_B)}\colon G^{(X)}\to G^{(2)}
      \qquad\text{for every subset $B\subseteq X$}.
    \]
    The refinement interface routines are implemented as follows:
    \[
    \begin{array}{l@{\quad}l}
      \op{init}\colon G^{(1)}\times \Bagf(G)
      \to G^{(2)}
      &
      \op{update} \colon \Bagf(G)\times G^{(2)}
      \to G^{(2)}
        \times G^{(3)}
        \times G^{(2)}
        \\
        \op{init}\underbrace{(g,~}_{\mathclap{G^{(!)}(c(x))}} \ell) =
        \underbrace{(0, g)}_{\mathclap{G^{(\chi_C)}(c(x))}}
        &
        \op{update}(\ell,\underbrace{(r,b)}_{\mathclap{G^{(\chi_{B})}(c(x))}}) = \big(
        \underbrace{(r + (b - \csum\ell),\csum\ell)}_{G^{(\chi_S)}(c(x))},~
        \underbrace{(r,b-\csum\ell,\csum\ell)}_{G^{(\chi^B_S)}(c(x))},~
        \underbrace{(r+\csum\ell,b-\csum\ell)}_{G^{(\chi_{B\setminus S})}(c(x))}\big),
      \end{array}
    \]
    where $\csum\colon \Bagf G\to G$ is the summation map of \autoref{N:csum}. 
    The terms under the braces only serve as the intuition when
    considering a coalgebra $c\colon C\to FC$ and a subblock $S\subseteq B$ of a
    block $B\subseteq C$. The function $\op{init}$ is called with the bag $\ell$
    of labels of outgoing transitions of some element $x\in C$ and the sum $g$
    of all these labels. Since $w(C)(c(x)) = (0, G^{(!)}(c(x)))$ for the whole
    set $C$, the $\op{init}$ function simply returns $(0,g)$.

    In the above $\op{update}$ routine we have used that $G$ has inverses.  In
    \autoref{sec:monoidvalued}, we will define a refinement interface for the
    monoid-valued functor $M^{(-)}$ for monoids $M$ that are not groups, using
    additional data structures to make up for the lack of inverses in $M$.

  \item As special instances of the previous item we obtain refinement
    interfaces for the functors $\R^{(-)}$ and $\Z^{(-)}$.

  \item For a polynomial functor $F = \Sigma$, we put $W = \Sigma 2$ and 
    \[
      w(B) = \Sigma\chi_B\colon \Sigma X\to \Sigma 2
      \qquad\text{for every subset $B\subseteq X$}.
    \]
    For a coalgebra $c\colon C\to \Sigma C$ and $B\subseteq C$, this means that
    $w(B)(c(x))$ consists of an operation symbol $\sigma\in \Sigma$ and a bit
    vector of length $\op{ar}(\sigma)$. The bit vector specifies which successor
    states of $x$ are in the set $B$. Recall from \autoref{E:enc}\ref{E:enc:3}
    that the encoding of $\Sigma$ uses as labels $A=\N$. With $1 =\{*\}$, the
    $\op{init}$ routine is given by
    \[
    \begin{array}{l@{\quad}l}
      \op{init}\colon \Sigma 1\times \Bagf \N \to \Sigma 2
      \\
      \op{init}(\sigma(*,\ldots,*), f) = \sigma(1,\ldots,1)
    \end{array}
    \]
    For $\op{update}$, we first define the map that computes the middle
    component of the result:
    \begin{align*}
        \op{update}'&\colon \Bagf\N \times \Sigma 2 \to \Sigma 3
      \\
      \op{update}'&(I, \sigma(b_1,\ldots,b_n))
      = \sigma(b_1 + (1\in I),\ldots,b_i + (i\in I),\ldots,b_n + (n\in I))
    \end{align*}
    Here $b_i + (i\in I)$ means $b_i + 1$ if $i\in I$ and $b_i$ otherwise. For
    subsets $S\subseteq B\subseteq X$ and a state $y\in X$, this sum computes
    $\chi_S^B(y)$ from $\chi_S(y)$ (given by $i\in I$) and $\chi_B(y)$ (given by
    the bit $b_i$). From the value in $\Sigma 3$ thus computed, we can derive
    the other components of the result of $\op{update}$. For $k\in 3$, let
    \((k=)\colon 3\to 2 \) be the map that compares its parameter with $k$, i.e.
    \[
      (k=)(k') =\begin{cases}
        1 &\text{if }k = k' \\
        0 &\text{otherwise.}
      \end{cases}
    \]
    The $\op{update}$ routine now calls $\op{update}'$ and derives the values of type
    $\Sigma 2$:
    \begin{align*}
      &\op{update} =  \big(
      \Bagf\N \times \Sigma 2 \xrightarrow{\op{update}'}
      \Sigma 3
      \xrightarrow{\fpair{\Sigma (2=),\id_{\Sigma 3}, \Sigma (1=)}}
      \Sigma 2\times \Sigma 3
        \times \Sigma 2\big)
          \\
        &\op{update}(I, t) = \fpair{\Sigma (2=),\id_{\Sigma
        3}, \Sigma (1=)} (\op{update}'(I, t))
    \end{align*}

    \takeout{}%
    \end{enumerate}
\end{example}\medskip

\noindent In order to ensure that iteratively splitting blocks using $F\chi_S^B$
in each iteration correctly computes the minimization of the given coalgebra, we
need another property of the functor $F$:

\removebrackets
\begin{definition}[{\cite[Def.~5.1]{concurSpecialIssue}}]\label{D:zippable}
  A functor~$F$ is \emph{zippable} if for all sets $X$ and $Y$ the map
  \[
    {\fpair{F(X+\,!\,),F(\,!+Y)}}\colon {F(X+Y)}\longrightarrow F(X+1)\times
    F(1+Y)
  \]
  is injective.
\end{definition}
All functors mentioned in \autoref{exManyFunctors} are zippable. Moreover,
zippable functors are closed under products, coproducts (both formed
point-wise), and subfunctors~\cite[Lemma~5.4]{concurSpecialIssue}. However, they
are not closed under functor composition: for example, $\Powf\Powf$ fails to be
zippable~\cite[Example~5.9]{concurSpecialIssue}. We deal with this problem by a
reduction discussed in \autoref{sec:des} below.

For zippable set functors $F$ with a refinement interface, we have presented a
partition refinement
algorithm~\cite[Algorithm~7.7]{concurSpecialIssue}.\smnote{}  \takeout{}%
The main correctness result states that for a zippable functor equipped with a
refinement interface, our algorithm correctly minimizes the given coalgebra. The
low time complexity of our algorithm hinges on the time complexity of the
implementations of $\op{init}$ and $\op{update}$.  We have shown
previously~\cite[Theorem~7.16]{concurSpecialIssue} that if both $\op{init}$ and
$\op{update}$ run in linear time in its input of type $\Bag A$ \emph{alone}
(i.e.~independently of the input coalgebra size), then our generic partition
refinement algorithm runs in time $\CO((m + n)\cdot \log n)$ on coalgebras with
$n$ states and $m$ edges (cf.~\autoref{D:enc}). In order to cover instances
where the run time of $\op{init}$ and $\op{update}$ depends also on the input
coalgebra, we make this dependence formally explicit:
\begin{definition}\label{D:runTimeFactor} The refinement interface
  for a functor $F$ \emph{has run time factor} $\rifactor(c)$ if for every map
  $c\colon X\to FY$ (in particular for every coalgebra $c\colon C\to FC$),
  \begin{enumerate}
  \item the following calls to $\op{init}$ and $\op{update}$ run in
    time $\CO(|\ell|\cdot \rifactor(c))$ for $x\in X$, $t=c(x)$, and
    $S\subseteq B\subseteq X$:
  \[
    \begin{array}[b]{@{\,}ll}
    \op{init}\big(F!(t), \ell \big)
      &\text{where }\ell = \mbraces{a \mid (a,x) \in\flat (t)}
    \\[1mm]
    \op{update}\big(\ell, w(B)(t)\big)
      &\text{where }\ell = \mbraces{a\mid (a,x)\in \flat(t), x\in S}.
    \end{array}
  \]
    \takeout{}%

\item\label{D:runTimeFactor:2} equality of values in\twnote{}
  $\{F\chi_S^B(c(q))\mid q\in X, S\subseteq B\subseteq X\} \subseteq F3$ can be
  checked in time $\CO(\rifactor(c))$.
  \end{enumerate}
  If $\rifactor(c)$ only depends on the number of states $n$ and number of
  transitions $m$ in $c$, then we write $\rifactor(n,m)$ in lieu of
  $\rifactor(c)$. Note that the above calls to $\op{init}$ and $\op{update}$
  are precisely those from the axioms of the refinement interface in
  \autoref{D:refinement-interface}.
\end{definition}
\begin{example}\label{expl:run-time-factor}
  The powerset functor $\Powf$ and the group-valued functors $G^{(-)}$ have run
  time factor ${p(c) = 1}$ \cite[Examples~6.11]{concurSpecialIssue}. For a
  signature $\Sigma$ where the arity of operation symbols is bounded by
  $b \in \N$, the refinement interface has run time factor
  $\rifactor(c) = 1$. Otherwise we define the \emph{rank} of a finite
  $\Sigma$-coalgebra $c\colon C\to \Sigma C$ to be the maximal arity that
  appears in $c$:
  \[
    \op{rank}(c) = \max\{\op{ar}(\sigma) \mid \text{$\sigma(x_1, \ldots, x_n) = c(x)$ for
      some $x_1, \ldots, x_n$ and $x$ in $C$}\}.
  \]
  It is easy to see that the refinement interface for $\Sigma$
  (see~\cite[Examples~6.11.3]{concurSpecialIssue}) has run time factor $p(c)
  = \op{rank}(c)$.
\end{example}
Since we can now describe the run time of the refinement-interface in a more
fine-grained way, we can lift this to the run time analysis of the overall
partition refinement algorithm.
\begin{theorem}\label{T:comp}
  Let $F$ be a zippable functor equipped with a refinement interface with 
  run time factor $\rifactor(c)$. Then the algorithm computes the behavioural
  equivalence relation on an input $F$-coalgebra $c\colon C\to FC$ with $n$
  states and $m$ transitions in $\CO((m+n)\cdot \log n \cdot \rifactor(c))$
  steps.
\end{theorem}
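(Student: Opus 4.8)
The plan is to reduce to the complexity analysis already carried out for the special case $\rifactor(c)=1$ in~\cite[Theorem~7.16]{concurSpecialIssue}. Correctness is inherited verbatim: the algorithm, the zippability hypothesis, and the axioms of the refinement interface (\autoref{D:refinement-interface}) are exactly as before, and nothing in the correctness argument refers to the \emph{cost} of $\op{init}$, $\op{update}$, or of the equality tests on $F3$; only the bookkeeping of run time changes. I would therefore recall that the previous analysis bounds the total run time by summing three contributions: (i)~the cost of all calls to $\op{init}$, which is invoked once per state; (ii)~the cost of all calls to $\op{update}$; and (iii)~the auxiliary bookkeeping, in particular the grouping of states according to their values $F\chi_S^B(c(x))\in F3$ together with the maintenance of the partition and the worklist of splitters.

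The key step is to isolate exactly where the earlier proof used the assumption that $\op{init}$ and $\op{update}$ run in time linear in $|\ell|$ \emph{with unit cost per element}, and that equality in $F3$ is decidable in constant time, and then to observe that the underlying combinatorial counts are independent of the run time factor. Indeed, $\op{init}$ is called once for each state $x\in C$ with input bag of size $|\flat(c(x))|$, whose sizes sum to $m$ (\autoref{D:enc}); and the Hopcroft ``process the smaller half'' strategy that drives the refinement guarantees that the total size $\sum|\ell|$ of the $\Bag A$-arguments over \emph{all} calls to $\op{update}$ during the whole run lies in $\CO(m\cdot\log n)$. These aggregate size bounds are purely combinatorial and do not mention $\rifactor(c)$. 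Now every such call is a call on the fixed input coalgebra, of the form $\op{init}(F!(c(x)),\ell)$ or $\op{update}(\ell,w(B)(c(x)))$ with $t=c(x)$ and $S\subseteq B\subseteq C$, which are precisely the calls covered by \autoref{D:runTimeFactor}. Hence each of them runs in time $\CO(|\ell|\cdot\rifactor(c))$, so replacing the former unit cost per element by $\rifactor(c)$ multiplies contribution~(i) to $\CO((m+n)\cdot\rifactor(c))$ and contribution~(ii) to $\CO(m\cdot\log n\cdot\rifactor(c))$.

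It remains to account for the bookkeeping~(iii). The grouping of states by their $F3$-values proceeds by comparing these values, and the number of such equality tests performed over the whole run is bounded in the previous analysis by $\CO((m+n)\cdot\log n)$; by the second clause of \autoref{D:runTimeFactor}, each test on a value in $\{F\chi_S^B(c(q))\mid q\in C,\ S\subseteq B\subseteq C\}\subseteq F3$ now costs $\CO(\rifactor(c))$, so this contribution becomes $\CO((m+n)\cdot\log n\cdot\rifactor(c))$. All remaining data-structure operations (maintaining the refinable partition and the worklist over the node/edge representation of \autoref{D:enc}) are independent of the interface and already bounded by $\CO((m+n)\cdot\log n)$, hence absorbed since $\rifactor(c)\ge 1$. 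Summing the three contributions yields the claimed bound $\CO((m+n)\cdot\log n\cdot\rifactor(c))$. The one point demanding care---and the main obstacle---is to confirm that the earlier unit-cost hypotheses enter the previous proof \emph{only} at these three places, so that the entire analysis carries over unchanged with each atomic cost uniformly scaled by the single factor $\rifactor(c)$; this is exactly what makes \autoref{D:runTimeFactor}, which bounds every relevant call by one coalgebra-dependent factor, the right abstraction.
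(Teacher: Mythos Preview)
Your proposal is correct and follows essentially the same reduction as the paper: both invoke the $\rifactor(c)=1$ analysis from~\cite{concurSpecialIssue} and observe that the combinatorial counts there are independent of the cost per operation, so scaling each relevant atomic step by $\rifactor(c)$ yields the bound. The paper packages this slightly more abstractly by declaring every former constant-time step a ``macro operation'' of cost $\CO(\rifactor(c))$ and then quoting the old $\CO((m+n)\cdot\log n)$ bound on the number of macros, whereas you explicitly split into the three contributions (calls to $\op{init}$, calls to $\op{update}$, and $F3$-comparisons plus residual bookkeeping); but this is a difference in presentation, not in substance.
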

\begin{proof}
  The case where $\rifactor(c) = 1$ is proved in
  \cite[Theorem~6.22]{concurSpecialIssue}. We reduce the general case to this
  one as follows. Observe that the previous complexity analysis counts the
  number of basic operation performed by the algorithm (e.g.~comparisons of
  values of type $F3$) including those performed by $\op{init}$ and
  $\op{update}$. In that analysis $\op{init}$ and $\op{update}$ were assumed to
  have run time $\CO(|\ell|)$, and the total number of basic operations of the
  algorithm is then $\CO((m+n) \cdot \log n)$.

For the reduction, we consider `macro' operations that run in
$\CO(\rifactor(c))$ time. In particular, every constant-time operation that is
performed in the algorithm can be viewed as a macro
performing precisely this single operation. Then we can view the generalized run time
assumptions on the refinement-interface as follows:
\begin{enumerate}
\item all calls to $\op{init}$ and $\op{update}$ on $\ell\in \Bagf A$ perform
  $\CO(|\ell|)$ macro operations (each of which takes $\CO(\rifactor(c))$ time).
\item all values of type $F3$ that arise during the execution of the algorithm
  are in the set in \autoref{D:runTimeFactor}\ref{D:runTimeFactor:2}. Hence,
  every comparison of such values is done in one macro operation, which takes
  $\CO(\rifactor(c))$ steps.
\end{enumerate}
By the previous complexity analysis for $\rifactor(c) =1$, the
partition refinement for a coalgebra with $n$ states and $m$ edges performs
{$\CO((m+n)\cdot \log n)$} macro calls. Thus, the overall run time lies in
\[
  \CO((m+n) \cdot \log n \cdot \rifactor(c))
\]
as desired.
\end{proof}
Obviously, for $\rifactor(c)\in\CO(1)$, we obtain the
previous complexity.
\begin{example} For a (possibly infinite) signature $\Sigma$, the coalgebraic
  partition refinement runs, by \autoref{T:comp} and
  \autoref{expl:run-time-factor}, in time $\CO(r\cdot (m+n)\cdot \log n)$ for an
  input coalgebra $c\colon C\to \Sigma C$ with rank~$r$,~$n$ states, and~$m$
  edges. Note that every state $x\in C$ has at most $r$ many outgoing edges in
  the graph representation. Hence, we have $m\le r\cdot n$ so that the time
  complexity may be simplified to $\CO(r^2\cdot n\cdot \log n)$.
\end{example}

\noindent In \autoref{sec:monoidvalued} we will discuss how \autoref{T:comp}
instantiates to the example that mainly motivated the above generalization of
the complexity analysis: weighted systems with weights from an unrestricted
commutative monoid.

\subsection{Combining Refinement Interfaces}\label{sec:des}%
In addition to supporting genericity via direct implementation of the
refinement interface for basic functors, our tool is \emph{modular} in
the sense that it automatically derives a refinement interface for
functors built from the basic ones according to the grammar
\eqref{termGrammar}. In other words, for such a combined functor the user does
not need to write a single line of new code. Moreover, when the user implements
a refinement interface for a new basic functor, this automatically extends the
effective grammar. For example, our tool can minimize systems of type
\[
FX= \Dist(\N \times \Powf X\times \Bag X).
\]
However, while all basic functors from which $F$ is formed are zippable (see
\autoref{D:zippable}), there is no guarantee that $F$ is so because zippable
functors are not closed under functor composition in general. In order to
circumvent this problem, a given $F$-coalgebra is transformed into one for the
functor
\[
  F' X = \Dist X +  (\N \times X \times X)  + \Powf X + \Bag X.
\]
\begin{figure}
  \centering
  \begin{tikzpicture}
    \node[unary] (Potf) {\Dist};
    \node[binary] (times) at (Potf.in) {\Sigma};
    \node[unary] (Bagf) at ([yshift=2mm]times.in1) {\Powf};
    \node[unary] (Dist) at ([yshift=-2mm]times.in2) {\Bag};
    \path[draw=black!50]
      (times.out) edge node[above] {$Y$} (Potf.in)
      (Bagf.out) edge node[above] {$Z_1$} (times.in1)
      (Dist.out) edge node[below] {$Z_2$} (times.in2)
      (Dist.in) edge node[above] {$X$} +(8mm,0)
      (Potf.out) edge node[above] {$X$} +(-8mm,0)
      (Bagf.in) edge node[above] {$X$} +(8mm,0);
  \end{tikzpicture}
  \caption{Visualization of $FX = \Dist (\Sigma(\Powf X, \Bag X))$ for $\Sigma (Z_1,Z_2) = \N\times Z_1\times Z_1$}
  \label{fig:composedFunctor}
\end{figure}%
This functor is obtained as the sum of all basic functors involved in $F$,
i.e.~of all the nodes in the visualization of the functor term $F$
(\autoref{fig:composedFunctor}). Then the components of the refinement
interfaces of the four functors involved, viz.\ $\Dist$, $\Sigma$, $\Powf$, and
$\Bag$, are combined by disjoint union $+$. The transformation of a finite
coalgebra $c\colon C \to FC$ into a finite $F'$-coalgebra introduces a set of
intermediate states for each edge in the visualization of the term $F$; we have
labelled the edges in \autoref{fig:composedFunctor} by these sets. The
construction starts with $X := C$ and constructs a finite $F'$-coalgebra on the
set $C' := X + Y + Z_1 + Z_2$ as follows. The set $Y$ contains an intermediate
state for every $\Dist$-edge out of a state $x \in X$, i.e.
\[
  Y = \{ y \mid c(x)(y) \neq 0\}
  \subseteq \N\times \Powf X\times \Bagf X.
\]
This also yields a map $c_X\colon X \to \Dist Y$. Furthermore, intermediate
states in~$Y$ have successors in $\N \times \Powf X \times \Bagf X$, and by a
similar definition as for $Y$, we obtain finite sets
\[
  Z_1\subseteq \Powf X
  \qquad
  Z_2\subseteq \Bagf X
\]
and a map $c_Y\colon Y \to \N \times Z_1 \times Z_2$. Finally, intermediate
states in $Z_1$ and $Z_2$ have successors in $\Powf X$ and $\Bag X$,
respectively, which yields (inclusion) maps $c_{Z_1}\colon Z_1 \to \Powf
X$ and $c_{Z_2}\colon Z_2\to \Bag X$. Putting these maps together we obtain a finite
$F'$-coalgebra
\[
  \underbrace{X + Y + Z_1 + Z_2}_{C'}
  \xrightarrow{c_X + c_Y + c_{Z_1} + c_{Z_2}}
  \Dist Y + \N \times Z_1 \times Z_2 + \Powf X + \Bag X
  \xrightarrow{\mathsf{can}}
  \underbrace{F'(X + Y + Z_1 + Z_2)}_{FC'},
\]
where $\mathsf{can}$ is the canonical inclusion map. The minimization of this
$F'$-coalgebra  yields the minimization of the given 
$F$-coalgebra $(C,c)$. The details of the construction in full generality and its
correctness are established in~\cite[Section~8]{concurSpecialIssue}.

\begin{remark}\label{R:runTimeFactor}
  We show in the cited work that we can derive a refinement interface for $F'$ from
  the refinement interfaces of the basic functors used. It is easy to see that
  the run time factor of the refinement interface of the above $F'$ is
  \[
    p(c) = p_{\Dist}(c_X) + p_{\Sigma}(c_Y) + p_{\Powf}(c_{Z_1}) + p_{\Bagf}(c_{Z_2})
  \]
  where the summands on the right-hand side are the run time factors of the
  respective refinement interfaces of the building blocks. Note that here we use
  the full generality of \autoref{D:runTimeFactor}, i.e.~that $c_X$, $c_Y$,
  $c_{Z_1}$ and~$c_{Z_2}$ are not required to be coalgebras but only maps of the
  shape $X\to HY$ for the relevant functor~$H$.  \takeout{}
\end{remark}

\subsubsection{Combination by product}
\label{sec:combprod}

\copar moreover implements a further optimization of this procedure
that leads to fewer intermediate states in the case of polynomial
functors~$\Sigma$: Instead of putting the refinement interface
of~$\Sigma$ side by side with those of its arguments, \copar includes
a systematic procedure to combine the refinement interfaces of the
arguments of~$\Sigma$ into a single refinement interface.  For
instance, starting from $FX=\Dist(\N \times \Powf X\times \Bag X)$ as
above, a given $F$-coalgebra is transformed into a coalgebra for
the functor
\[
F'' X = \Dist X + \N \times \Powf X \times \Bag X,
\]
effectively inducing intermediate states in~$Y$ as above but avoiding~$Z_1$
and~$Z_2$. In order to run the generic partition refinement algorithm for
$F''$-coalgebras, we need a refinement interface for $F''$. \copar derives a
refinement interface for $F''$ by first combining the refinement interfaces of
$\Powf$, $\Bagf$, and that of the constant functor $X\mapsto \N$, yielding a
refinement interface for $X\mapsto \N\times \Powf X \times \Bagf X$. Then, this
refinement interface is combined with that of $\Dist$, finally yielding one for
$F''$. The combination of refinement interfaces along coproducts of functors is
already described in \cite[Sec.~8.3]{concurSpecialIssue}; in the following, we
describe how refinement interfaces are combined along cartesian
product~$\times$.

\begin{construction}\label{C:constr}
  Suppose we are given a finite family of functors
  \[
    F_i\colon \Set\to \Set, \qquad i\in I,
  \]
  such that each $F_i$ has the encoding
  $\flat_i\colon F_i X\to \Bagf(A_i\times X)$ with label set $A_i$ and is
  equipped with the refinement interface
  \[
    \op{init}_i\colon F_i1×\Bag A_i\to W_i, \qquad
    \op{update}_i\colon \Bag A_i × W_i \to W_i × F_i3 × W_i,\qquad
    w_i\colon \mathcal{P} X \to (F_iX \to W_i).
  \]
  We construct an encoding and a refinement interface for
  $FX = \prod_{i \in I} F_iX$ as follows. The encoding of $F$ is given by taking
  the disjoint union of the label sets $A_i$ and the obvious component-wise
  definition of $\flat$:
  \begin{equation}
    A=\coprod_{i\in I} A_i
    \qquad
    \flat\colon \prod_{i\in I} F_i X\longrightarrow \Bagf(A\times X)
    \qquad
    \underbrace{\flat(t)}_{\mathclap{A\times X\to \N}}(\inj_i(a),x) = \underbrace{\flat_i(\pr_i(t))}_{(A_i\times X) \to \N}(a,x)
    \qquad\text{ for every set }X.
    \label{defProdEncoding}
  \end{equation}
  The set $W$ of weights consists of tuples of weights in $W_i$, and the weight
  function simply applies the map $w_i(C)\colon F_i X\to W_i$ in the $i$-th component:
  \[
    W = \prod_{i\in I} W_i
    \qquad
    w(C)\colon~~
    \prod_{i\in I} F_i X
    \xrightarrow{~\prod_{i\in I}w_i(C)~}
    \prod_{i\in I} W_i
    \quad \text{for all sets $X$ and }C\subseteq X.
  \]
  The refinement interface routines of $F$ now have the following types: 
  \[
    \begin{array}{r@{\,}r@{\,}l}
      \op{init}\colon& (\sortprod[i]F_i1)\times \Bagf (\coprod_{i\in I}A_i)&\to
      \sortprod[i]W_i 
      \\[1mm]
      \op{update}\colon& \Bagf (\coprod_{i\in I}A_i) \times \prod_{i\in I}W_i &\to \prod_{i\in I}W_i\times \prod_{i\in I}F_i 3\times \prod_{i\in I}W_i.
    \end{array}
  \]
  For their definition, we introduce the following auxiliary function $\pi_i$,
  which restricts bags of labels to only those labels that come from $A_i$:
  \[\textstyle{
    \pi_i\colon \Bagf (\coprod_{j\in I} A_j) \to \Bagf A_i
    \qquad
    \underbrace{\pi_i(f)}_{A_i\to \N}(a) = \underbrace{f}_{A\to \N}(\inj_i(a))
    \qquad
    \text{for every }i\in I.}
  \]
  Then we define $\op{init}$ by 
  \begin{equation*}
    \op{init}
    \textstyle{
      \colon \prod_{i\in I}F_i1 \times \Bagf (\coprod_{i\in I}A_i)
      \xrightarrow{\fpair{\op{init}_i\cdot (\pr_i\times \pi_i)}_{i\in I}}
      \prod_{i\in I}W_i},
  \end{equation*}
  i.e.\
  \begin{equation*}
    \big(\op{init}((t_j)_{j\in I},\ell)\big)_{i} = \op{init}_i(t_i, \pi_i(\ell))
    ~~~ \text{ for every }i\in I,
  \end{equation*}
  and we define $\op{update}$ as the composite
  \[
    \begin{array}{l}
      \op{update} = (
      \textstyle{
      \Bagf (\coprod_{i\in I}A_i) \times \prod_{i\in I}W_i
      \xrightarrow{\fpair{\op{update}_i\cdot (\pi_i\times \pr_i)}_i}
      \prod_{i\in I}\big(W_i\times F_i 3\times W_i\big)
      \xrightarrow{\;\phi\;}
      W \times F3 \times W
      })
    \end{array}
  \]
  where $\phi$ is the obvious bijection reordering tuples in the evident
  way.  \takeout{}%
\end{construction}
\begin{proposition}\label{prodInterface}
  Let $F_i\colon \Set \to \Set$, $i \in I$, be equipped with refinement
  interfaces with run time factors $\rifactor_i(c)$.
  Then \autoref{C:constr} defines a refinement interface for $F = \sortprod F_i$
  with  run time factor
  \[
    \max\{\rifactor_i(c) \mid i \in I\}.
  \]
\end{proposition}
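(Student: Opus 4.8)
The plan is to reduce both the coherence axioms of \autoref{D:refinement-interface} and the run-time bound to the corresponding facts for the component interfaces, exploiting that the product functor, together with the chosen encoding~\eqref{defProdEncoding} and weights, acts strictly coordinate-wise. (Throughout, the $i$-th component interface is of course fed the map $\pr_i\cdot c\colon X\to F_iY$, so $\rifactor_i(c)$ abbreviates $\rifactor_i(\pr_i\cdot c)$.) The single bookkeeping fact that drives everything is that for $t\in\sortprod F_i X$, $S\subseteq X$ and $i\in I$,
\[
  \pi_i\mbraces{a\mid (a,x)\in\flat(t),\,x\in S}
  = \mbraces{a\mid (a,x)\in\flat_i(\pr_i(t)),\,x\in S},
\]
with the case $S=X$ giving the bag needed for $\op{init}$. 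This is immediate from the definitions of $\flat$ and $\pi_i$: a label $\inj_i(a)$ occurs in $\flat(t)$ with exactly the multiplicity $\flat_i(\pr_i(t))(a,x)$, and $\pi_i$ keeps precisely the labels tagged by~$i$.

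Next I would verify the two axioms in each coordinate $i\in I$. Since the product functor acts coordinate-wise, $\pr_i\cdot F! = F_i!\cdot\pr_i$ and $\pr_i\cdot F\chi_S^B = F_i\chi_S^B\cdot\pr_i$, and since $w(C)(t) = (w_i(C)(\pr_i t))_{i}$ by construction, equality of the tuples in both axioms reduces to equality in each coordinate. For $\op{init}$, the $i$-th component of $\op{init}(F!(t),\ell)$ is $\op{init}_i(\pr_i(F!(t)),\pi_i(\ell)) = \op{init}_i(F_i!(\pr_i t),\pi_i(\ell))$, which by the bookkeeping fact and the $\op{init}_i$-axiom equals $w_i(X)(\pr_i t)$, i.e.~the $i$-th component of $w(X)(t)$. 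The $\op{update}$ case is analogous: after the reordering bijection~$\phi$ the three outputs are read off coordinate-wise, and in coordinate~$i$ the bookkeeping fact rewrites the input bag as $\mbraces{a\mid (a,x)\in\flat_i(\pr_i t),\,x\in S}$ and the weight as $w_i(B)(\pr_i t)$, so the $\op{update}_i$-axiom delivers exactly $(w_i(S)(\pr_i t),\,F_i\chi_S^B(\pr_i t),\,w_i(B\!\setminus\! S)(\pr_i t))$, as required.

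Finally I would bound the run time. A single pass over $\ell$ buckets the labels (each carries its tag $i\in I$) and produces all the $\pi_i(\ell)$ in $\CO(|\ell|)$ time; the $i$-th component call then costs $\CO(|\pi_i(\ell)|\cdot\rifactor_i(c))$, and the fixed-arity tupling together with $\phi$ add only $\CO(|I|)=\CO(1)$. Summing over $i$ and using $\sum_{i\in I}|\pi_i(\ell)| = |\ell|$ together with $\rifactor_i(c)\le\max_j\rifactor_j(c)$ yields total time $\CO(|\ell|\cdot\max_i\rifactor_i(c))$, the bound of the first clause of \autoref{D:runTimeFactor}. For its second clause (\autoref{D:runTimeFactor}\ref{D:runTimeFactor:2}), values of $F3=\sortprod F_i3$ are compared coordinate-wise, each comparison costing $\CO(\rifactor_i(c))$, so the whole test costs $\sum_i\CO(\rifactor_i(c))=\CO(\max_i\rifactor_i(c))$ since $I$ is fixed.

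I expect the only real subtlety to lie in this last size accounting: one must ensure that the part of the work that is \emph{linear in}~$\ell$ does not pick up a spurious factor $|I|$, which would force a \emph{sum} of the $\rifactor_i$ rather than their maximum. This is exactly what $\sum_i|\pi_i(\ell)| = |\ell|$ guarantees -- each component still contributes only its own factor $\rifactor_i$ -- while the $\CO(|I|)$-many constant-time tupling and comparison operations are harmless because the family~$I$ is fixed independently of the input coalgebra.
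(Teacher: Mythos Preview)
Your proposal is correct and follows essentially the same route as the paper: the ``bookkeeping fact'' you isolate is precisely the paper's key identity~\eqref{eqPrFilter}, and both proofs then verify the two axioms coordinate-wise via naturality of the projections. Your run-time argument is in fact slightly sharper than the paper's---you use $\sum_i|\pi_i(\ell)|=|\ell|$, whereas the paper simply invokes that $|I|$ is constant---but note that this extra care is not actually needed: since $I$ is a fixed finite set, $\sum_i\rifactor_i(c)$ and $\max_i\rifactor_i(c)$ are asymptotically equivalent anyway, so the ``subtlety'' you flag in your final paragraph is harmless either way.
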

\noindent
In particular, if the refinement interface of every $F_i$ has run time factor
$\rifactor(c) = 1$, then so does the refinement interface for
$F=\sortprod F_i$.
\begin{proof}
  To simplify notation in the composition of maps, we define the following
  filter map for every $S\subseteq X$ and $i\in I$:
  \[
    f_S\colon \Bagf(A_i\times X) \to \Bagf A_i
    \qquad
    f_S(\,\underbrace{g}_{\mathclap{A_i\times X\to \N}}\,)
    = \big( a \mapsto \sum_{\substack{x\in S}} g(a,x)\big)
    = \mbraces{ a \mid (a,x)\in g, x\in S}.
  \]
  Using the filter maps $f_S$, we can rephrase the axioms of the
  refinement interface in \autoref{D:refinement-interface} as the commutativity
  of the following diagrams
  \begin{equation}
    \begin{tikzcd}
      FX
      \arrow{d}[swap]{\fpair{F!,f_X\cdot \flat}}
      \arrow{dr}{w(X)}
      \\
      F1\times \Bagf A
      \arrow{r}[pos=0.3]{\op{init}}
      & W
    \end{tikzcd}
    \quad\text{and}\quad
    \begin{tikzcd}
      FX
      \arrow{d}[swap]{\fpair{f_S\cdot \flat,w(B)}}
      \arrow{dr}{\fpair{w(S),F\chi_S^B,w(B\setminus S)}}
      \\
      \Bagf A\times W
      \arrow{r}{\op{update}}
      &[2mm] W \times F3 \times W
    \end{tikzcd}
    \quad
    \text{for all $S\subseteq B \subseteq X$}.
    \label{refIntfaceDiag}
  \end{equation}
  In the proof that these diagrams commute, we will use the equalities
  \begin{equation}
    \begin{tikzcd}%
      \sortprod[j] F_j X
      \arrow{r}{\pr_i}
      \arrow{d}[swap]{\flat}
      &
      F_i X
      \arrow{d}{\flat_i}
      \\
      \Bagf(\sortcoprod[j] A_j\times X)
      \arrow{d}[swap]{f_S}
      & \Bagf(A_i\times X)
      \arrow{d}{f_S}
      \\
      \Bagf (\sortcoprod[j] A_j)
      \arrow{r}{\pi_i} %
      & \Bagf A_i
    \end{tikzcd}
    \qquad
    \text{for $i\in I$ and $S\subseteq X$}.
    \label{eqPrFilter}
  \end{equation}
  This diagram commutes because for  $t\in \prod_{j\in I} F_j X$
  and $a\in A_i$, we have
  \begin{align*}
    (\pi_i\cdot f_S\cdot \flat)(t)(a)
    = \pi_i(f_S(\flat(t)))(a)
    &= f_S(\flat(t))(\inj_i(a)) \tag{def.~of~$\pi_i$}
      \\
    &= \sum_{x\in S} \flat(t)(\inj_i(a),x)
      \tag{def.~of~$f_S$}
      \\
    &= \sum_{x\in S} \flat_i(\pr_i(t))(a,x)
      \tag{def.~of $\flat$, see~\eqref{defProdEncoding}}
      \\
    &= f_S(\flat_i(\pr_i(t)))(a)
      \tag{def.~of $f_S$}
      = (f_S\cdot \flat_i\cdot \pr_i)(t)(a)
  \end{align*}
  Moreover, we will use that for every family $(Y_j)_{j\in I}$ of sets,
  every map $b\colon B\to B'$, and every $i\in I$, the diagram
  \begin{equation}
    \label{prodStrengNat}
    \begin{tikzcd}
      (\sortprod[j] Y_j) \times B
      \arrow{d}[swap]{\fpair{\pr_j\times \id_B}_{j\in I}}
      \arrow[shiftarr={yshift=8mm}]{rr}{\pr_i\times b}
      \arrow{r}{\pr_i\times \id_B}
      &[5mm] Y_i\times B
      \descto[xshift=-4mm]{dr}{naturality\\ of $\pr_i$}
      \arrow{r}{\id_{Y_i}\times b}
      & Y_i\times B'
      \\
      \sortprod[j](Y_j\times B)
      \arrow{rr}[swap]{\sortprod[j](\id_{Y_j}\times b)}
      \arrow{ur}{\pr_i}
      & & \sortprod[j] (Y_j\times B')
      \arrow{u}[swap]{\pr_i}.
    \end{tikzcd}
  \end{equation}
  commutes, as verified by straightforward evaluation of the maps involved. The
  categorically-minded reader will notice that commutation of the right-hand
  inner quadrangle is simply naturality of the projection maps~$\pr_i$, as
  indicated in the diagram; we will use this property of~$\pr_i$ again later,
  referring to it as ``naturality of $\pr_i$'' (without requiring further
  understanding of the concept of natural transformation). Furthermore, we
  clearly have commutative squares
  \begin{equation}
    \label{prodStrengNatSwapped}
    \begin{tikzcd}
      B\times (\sortprod[j] Y_j)
      \arrow{d}[swap]{\fpair{\id_B\times \pr_j}_{j\in I}}
      \arrow{r}{b\times \pr_i}
      &[10mm]
      B'\times Y_i
      \\
      \sortprod[j](B\times Y_j)
      \arrow{r}{\sortprod[j](b\times \id_{Y_j})}
      & \sortprod[j] (B'\times Y_j)
      \arrow{u}[swap]{\pr_i}
    \end{tikzcd}
    \qquad\text{for every $i \in I$}.
  \end{equation}
  We are ready to verify the axioms of the refinement interface. To this end, we
  use that the product projections $\pr_i$, $i \in I$, form a jointly injective
  family. This means that for every pair of maps $f, g\colon Z \to \prod_{j\in
    I} Y_j$ we have that
  \begin{equation}\label{eq:jinj}
    \text{$\pr_i \cdot f = \pr_i \cdot g$ for all $i \in I$ implies $f = g$}.
  \end{equation}
  \paragraph{Axiom for \op{init}.} For every $i \in I$, the outside of the
  following diagram commutes because all its inner parts commute, for the
  respective indicated reasons:
  \[
    \begin{tikzcd}[column sep=21mm,row sep=7mm]
        \sortprod[j] F_j X
        \arrow{d}[left, pos=0.4]{\begin{array}{r}
            \fpair{
            F!,\,
            f_X\cdot \flat}
            \end{array}}
          \arrow[shiftarr={yshift=8mm}]{rr}{w(X)=\sortprod[j] w_j(X)}
          \descto{dr}{\eqref{eqPrFilter}}
        \descto[yshift=4mm]{rr}{Naturality of $\pr_i$}
        & |[alias=FiX]| F_iX
        \arrow[<-]{l}[swap]{\pr_i}
        \arrow{d}[left, pos=0.4]{\begin{array}{r}
            \fpair{
            F_i!,\,
            f_X \cdot \flat_i}
            \end{array}}
        \arrow[horizontal then diagonal]{dr}{w_i(X)}
        \descto[xshift=-6mm]{dr}{Axiom\\ for $\op{init}_i$ \eqref{refIntfaceDiag}}
        & \sortprod[j] W_j
          \arrow{d}{\pr_i}
        \\
        \sortprod[j] F_j1 × \Bagf (\sortcoprod[j]A_j)
        \arrow{d}[swap]{\fpair{\pr_j\times \id}_{j\in I}}
        \arrow[to path={
          -- ([xshift=-10mm]\tikztostart.west)
          |- ([yshift=-5mm]\tikztotarget.south) \tikztonodes
          -- (\tikztotarget)
        }, rounded corners]{drr}[pos=0.8,below]{\op{init}}
        \descto{dr}{\eqref{prodStrengNat} for $Y_j = F_j1$, $b=\pi_i$}
        & F_i1 × \Bagf A_i
            \arrow{r}{\op{init}_i}
            \arrow[<-]{d}{\pr_i}
            \arrow[<-]{l}[swap]{\pr_i\times \pi_i}
            \descto{dr}{Naturality of $\pr_i$}
        & |[alias=Wi]| W_i
            \arrow[<-]{d}{\pr_i}
        \\
        \sortprod[j] \big(F_j1 × \Bagf (\sortcoprod[k]A_k)\big)
        \arrow{r}[swap]{\sortprod[j](\id\times\pi_j)}
        & \sortprod[j](F_j 1\times \Bagf A_j)
        \arrow{r}[swap]{\sortprod[j] \op{init}_j}
        & \sortprod[j] W_j
    \end{tikzcd}
  \]
  By~\eqref{eq:jinj}, we obtain commutation of the left-hand triangle
  in~\eqref{refIntfaceDiag} as desired.
  
  \paragraph{Axiom for \op{update}.}
  For all $S\subseteq B\subseteq X$, the outside of the diagram below commutes
  because all its inner parts commute for the respective indicated reasons:
  \[
    \begin{tikzcd}[column sep=19mm,row sep=10mm]
    \smash{\sortprod[j]} F_i X
    \arrow{d}[pos=0.4]{\fpair{f_S\cdot \flat,w(B)}}
    \descto{dr}{\eqref{eqPrFilter}}
    \arrow[shiftarr={yshift=8mm}]{rr}{\sortprod[j]\fpair{w_j(S),F_j\chi_S^B,w_j(B\setminus
        S)}}
    \descto[yshift=4mm]{rr}{Naturality of $\pr_i$}
    &[5mm]
    F_iX
    \arrow[
      horizontal then diagonal,
    ]{dr}[pos=0.7,above,yshift=0mm]{\fpair{w_i(S),F_i\chi_S^B, w_i(B\setminus S)}}
    \arrow{d}[pos=0.4,left]{\fpair{f_S\cdot \flat_i,w_i(B)}}
    \arrow[<-]{l}[swap]{\pr_i}
    \descto[yshift=0mm,xshift=-5mm]{dr}{Axiom for \\ $\op{update}_i$ \eqref{refIntfaceDiag}}
    &[4mm]  \sortprod[j] (W_j × F_j3 × W_j)
    \arrow{d}{\pr_i}
    \\
    \Bagf (\sortcoprod[j]A_j) × \sortprod[j] W_j
    \descto{dr}{\eqref{prodStrengNatSwapped} for $Y_j = W_j$, $b=\pi_i$}
    \arrow{d}[swap]{\fpair{\id\times \pr_j}_{j\in I}}
    \arrow[to path={
      -- ([xshift=-5mm]\tikztostart.west)
      |- ([yshift=-5mm]\tikztotarget.south) \tikztonodes
      -- (\tikztotarget)
    }, rounded corners]{drr}[pos=0.8,swap]{\phi^{-1}\cdot \op{update}}
    & \Bagf A_i × W_i
    \arrow[<-]{l}[swap,yshift=0mm]{\pi_i\times \pr_i}
    \arrow[<-]{d}{\pr_i}
    \arrow{r}[yshift=0mm]{\op{update}_i}
            \descto{dr}{Naturality of $\pr_i$}
    & W_i × F_i3 × W_i
    \arrow[<-]{d}{\pr_i}
    \\
    \smash{\sortprod[j]} (\Bagf(\sortcoprod[k]A_k) \times W_j)
    \arrow{r}[swap]{\sortprod[j] \pi_j\times \id}
    & \sortprod[j] (\Bagf A_j × W_j)
    \arrow{r}[swap]{\sortprod[j]\op{update}_j}
    & \sortprod[j] (W_j × F_j3 × W_j)
    \end{tikzcd}
  \]
  Using~\eqref{eq:jinj} and then composing with~$\phi$ on the left, we obtain
  \begin{align*}
    \op{update}\cdot \fpair{f_S\cdot \flat, w(B)}
    &= \phi\cdot \sortprod[j]\fpair{w_j(S),F_j\chi_S^B,w_j(B\setminus
        S)}
      \\
    &= \fpair{\sortprod[j]w_j(S),\sortprod[j]F_j\chi_S^B,\sortprod[j]w_j(B\setminus
        S)}
     = \fpair{w(S),F\chi_S^B,w(B\setminus S)},
   \end{align*}
   which is the desired right-hand triangle in~\eqref{refIntfaceDiag}.

   \paragraph{Run time factor.} Both $\op{init}$ and $\op{update}$ preprocess
   their parameters in linear time (via $\pi_i$ and by accessing elements of
   tuples), before calling the $\op{init}_i$ and $\op{update}_i$ routines of all
   $F_i$. Since $|I|$ is constant, this results in a run time of
   $\CO(|\ell|\cdot \max_{i\in I} \rifactor_i(c))$ for
   $\ell \in \Bagf (\sortcoprodImpl A_i)$.

   We order $F3$ lexicographically by assuming a total order on the index set
   $I$:
   \[
     x < y
     \text{ in }\sortprod F_i 3
     \quad\text{iff}\quad
     \text{ there is }i \in I\text{ with }
     \pr_i(x) < \pr_i(y)
     \text{ and }
     \pr_j(x) = \pr_j(y)
     \text{ for all }j < i.
   \]
   This comparison takes time $\CO(\max_{i\in I}\rifactor_i(c))$, again because
   $|I|$ is constant.
\end{proof}

\begin{remark}\label{R:rifactorSortMax}
  In summary we obtain that the run time factor $\rifactor(c)$ for a composite
  functor~$F$ is the maximum of the respective run time factors of the
  refinement interfaces of the basic functors from which~$F$ is
  built. Specifically, suppose that~$F$ is built from basic functors
  $G_1,\ldots,G_n$ using composition, product, and coproduct, and that
  $G_1,\ldots,G_n$ have refinement interfaces with respective run time factors
  $\rifactor_1(c),\ldots,\rifactor_n(c)$. Then the modularity mechanism
  decomposes an $F$-coalgebra $c\colon C\to FC$ into maps
  $f_i\colon X_i\to G_iY_i$, for $1\le i\le n$ (i.e.~one map per block in the
  illustration in \autoref{fig:composedFunctor}). The run time factor for the
  refinement interface arising by modular construction is given by
  \begin{equation}
    \rifactor(c) = \max_{1\le i\le n} \rifactor_i(f_i).
    \label{eq:rifactorSort}
  \end{equation}
  This is because the refinement interface for a particular functor $F_i$ only
  sees the labels and weights for the map $f_i$ and never those from other sorts.
\end{remark}

\subsection{Implementation Details}\label{sec:imp}
\twnote{} Our implementation is geared
towards realizing both the level of genericity and the efficiency afforded by
the abstract algorithm, while staying as close as possible at the theory as
presented in the present and preceding work~\cite{concurSpecialIssue}. Regarding genericity, each
basic functor is defined (in its own source file) as a single Haskell data type
that implements two type classes:
\begin{enumerate}
\item the class \lstinline$RefinementInterface$ with functions \texttt{init} and
  \texttt{update}, which directly corresponds to the mathematical notion
  (\autoref{D:refinement-interface}), and
\item the class \lstinline$ParseMorphism$, which provides a parser that defines
  the coalgebra syntax for the functor.
\end{enumerate}
This means that new basic functors can be implemented without modifying any of
the existing code, except for registering the new type in a list of functors
(the existing functor implementations are
in~\href{https://git8.cs.fau.de/software/copar/tree/master/src/Copar/Functors}{\texttt{src/Copar/Functors}}).
The type class modelling refinement interfaces is defined as follows in \copar:
\begin{lstlisting}
class (Ord (F1 f), Ord (F3 f)) => RefinementInterface f where
  init :: F1 f -> [Label f] -> Weight f
  update :: [Label f] -> Weight f -> (Weight f, F3 f, Weight f)
\end{lstlisting}
Here, the type \lstinline$f$ serves as the name of the functor $F$ of interest, and
\lstinline$Label f$ is the type representing the label set $A$ from the encoding
of the functor. Similarly, the type \lstinline$Weight f$ represents $W$ and the
types \lstinline$F1 f$ and \lstinline$F3 f$ represents the sets $F1$ and $F3$.
For example, if we want to implement the refinement interface for $FX=\R^{(X)}$
explicitly we can write the following:
\begin{lstlisting}
data R x = R x
type instance Label R = Double
type instance Weight R = (Double,Double)
type instance F1 R = Double
type instance F3 R = (Double,Double,Double)

instance RefinementInterface R where
   init g e = (0, g)
   update l (r,b) = ((r + b - sum l, sum l),
                     (r,  b - sum l, sum l),
                     (r + sum l, b - sum l))
\end{lstlisting}%
The first line defines a parametrized type \lstinline$R$ (with one constructor
of the same name) representing the functor ($\R^{(-)}$ in this case), with the
parameter \lstinline$x$ representing the functor argument. The next lines define
the types representing the sets that appear in the refinement interface, and the
instance of \lstinline$RefinementInterface$ for \lstinline$R$ implements the
$\op{init}$ and $\op{update}$ routines for $\R^{(-)}$ as we have defined them
before (\itemref{E:refint}{E:refint:2}).

Concerning efficiency, \copar{} faithfully implements our imperative
algorithm~\cite{concurSpecialIssue} in the functional language Haskell. We have
made sure that this implementation actually realizes the good theoretical
complexity of the algorithm. This is achieved by ample use of the
\texttt{ST}~monad~\cite{DBLP:conf/pldi/LaunchburyJ94} and by disabling lazy
evaluation for the core parts of the algorithm using GHC's \texttt{Strict}
pragma. The \texttt{ST}~monad also enables the use of efficient data structures
like mutable vectors where possible.
  
One such data structure, which is central to the efficient implementation of the
generic algorithm, is a \emph{refinable partition}, which stores the blocks of
the current partition of the state set $C$ of the input coalgebra during the
execution of the algorithm. This data structure has to provide constant time
operations for finding the size of a block, marking a state and counting the
marked states in a block. Splitting a block in marked and unmarked states must
only take linear time in the number of marked states of this block. Valmari and
Franceschinis~\cite{ValmariF10} have described a data structure (for use in
Markov chain lumping) fulfilling all these requirements, and this is what we use
in \copar.  

Our abstract algorithm maintains two partitions~$P,Q$ of $C$, where~$P$ is one
transition step finer than~$Q$; i.e.~$P$ is the partition of~$C$ induced by the
map $Fq \cdot c\colon C \to FQ$, where $q\colon C \epito Q$ is the canonical
quotient map assigning to every state the block which contains it. The key to
the low time complexity is to choose in each iteration a \emph{subblock} $S$ in
$P$ whose surrounding \emph{compound block} $B$ in $Q$ (with $S \subseteq B$)
satisfies $2 \cdot |S| \leq |B|$, and then refine $Q$ (and $P$) as explained in
\autoref{sec:refintface} (see \autoref{fig:splitblock}). This idea goes back to
Hopcroft~\cite{Hopcroft71}, and is also used in all other partition refinement
algorithms mentioned in the introduction.  Our implementation maintains a queue
of subblocks $S$ satisfying the above property, and the termination condition
$P = Q$ of the main loop then translates to this queue being
empty. %

One optimization that is new in \copar in relation to previous
work~\cite{ValmariF10,concurSpecialIssue} is that weights for blocks of exactly
one state are not computed, because such blocks cannot be split any
further. This has drastic performance benefits for inputs where the algorithm
produces many single-element blocks early on, e.g.~for nearly minimal systems or
fine-grained initial partitions, see~\cite{Deifel18} for details and
measurements.

\section{Instances}
\label{sec:instances}

Many systems are coalgebras for functors formed according to the
grammar~\eqref{termGrammar2}. In \autoref{tab:instances}, we list various system
types that can be handled by our algorithm, taken from~\cite{concurSpecialIssue}
except for Markov chains with weights in a monoid and weighted tree automata,
which are new in the present paper. In all cases,~$m$ is the number of edges and
$n$ is the number of states of the input coalgebra, and we compare the run time
of our generic algorithm with that of specifically designed algorithms from the
literature. In most instances, we match the complexity of the best known
algorithm. In the one case where our generic algorithm is asymptotically slower
(LTS with unbounded alphabet), this is due to assuming a potentially very large
number of alphabet letters -- as soon as the number of alphabet letters is
assumed to be polynomially bounded in the number~$n$ of states, the number~$m$
of transitions is also polynomially bounded in~$n$, so $\log m\in\CO(\log n)$.
This argument also explains why `$<$' and `$=$', respectively, hold in the last
two rows of \autoref{tab:instances}, as we assume~$\Sigma$ to be (fixed and)
finite; the case where~$\Sigma$ is infinite and unranked is more complicated.
Details on the instantiation to weighted tree automata are discussed in
\autoref{sec:wta}.
\newcommand{\mycite}[1]{\cite{#1}}
\begin{table}
  \caption{Asymptotic complexity of the generic algorithm (2017) compared to
    specific algorithms, for systems with $n$ states and $m$ transitions, respectively
    $m_\Powf$ nondeterministic and $m_\Dist$ probabilistic
    transitions for Segala systems. For simplicity, we assume that $m\ge
    n$ and that $A$ and $\Sigma$
    are finite and fixed. $M$ is a possibly infinite and possibly
    non-cancellative commutative monoid.
    \smallskip}
    \label{tab:instances}
    \setlength\heavyrulewidth{0.25ex}%
    \renewcommand{\arraystretch}{1.2}%
    \normalsize
\renewcommand{\color}[1]{}%
\begin{tabular}{@{}l@{\hspace{4mm}}l@{\hspace{3mm}}c@{\hspace{4mm}}c@{\hspace{4mm}}cll@{}}
    \toprule
    System
    & Functor
    & Run Time
    &
    & Specific algorithm & Year & Reference
    \\
    \toprule
  \makecell[l]{
  DFA
  }
    & $2\times(-)^A$
    & $n\cdot \log n$
    & {$\bm=$}
    & $n\cdot \log n$
    & 1971
    & \mycite{Hopcroft71}
    \\
    \midrule
    \multirow{2}{*}{\makecell[l]{Transition\\ Systems}}
    & \multirow{2}{*}{$\Powf$}
    & \multirow{2}{*}{$m\cdot \log n$}
    & \multirow{2}{*}{{$\bm=$}}
    & 
      \multirow{2}{*}{$m\cdot \log n$} 
    & \multirow{2}{*}{1987}
    &  \multirow{2}{*}{\mycite{PaigeTarjan87}}
    \\\\
    \midrule
     \multirow{2}{*}{\makecell[l]{Labelled Tran-\\ sition Systems}}
    & \multirow{2}{*}{$\Powf(\N\times -)$}
    & \multirow{2}{*}{\makecell{$m\cdot \log m$}}
    & {$\bm=$}
    & $m\cdot \log m$
    & 2004
    & \mycite{DovierEA04}
    \\
    & 
    & 
    & {$\bm >$}
    & $m\cdot \log n$
    & 2009
    & \mycite{Valmari09}
    \\
    \midrule
    Markov Chains
    & $\R^{(-)}$
    &$m\cdot \log n$
    & {$\bm=$}
    & $m\cdot \log n$
    & 2010
    & \mycite{ValmariF10}
    \\
  \midrule
  Weighted Systems
    & $M^{(-)}$
    &$\mathclap{m\cdot \log n\cdot \log(\min(m,|M|))}$
    &
    &
    &
    &
    \\
    \midrule
    \multirow{2}{*}{\makecell[l]{Simple\\ Segala Systems}}
    & \multirow{2}{*}{\makecell[l]{$\Powf(A\times -)\cdot \Dist$}}
    & \multirow{2}{*}{\makecell{$m_\Dist \cdot \log m_\Powf$}}
    & {$\bm <$}
    & $m\cdot \log n$
    & 2000
    & \mycite{BaierEM00}
    \\
    & 
    & 
    & {$\bm =$}
    & \makecell{$m_\Dist \cdot \log m_\Powf$}
    & 2018
    & \cite{GrooteEA18}
    \\
    \midrule
    \multirow{2}{*}{\makecell[l]{Colour\\Refinement}}
    & \multirow{2}{*}{\makecell[l]{$\Bag$}}
    & \multirow{2}{*}{\makecell[l]{$m\cdot \log n$}}
    & \multirow{2}{*}{\makecell[l]{$\bm=$}}
    & \multirow{2}{*}{\makecell[l]{$m\cdot \log n$}}
    & \multirow{2}{*}{\makecell[l]{2017}}
    & \multirow{2}{*}{\makecell[l]{\cite{BerkholzBG17}}}
    \\\\
  \midrule%
  \multirow{3}{*}{\makecell[l]{
  Weighted \\
  Tree\\ Automata
  }}
    & $M\times M^{(\Sigma (-))}$
    & $m\cdot \log^2 m$
    & $\bm<$
    & $m\cdot n$
    & 2007
    & \cite{HoegbergEA07}
      \\
    & \multirow{2}{*}{\makecell[l]{$M\times M^{(\Sigma (-))}$\\[-1mm] \scriptsize ($M$ cancellative)}}
    & \multirow{2}{*}{\makecell[l]{\!\!$m\cdot \log m$\!\!}}
    & \multirow{2}{*}{\makecell[l]{$\bm =$}}
    & \multirow{2}{*}{\makecell[l]{$m \cdot \log n$}}
    & \multirow{2}{*}{\makecell[l]{2007}}
    & \multirow{2}{*}{\makecell[l]{\cite{HoegbergEA07}}}
    \\\\
    \bottomrule
  \end{tabular}%
  \vspace*{-10pt}
\end{table}%

Our algorithm and tool can handle further system types that arise by combining
functors in various ways. For instance, so-called simple Segala systems are
coalgebras for the functor $\Powf(A\times \Dist(-))$, and are minimized by our
algorithm in time $\CO((m+n)\cdot \log n)$, improving on the best previous
algorithm~\cite{BaierEM00}\twnote{} and matching the
complexity of the algorithm by Groote et al.~\cite{GrooteEA18}. Other type
functors for various species of probabilistic systems are listed in
\cite{BARTELS200357}, including the ones for general Segala systems, reactive
systems, generative systems, stratified systems, alternating systems, bundle
systems, and Pnueli-Zuck systems. Hence, \copar{} provides an off-the-shelf
minimization tool for all these types of systems. 

\begin{remark}[Initial partitions]\label{R:initPart}
Note that in the classical
Paige-Tarjan algorithm \cite{PaigeTarjan87}, the input includes an initial
partition. Initial partitions as input parameters are covered via the
genericity of our algorithm. In fact, initial partitions on $F$-coalgebras are
accommodated by moving to the functor $F'X = \N\times FX$, where the
first component of a coalgebra
\[
  C \xrightarrow{\fpair{c_1,c_2}} \N \times FX
\]
assigns to each state the number of its
block in the initial partition. Under the optimized treatment of the
polynomial functor \mbox{$\N\times(-)$} (\autoref{sec:des}), this
transformation does not enlarge the state space and also leaves the
run time factor $\rifactor(c)$ unchanged~\cite{concurSpecialIssue};
that is, the asymptotic run time of the algorithm remains unchanged
under adding initial partitions. 
\end{remark}

\section{Weighted Transition Systems}
\label{sec:monoidvalued}
We have seen in \autoref{E:refint}\ref{E:refint:2} that weighted systems with
weights in a group easily fit into our framework of generic partition
refinement, since we can use inverses to implement the refinement interface
efficiently. In the following we generalize this by allowing the weighted
systems to be weighted in an arbitrary commutative monoid $M$ that does not
necessarily have inverses.

Systems with weights in a monoid are studied by Klin and Sassone~\cite{KlinS13},
and they show that behavioural equivalence of $M^{(-)}$-coalgebras is precisely
weighted bisimilarity (cf.~\itemref{exManyFunctors}{exManyFunctors:3}). Weighted
transition systems with weights in a monoid also serve as a base for -- and are
in fact a special case of -- weighted tree automata as studied by Högberg et
al.~\cite{HoegbergEA09}, which we will discuss in the next section.

In the following we distinguish between cancellative and non-cancellative
monoids because the respective refinement interfaces for $M^{(-)}$ are
implemented differently, with the interface for a cancellative monoid allowing
for a lower time complexity.

\subsection{Cancellative Monoids}
\label{sec:mon:cancellative}

Recall that a commutative monoid $(M,+,0)$ is \emph{cancellative} if $a+b = a+c$
implies $b = c$.  Clearly, every submonoid of a group is cancellative, for example
$(\N,+,0)$ and $(\Z\setminus\{0\},\cdot\,,1)$.
It is well-known that every cancellative
commutative monoid~$M$ embeds into an abelian group~$G$ via the
standard Grothendieck construction. Explicitly,
\[
  G = (M\times M)/\mathord{\equiv}
  \qquad
  (a_+,a_-)
  \equiv
  (b_+,b_-)
  ~\text{iff}~
  a_++b_- = b_++a_-,
\]
and the group structure is given by the usual component-wise
addition on the product:
\[
  [(a_+,a_-)] + [(b_+,b_-)] = [(a_++b_+,a_-+b_-)]
  \quad\text{  and }\quad
-[(a_+,a_-)] = [(a_-,a_+)].
\]
The embedding of $M$ into $G$ is given by the monoid
homomorphism
\[
  \iota\colon M\to G\qquad \iota(m)= [(m,0)].
\]
Hence, we have in total:
\begin{corollary}
A monoid is cancellative iff it is a submonoid of a group.
\end{corollary}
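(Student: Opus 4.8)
The plan is to prove the two directions of the biconditional separately, assembling the pieces already laid out above. The easy direction is that a submonoid of a group is cancellative --- the ``Clearly'' remark made above. I would argue it directly: if $M$ is a submonoid of a group $G$ and $a+b = a+c$ holds in $M$ (hence in $G$), then since $a$ has an inverse $-a$ in $G$, adding $-a$ on the left gives $b = c$. As this computation takes place inside $G$ while its conclusion concerns elements of $M$, cancellativity is inherited by the submonoid.

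For the converse --- the substantive direction --- suppose $M$ is cancellative, and take $G = (M\times M)/\mathord{\equiv}$ together with the addition and negation displayed above. I would verify that this data defines an abelian group into which $M$ embeds, via the following routine checks: (i) $\equiv$ is an equivalence relation; (ii) addition and negation descend to $\equiv$-classes, i.e.\ are independent of the chosen representatives; (iii) the group axioms hold, with unit $[(0,0)]$ and inverses as displayed; (iv) $\iota(m) = [(m,0)]$ is a monoid homomorphism, i.e.\ $\iota(0) = [(0,0)]$ and $\iota(m+m') = \iota(m) + \iota(m')$; and (v) $\iota$ is injective, since $\iota(m) = \iota(m')$ unfolds to $m+0 = m'+0$, whence $m = m'$. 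Granting these, $M$ is isomorphic to the submonoid $\iota(M)$ of the group $G$, as required.

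The main obstacle --- indeed the \emph{only} place where cancellativity is used --- is transitivity of $\equiv$ in step (i). Given $(a_+,a_-)\equiv(b_+,b_-)$ and $(b_+,b_-)\equiv(c_+,c_-)$, that is $a_++b_- = b_++a_-$ and $b_++c_- = c_++b_-$, I would add the two equations and then cancel the common summand $b_++b_-$ from both sides to obtain $a_++c_- = c_++a_-$, which is exactly $(a_+,a_-)\equiv(c_+,c_-)$. Cancellation is indispensable at this single step, and its absence is precisely why non-cancellative monoids need not embed into any group. By contrast, the well-definedness in step (ii) and all the remaining verifications are pure monoid calculations relying only on commutativity and associativity, with no appeal to cancellation.
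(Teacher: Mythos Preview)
Your proposal is correct and follows exactly the approach the paper sketches: the paper simply displays the Grothendieck construction and the embedding~$\iota$ in the paragraph preceding the corollary, says this is ``well-known'', and then states the corollary as an immediate consequence. Your write-up fills in the routine verifications the paper omits, and your observation that cancellativity is needed precisely for transitivity of~$\equiv$ (and nowhere else) is accurate and a nice sharpening of the exposition.
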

Informally speaking, the inverses of elements of a cancellative monoid $M$
exist, albeit not within $M$ itself. The embedding $M\to G$ extends to a
component-wise injective natural transformation $\alpha\colon M^{(-)}\to
G^{(-)}$, and therefore, computing behavioural equivalence for $M^{(-)}$ reduces
to that of $G^{(-)}$~\cite[Proposition~2.13]{concurSpecialIssue}. Hence, we can
convert every $M^{(-)}$-coalgebra $c\colon C\to M^{(C)}$ into the
$G^{(-)}$-coalgebra
\[
  C\xrightarrow{~c~} M^{(C)}\xrightarrow{~\alpha_C~} G^{(C)}
\]
and use the refinement interface for
$G^{(-)}$ from \autoref{E:refint}\ref{E:refint:2}, obtaining:
\begin{corollary}
  Let $M$ be a cancellative monoid. Then partition refinement on a weighted
  transition system $c\colon C\to M^{(C)}$ with $n$ states and $m$ transitions
  runs in time $\CO((m+n)\cdot \log n)$.
\end{corollary}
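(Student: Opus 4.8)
The plan is to reduce the statement to the group-valued case, which was already treated in \itemref{E:refint}{E:refint:2}, and then invoke \autoref{T:comp}. First I would use the component-wise injective natural transformation $\alpha\colon M^{(-)}\to G^{(-)}$ into the Grothendieck group~$G$ constructed above to convert the given $M^{(-)}$-coalgebra $c\colon C\to M^{(C)}$ into the $G^{(-)}$-coalgebra $\alpha_C\cdot c\colon C\to G^{(C)}$. By \cite[Proposition~2.13]{concurSpecialIssue}, behavioural equivalence is preserved and reflected along this conversion, so minimizing the $G^{(-)}$-coalgebra computes precisely the behavioural equivalence of the original $M^{(-)}$-coalgebra; it therefore suffices to bound the run time of partition refinement on $\alpha_C\cdot c$.

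Next I would check that this conversion preserves the size parameters $n$ and $m$. The number of states is literally unchanged. For the edges, recall from \autoref{E:enc}\ref{E:enc:2} that the encoding of a monoid-valued functor records one edge per non-zero weight, so the number of edges of $\alpha_C\cdot c$ is $\sum_{x\in C}|\{y\mid \iota(c(x)(y))\neq 0\}|$, using that $\alpha_C$ acts component-wise by post-composition with $\iota$. Since $\iota$ is an injective monoid homomorphism, $\iota(a)=0$ iff $a=0$, so the support of $\alpha_C(c(x))$ coincides with that of $c(x)$ for every $x\in C$; hence the edge count of $\alpha_C\cdot c$ equals the number $m$ of transitions of the input coalgebra.

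Finally I would apply \autoref{T:comp} to $G^{(-)}$. The functor $G^{(-)}$ is a monoid-valued functor and hence zippable (all functors of \autoref{exManyFunctors} are zippable, cf.\ \autoref{D:zippable}), and by \autoref{expl:run-time-factor} its refinement interface from \itemref{E:refint}{E:refint:2} has run time factor $\rifactor(c)=1$. Substituting $\rifactor(c)=1$ into the bound $\CO((m+n)\cdot\log n\cdot\rifactor(c))$ of \autoref{T:comp}, and using that $n$ and $m$ agree with those of the original coalgebra, yields the claimed $\CO((m+n)\cdot\log n)$.

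The only genuinely non-routine point is the preservation of the edge count, which hinges on $\iota$ reflecting zero; everything else is a direct chaining of the already-established reduction of behavioural equivalence along a subfunctor inclusion and the previously verified run time factor of the group-valued refinement interface.
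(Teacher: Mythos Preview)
Your proposal is correct and follows exactly the route the paper takes: reduce to the Grothendieck group~$G$ via the injective natural transformation $\alpha$, then invoke \autoref{T:comp} with the group-valued refinement interface and $\rifactor(c)=1$. The paper's own proof is in fact just the one line ``immediate from \autoref{T:comp} for $\rifactor(c)=1$'' after the preceding paragraph sets up the reduction; your additional verification that the edge count is preserved (via $\iota$ reflecting zero) and that $G^{(-)}$ is zippable makes explicit what the paper leaves implicit, but the argument is the same.
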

\noindent
Indeed, this is immediate from \autoref{T:comp} for $\rifactor(c) = 1$.
\smnote{}

\subsection{Non-cancellative Monoids}
\label{sec:noncan}
There are monoids that are used in practice that fail to be cancellative, for
example the additive monoid $(\N,\max,0)$ of the tropical semiring.
Assume given a non-cancellative commutative monoid $(M,+,0)$. Then~$M$ does not
embed into a group, so we need a new refinement interface for the type functor
$M^{(-)}$ of $M$-weighted transition systems in our algorithm, rather than being
able to reuse the one for group-valued functors as in the case of cancellative
commutative monoids (\autoref{sec:mon:cancellative}).  The basic idea in the
construction of a refinement interface for $M^{(-)}$ is to incorporate bags of
monoid elements into the weights, and consider subtraction of bags. We implement
this idea as follows.

We use the same encoding of $M^{(-)}$ as for group-valued functors:
\[
  A=M_{\neq 0}=M\setminus\{0\},\text{ and }
  \flat(f) = \{\,(f(x), x) \mid x\in X, f(x)\neq 0\,\}\text{ for }f\in M^{(X)}.
\]
The refinement interface for $M^{(-)}$ has weights
$W = M\times\Bagf(M_{\neq 0})$ and uses weight functions
\begin{equation}\label{eq:weight-noncancellative}
  w(B)(f) = \textstyle\big(\sum_{x\in X\setminus B}f(x),\;
  (m\mapsto\big|\{ x\in B\mid f(x) = m\}\big|) \big)\in M\times\Bagf(M_{\neq 0});
\end{equation}
that is, $w(B)(f)$ returns the total weight of $X\setminus B$
under~$f$ and the bag of non-zero elements of~$M$ occurring in
$f$. The interface functions
$\op{init}\colon M^{(1)}×\Bag M_{\neq 0}\to W$,
$\op{update}\colon \Bag M_{\neq 0} × W \to W× M^{(3)}× W$ use the summation map
$\csum\colon \Bag M\to M$ from~\eqref{eq:csum} and are given by
\begin{align*}
  \op{init}(f, \ell) &= (0, \ell)\\
  \op{update}(\ell, (r, c)) &=
  ((r + \Sigma(c-\ell), \ell),
  (r, \Sigma(c-\ell), \Sigma(\ell)),
  (r + \Sigma(\ell), c - \ell)),
\end{align*}
where for $a, b\in \Bagf Y$, the bag $a-b$ is defined by
\[
  (a-b)(y) = \max(0, a(y) - b(y)).
\]
As for groups, we denote elements of $M^{(3)}$ as triples of elements from~$M$.

\begin{proposition}
  For every commutative monoid $M$, the above functions $\op{init}$ and
  $\op{update}$ define a refinement interface for the functor~$M^{(-)}$.
\end{proposition}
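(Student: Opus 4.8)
The plan is to verify directly the two equational axioms of \autoref{D:refinement-interface} for an arbitrary $t=f\in M^{(X)}$ and subsets $S\subseteq B\subseteq X$. It is convenient to abbreviate, for any $U\subseteq X$, the label bag
\[
  \ell_U := \mbraces{f(x)\mid x\in U,\ f(x)\neq 0}\in\Bagf(M_{\neq 0}),
\]
which satisfies $\ell_U(m)=|\{x\in U\mid f(x)=m\}|$ for $m\neq 0$ and is finitely supported since $f$ is. With this notation the weight map rewrites as $w(U)(f)=(r_U,\ell_U)$, where $r_U:=\sum_{x\in X\setminus U}f(x)$; in particular the second component of $w(B)(f)$ is exactly $\ell_B$. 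I will use repeatedly that $\csum(\ell_U)=\sum_{x\in U}f(x)$, which holds because every $x\in U$ contributes $f(x)$ to the sum and the zero-valued $x$ contribute $0$.

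First I would dispatch the \op{init} axiom, which is immediate: the label bag $\mbraces{a\mid(a,x)\in\flat(f)}$ equals $\ell_X$, the value of $\op{init}$ does not depend on its first argument, and $w(X)(f)=(r_X,\ell_X)=(0,\ell_X)$ since $X\setminus X=\emptyset$. Hence $\op{init}(F!(f),\ell_X)=(0,\ell_X)=w(X)(f)$, as required.

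The substantive part is the \op{update} axiom. Here the input bag $\mbraces{a\mid(a,x)\in\flat(f),\ x\in S}$ is $\ell_S$, and, writing $w(B)(f)=(r_B,\ell_B)$, the goal is
\[
  \op{update}(\ell_S,(r_B,\ell_B)) = \big(w(S)(f),\,F\chi_S^B(f),\,w(B\setminus S)(f)\big).
\]
The one place where the argument genuinely departs from the group-valued case, and the step I expect to be the main obstacle, is the treatment of bag subtraction. Since $S\subseteq B$ we have $\ell_S(m)\le\ell_B(m)$ for all $m$, so the truncation in the definition of bag difference is inert and
\[
  \ell_B-\ell_S=\ell_{B\setminus S}.
\]
This is exactly the surrogate for the missing additive inverses of $M$: the full bags of occurring weights are stored precisely so that this set-theoretic cancellation can be performed without subtracting in $M$. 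It follows that $\csum(\ell_B-\ell_S)=\sum_{x\in B\setminus S}f(x)$ and $\csum(\ell_S)=\sum_{x\in S}f(x)$.

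With these identities the three output components match by bookkeeping, using the disjoint decompositions $X\setminus S=(X\setminus B)\sqcup(B\setminus S)$ and $X\setminus(B\setminus S)=(X\setminus B)\sqcup S$ (both valid as $S\subseteq B$) together with additivity of $\csum$. The first component $(r_B+\csum(\ell_B-\ell_S),\ell_S)$ equals $(r_S,\ell_S)=w(S)(f)$; the middle component $(r_B,\csum(\ell_B-\ell_S),\csum(\ell_S))$ equals the triple $(\sum_{x\in X\setminus B}f(x),\sum_{x\in B\setminus S}f(x),\sum_{x\in S}f(x))$, which is $F\chi_S^B(f)=M^{(\chi_S^B)}(f)$ evaluated at $0,1,2$ via the image-measure formula of \autoref{exManyFunctors}\ref{exManyFunctors:3}; and the third component $(r_B+\csum(\ell_S),\ell_B-\ell_S)$ equals $(r_{B\setminus S},\ell_{B\setminus S})=w(B\setminus S)(f)$. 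This completes the verification.
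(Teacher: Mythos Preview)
Your proof is correct and follows essentially the same approach as the paper. Your notation $\ell_U$ and $r_U$ corresponds to the paper's $(t\downarrow U)$ and $\sum_{X\setminus U}t$, and the key step you single out—that $\ell_B-\ell_S=\ell_{B\setminus S}$ because $S\subseteq B$ makes the truncation inert—is exactly the identity $(t\downarrow B)-(t\downarrow S)=(t\downarrow(B\setminus S))$ that the paper isolates before the main chain of equalities.
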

\begin{proof}
  We define the weight functions~$w\colon \Pow X\to (M^{(X)}\to M\times
  \Bagf(M_{\neq 0}))$ as in~\eqref{eq:weight-noncancellative},
  and show that $\op{init}$ and $\op{update}$ then satisfy the two axioms in
  \autoref{D:refinement-interface}. Let $t\in FX$. For the first axiom we
  compute as follows:\smnote{}
  \begin{align*}
    w(X)(t) &= \big(\textstyle\sum_{x\in X\setminus X}t(x),\;
    \overbrace{(m\mapsto\big|\{ x\in X\mid t(x) = m\}\big|)}^{\text{ in }M_{\neq 0}\to \N} \big)
    \\
    &= \big(0, \mbraces{m \mid x \in X, t(x) = m,\,m \neq 0}\big)
    \\
    &= \big(0,\mbraces{m \mid (m,x) \in \{(t(x), x) \mid x \in X, t(x) \neq
      0\}}\big)
    \\
    &= \op{init}\big(F!(t), \mbraces{m \mid (m,x) \in\flat (t)} \big).
  \end{align*}
  In the first of the above multiset comprehensions, different $x\in X$ with $t(x) =m, m\neq
  0$ lead to multiple occurrences of $m$ in the multiset, and similarly in the second
  multiset comprehension.
  Let us now verify the second axiom concerning $\op{update}$. In order to
  simplify the notation, we define 
  \[
    (t\downarrow B) \in \BagM,
    \quad
    (t\downarrow B)(m) = |\{x\in B\mid t(x) = m\}|
    \quad
    \text{for }B\subseteq X, t\in M^{(X)}.
  \]
  The accumulated weight of edges into $B\subseteq X$ in $t\in M^{(X)}$ is
  then denoted by \newcommand{\xsum}[1]{\underset{#1}{{\textstyle\sum\,}}}
  \[
    \xsum{B} t := \csum(t\downarrow B) = \textstyle\sum_{x\in B} t(x).
  \]
  In this notation, we have
  \begin{align*}
    w(B)(t)
    &= \big(\textstyle\sum_{x\in X\setminus B}t(x), m\mapsto |\{x\in B\mid t(x) = m\}|\big)
    \\
    &= (\xsum{X\setminus B}t, (t\downarrow B)).
  \end{align*}
  With subtraction of bags defined as above by $(a-b)(y) = \max(0, a(y)-b(y))$
  for $a,b\in \Bag Y$, we have
  \[
    (t\downarrow B) - (t\downarrow S)
    = (t\downarrow (B\setminus S))
    \qquad\text{for }S\subseteq B\subseteq X.
  \]
  Then we compute as follows, for $S\subseteq B\subseteq X$:
  \allowdisplaybreaks
  \begin{align*}
    &\phantom{=}\,\,\,\fpair{w(S),F\chi_S^B,w(B\!\setminus\! S)}(t)
      \\
    &= (w(S)(t),F\chi_S^B(t),w(B\!\setminus\! S)(t))
      \\
    &= \big((\xsum{X\setminus S}t,(t\downarrow S)),~
      (\xsum{X\setminus B} t, \xsum{B\setminus S}t, \xsum{S}t),~
      (\xsum{X\setminus (B\setminus S)}t,(t\downarrow (B\setminus S)))\big)
    \\
    &=
      \begin{array}[t]{rll}
      \big(
        & (\xsum{X\setminus B} t + \smash{\underbrace{\csum(t\downarrow (B \setminus S))}_{
          \xsum{B\setminus S}t}}
          , (t\downarrow S)),
      &(\xsum{X\setminus B} t, \csum(t\downarrow (B\setminus S)), \csum(t\downarrow S)),\\
      && (\xsum{X\setminus B} t + \xsum{S} t, ((t\downarrow B) - (t\downarrow S)))
      \big)
      \end{array}
    \\
    &=
      \begin{array}[t]{r@{}l}
      \big(
        & (\xsum{X\setminus B} t + \csum((t\downarrow B) - (t\downarrow S)), (t\downarrow S)), \\
      &(\xsum{X\setminus B} t, \csum((t\downarrow B) - (t\downarrow S)), \csum(t\downarrow S)),\\
      & (\xsum{X\setminus B} t + \csum(t\downarrow S), ((t\downarrow B) - (t\downarrow S)))
      \big)
      \end{array}
    \\
    &\overset{(*)}{=} \op{update}((t\downarrow S), (\xsum{X\setminus B} t, (t\downarrow B)))
    \\
    &= \op{update}\big((m\in M_{\neq 0}\mapsto |\{x\in S\mid t(x) = m\}|), w(B)(t)\big)
      \\
    &= \op{update}\big(\mbraces{m\in M_{\neq 0}\mid (m,x)\in \flat(t), x\in  S}, w(B)(t)\big)\\
    &= \op{update}\big(\mbraces{a \in A\mid (a,x)\in \flat(t), x\in S}, w(B)(t)\big),
  \end{align*}
  where the step labelled~$(*)$ uses the definition of $\op{update}$:
  \[
    \op{update}(\ell, (r, c)) =
    ((r + \Sigma(c-\ell), \ell),
    (r, \Sigma(c-\ell), \Sigma(\ell)),
    (r + \Sigma(\ell), c - \ell)).
    \tag*{\proofbox}
  \]
  \let\proofbox\relax
\end{proof}
\noindent In order to determine the run time factor of the above refinement
interface, we need to describe how we handle elements of
$W=M\times\Bagf(M_{\neq 0})$ in the routines of the refinement interface. We
implement a bag in $\Bagf(M_{\neq 0})$ as a balanced search tree with keys
$M_{\neq 0}$ and values $\N$.  In addition to the standard structure of a
balanced search tree, we store in every node the value $\Sigma(b)$, where $b$ is
the bag encoded by the subtree rooted at that node. Hence, for every bag
$b\in\Bagf(M_{\neq 0})$, the value $\Sigma(b)$ is immediately available at the
root node of the search tree for $b$. For \copar, we have implemented the basic
operations on balanced search trees following Adams~\cite{Adams93}. For the
complexity analysis, we prove that maintaining the values $\Sigma(b)$ in the
nodes only adds constant overhead to the operations on search trees, so that
we obtain
\begin{proposition}\label{P:monoids-fast}
  The above refinement interface for $M^{(-)}$ has run time factor
  \[
    \rifactor(n,m) = \log\min(|M|,m).
  \]
\end{proposition}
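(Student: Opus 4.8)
The plan is to verify the two conditions of \autoref{D:runTimeFactor} for the claimed value $\rifactor(n,m)=\log\min(|M|,m)$, putting almost all of the work into condition~(1). The first and most important step is a counting bound: every bag in $\Bagf(M_{\neq 0})$ that arises as the bag component of a weight during a run of the algorithm on $c\colon C\to M^{(C)}$ has at most $\min(|M|,m)$ distinct keys. Indeed, the keys occurring in such a bag are non-zero monoid values $t(x)$ of a single $t=c(x)$, of which there are at most $|M_{\neq 0}|<|M|$; and since each such value labels a distinct outgoing edge of $x$, there are also at most $\deg(x)\le m$ of them. Representing each bag as a balanced search tree keyed by $M_{\neq 0}$ (following \cite{Adams93}) therefore yields trees of depth $\CO(\log\min(|M|,m))$, so that every individual tree primitive---lookup, insertion, deletion, split and join---runs in $\CO(\log\min(|M|,m))$ time. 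Condition~(2) is then immediate: values in $F3=M^{(3)}$ are triples of monoid elements, and equality of two such triples is decided by a constant number of monoid-element comparisons, hence in $\CO(1)\subseteq\CO(\log\min(|M|,m))$.

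For condition~(1) I would show that $\op{init}$ and $\op{update}$ perform $\CO(|\ell|)$ of the above tree primitives, so that it suffices to prove that the augmented values $\Sigma(b)$ stored in the nodes can be kept up to date at $\CO(1)$ cost per touched node. For $\op{init}(f,\ell)=(0,\ell)$ the tree for $\ell$ is assembled by $|\ell|$ single insertions; each increments one multiplicity by one, so the node's local contribution $\underbrace{m+\cdots+m}_{\mathrm{count}}$ changes by a single addition of $m$, and the subtree sums are recombined along the $\CO(\log\min(|M|,m))$-length root path by one constant-time application of $\Sigma(\text{node})=\Sigma(\text{left})+(\text{local contribution})+\Sigma(\text{right})$ per node; the scalar $\Sigma(\ell)$ is finally read off the root in $\CO(1)$. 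For $\op{update}(\ell,(r,c))$ I would form the bag difference $c-\ell$ by processing the at most $|\ell|$ keys of $\ell$, using $S\subseteq B$ (so that $\ell(m)\le c(m)$ and $(c-\ell)(m)=c(m)-\ell(m)$), read $\Sigma(c-\ell)$ and $\Sigma(\ell)$ off the two roots, and assemble the output triple with the constant number of additional monoid additions prescribed by the defining equation of $\op{update}$.

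The hard part is exactly this $\CO(1)$-per-node maintenance claim, and its difficulty comes from the absence of subtraction in $M$. Rebalancing rotations are unproblematic, since they only re-associate already-stored child sums and leave every local contribution unchanged, costing $\CO(1)$ each, so recomputation is confined to the nodes whose multiplicity actually changes. The delicate point is that when a multiplicity drops in forming $c-\ell$ one cannot derive the new local contribution $\underbrace{m+\cdots+m}_{(c-\ell)(m)}$ from the old one by cancelling, but must rebuild it. I would dispatch this either directly---for the idempotent monoids $(\N,\max,0)$ and $(\Powf(64),\cup,\emptyset)$ one has $\underbrace{m+\cdots+m}_{k}=m$ for $k\ge 1$, and for $(\Z,+,0)$, $(\R,+,0)$, $(\mathds{C},+,0)$ the required multiple is a single native multiplication---so that the rebuild is $\CO(1)$ and the tree depth $\log\min(|M|,m)$ is the only surviving factor; for a fully abstract monoid the same multiple is obtained by repeated doubling in $\CO(\log m)$ additions, which matches $\log\min(|M|,m)$ whenever $|M|\ge m$. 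Combining the counting bound with the constant-overhead maintenance then yields total time $\CO(|\ell|\cdot\log\min(|M|,m))$ for both routines, which is exactly the asserted run time factor.
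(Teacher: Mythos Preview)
Your overall strategy---balanced search trees keyed by $M_{\neq 0}$ with subtree-sum augmentation, the counting bound giving tree size at most $\min(|M|,m)$, and $O(1)$ maintenance under rotations---is exactly the paper's approach, and your treatment of condition~(2) and of $\op{init}$ coincides with it. You are in fact more careful than the paper about one point it glosses over: the paper effectively writes $\Sigma(y)=\Sigma(y_1)+\Sigma(y_2)$ for an ancestor~$y$ on the root path, tacitly assuming that the modified node's own contribution $k\cdot m$ is available in constant time, whereas you correctly observe that after a decrement one cannot derive $(k-1)\cdot m$ from $k\cdot m$ by a single monoid operation when~$M$ is non-cancellative.

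Your fix, however, does not establish the full claim. Repeated doubling costs $O(\log k)\le O(\log m)$ per recomputation, so the run time factor you obtain is $\log m$, not $\log\min(|M|,m)$; you yourself note that this matches only when $|M|\ge m$. For finite~$M$ with $|M|<m$ the proposition asserts factor $\log|M|$---this is precisely what the first case of \autoref{T:cancel} relies on---and your argument does not reach it. A simple uniform fix that avoids any case distinction on~$M$: at each node store, alongside the count~$k$, a stack of prefix multiples $m,2m,\ldots,km$. An insert pushes $(\text{top})+m$ and a delete pops, so the current local contribution is always the stack top and is maintained in $O(1)$ in any commutative monoid. The extra space is $\sum_m k_m=|\flat(c(x))|$ per state, i.e.\ $O(m)$ overall, and the stacks are built by exactly the insertions you already account for in $\op{init}$ and when constructing the tree for~$\ell$ in $\op{update}$; since trees are only ever built by inserts and thereafter only shrink via deletes, the stack invariant is preserved.
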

\noindent
More precisely, the above functions $\op{init}(f, \ell)$ and
$\op{update}(\ell, (r,c))$ can be computed in time
$\CO(|\ell|\cdot\log\min(|M|, m))$, where $m$ is the number of edges in the
input coalgebra, and values in $M^{(3)}$ can be compared in constant time.
\begin{proof}
  For a node $x$ in a binary search tree encoding a bag in $\Bag(M_{\neq
    0})$ as described above, we write $\Sigma(x)$ for the value in $M$ stored at that
  node.\smnote{}
  
  Note that our search trees cannot have more nodes than the size $|M|$ of
  their index set, and the number of nodes is also not greater than
  the number $m$ of all edges. Hence, their size is bounded by
  $\min(|M|, m)$.

  Recall from algorithm textbooks (e.g.~\cite[Section~14]{CormenEA}) that
  the key operations $\op{insert}$, $\op{delete}$ and $\op{search}$ have
  logarithmic time complexity in the size of a given balanced binary search
  tree.

  We need to argue that maintaining the values $\Sigma(x)$ in the nodes does not
  increase this complexity. This is obvious for the $\op{search}$ operation as
  it does not change its argument search tree. For $\op{insert}$ and
  $\op{delete}$, \lsnote{} these
  operations essentially trace down one path starting at the root to a node (at
  worst, a leaf) of the given search tree (as described e.g.\ by~\cite{CormenEA}).
  Additionally, we need to rebalance the search tree after inserting or deleting
  a node. This is done by tracing back the same path to the root and (possibly)
  performing \emph{rotations} on the nodes occurring on that path. Rotations are
  local operations changing the structure of a search tree but preserving the
  inorder key ordering of subtrees (see \autoref{fig:rotations}).
  \begin{figure}
    \centering
    \begin{tikzpicture}
      \begin{scope}[searchtree, only math nodes]
        \node[root] (t1) {}
        child { node (y1) {y}
          child { node (x1) {x}
            edge from parent[draw=none]
            child { node[subtree] {\alpha} }
            child { node[subtree] {\beta} }
          }
          child { node[subtree] (gamma1) {\gamma} }
        } ;
        \path[draw] (y1) to (x1);
      \end{scope}
      \begin{scope}[searchtree, only math nodes]
        \node[root] (t2) at (5cm,0) {}
        child { node (x2) {x}
          child { node[subtree] (alpha2) {\alpha} }
          child { node (y2) {y} edge from parent[draw=none]
            child { node[subtree] {\beta} }
            child { node[subtree] {\gamma} }
          }
        } ;
        \path[draw] (y2) to (x2);
      \end{scope}
      \coordinate (lefttree) at ([yshift=2mm]gamma1.south east);
      \coordinate (righttree) at (alpha2.south west |- lefttree);
      \begin{scope}[draw,->,shorten <= 2.2mm,shorten >= 2mm]
      \path
      ([yshift=3mm]lefttree) edge node[above]{right rotation} ([yshift=3mm]righttree)
      ([yshift=1mm]righttree) edge node[below]{left rotation} ([yshift=1mm]lefttree)
      ;
      \end{scope}
    \end{tikzpicture}
    \caption{Rotation operations in binary search trees.}\label{fig:rotations}
  \end{figure}
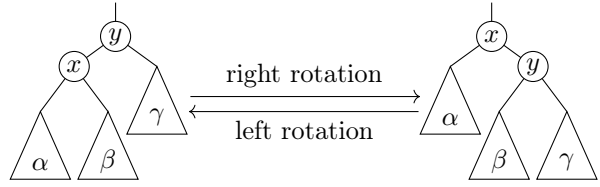
  
  Clearly, in order to maintain the correct summation values in a
  search tree under a rotation, we only need to adjust those values in
  the nodes $x$ and $y$. This is achieved as follows: 
  \begin{align*}
    \Sigma(x) &= \Sigma(\alpha) + \Sigma(\beta);
    \qquad
    \Sigma(y) = \Sigma(x) + \Sigma(\gamma)
    &
    \text{for left rotation},
    \\
    \Sigma(y) &= \Sigma(\beta) + \Sigma(\gamma);
    \qquad
    \Sigma(x) = \Sigma(\alpha) + \Sigma(y)
    &\text{for right rotation}.
  \end{align*}
  In addition, when inserting or deleting a node $x$ we must recompute the
  $\Sigma(y)$ of all nodes $y$ along the path from the root to $x$ when we trace
  that path back to the root. This can clearly be performed in constant time for
  each node $y$ since $\Sigma(y)$ is the sum (in $M$) of $\Sigma(y_1)$ and
  $\Sigma(y_2)$, which are stored at the child nodes~$y_1$ and~$y_2$ of~$y$,
  respectively.

  In summary, we see that maintaining the desired summation values only requires
  an additional constant overhead in the backtracing step. Consequently, the
  operations of our balanced binary search trees run in time
  $\CO(\log \min(|M|,m))$. Subtraction $c - \ell$ of bags $c,\ell$ performs
  $|\ell|$-many calls to $\op{delete}$ on~$c$, and computing the sum
  $\csum(\ell)$ takes time $\CO(|\ell|)$, since $\ell$, the bag of \emph{labels}
  passed to $\op{update}$, is not represented as a search tree.  Hence, we
  obtain the desired overall time complexity $\CO(|\ell|\cdot \log \min(|M|,m))$
  of $\op{update}$.

  Similarly, for $\op{init}(f,\ell)$, we need $|\ell|$-many calls to
  $\op{insert}$ in order to initialize the search tree
  representing~$\Bag (M_{\neq 0})$.
\end{proof}
\begin{remark}
  It is no coincidence that we use $\Bag (M_{\neq 0})$ in the refinement
  interface. In fact, for every set $X$, the set $\Bag X$, with union of bags as
  addition, is the free commutative monoid on $X$. Moreover, $\Bag X$ is
  cancellative, so that we may use a form of subtraction on bags. Thus, we see
  that $\Bag(M_{\neq 0})$ is a canonical cancellative monoid containing
  $M_{\neq 0}$ (via the identification of elements of $M_{\neq 0}$ with
  singleton bags). Moreover, the summation map
  \[
    \csum\colon \Bag M\to M
  \]
  is the canonical unique monoid homomorphism freely extending the identity map
  on $M$. Thus, this map allows us to go back from bags to monoid elements.
  \smnote{}\lsnote{}  This is essentially the point of Eilenberg-Moore algebras in
  general (cf.~\autoref{R:EilenbergMoore}).
\end{remark}

\begin{corollary}
  Let $M$ be any commutative monoid. Then partition refinement on  weighted
  transition systems $c\colon C\to M^{(C)}$ with $n$ states and $m$ transitions
  runs in time $\CO((m+n)\cdot \log n\cdot\log\min(|M|,m))$.
\end{corollary}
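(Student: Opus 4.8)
The plan is to read off this corollary as a direct instance of \autoref{T:comp}, so the work consists entirely in verifying that the hypotheses of that theorem are met for the functor $F = M^{(-)}$ and then substituting the appropriate run time factor. First I would record that $M^{(-)}$ is zippable: it appears in \autoref{exManyFunctors}\ref{exManyFunctors:3}, and all functors listed there were noted to be zippable. Second, a refinement interface for $M^{(-)}$ has just been exhibited, and the proposition immediately preceding \autoref{P:monoids-fast} shows that its $\op{init}$ and $\op{update}$ routines satisfy the two axioms of \autoref{D:refinement-interface}. Thus $M^{(-)}$ is a zippable functor equipped with a (valid) refinement interface, which is exactly what \autoref{T:comp} requires as input.

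Next I would invoke \autoref{P:monoids-fast}, which states that this refinement interface has run time factor $\rifactor(n,m) = \log\min(|M|,m)$. This single quantity already bundles together both clauses of \autoref{D:runTimeFactor}: the cost of the $\op{init}$ and $\op{update}$ calls (realized via the balanced search trees that cache the running sums $\Sigma(b)$ at each node) and the constant-time comparison of values in $M^{(3)}$. Substituting $\rifactor(c) = \log\min(|M|,m)$ into the bound $\CO((m+n)\cdot\log n\cdot\rifactor(c))$ furnished by \autoref{T:comp} yields $\CO((m+n)\cdot\log n\cdot\log\min(|M|,m))$, which is precisely the claimed complexity.

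There is no genuine obstacle here: the corollary is purely an assembly of \autoref{T:comp} and \autoref{P:monoids-fast}, with zippability supplied by \autoref{exManyFunctors}\ref{exManyFunctors:3}. The only point worth a moment's care is purely bookkeeping, namely that the factor stated in \autoref{P:monoids-fast} as a function $\rifactor(n,m)$ of the state and edge counts is exactly the quantity $\rifactor(c)$ that \autoref{T:comp} expects; this identification is licensed by the convention fixed in \autoref{D:runTimeFactor}, which allows writing $\rifactor(n,m)$ for $\rifactor(c)$ whenever the run time factor depends only on $n$ and $m$.
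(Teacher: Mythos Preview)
Your proposal is correct and matches the paper's own proof, which consists of the single sentence ``Indeed, this is immediate by \autoref{P:monoids-fast} and \autoref{T:comp}.'' You have simply spelled out the verification of the hypotheses (zippability, existence of the refinement interface, and the value of the run time factor) that the paper leaves implicit.
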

\noindent
Indeed, this is immediate by \autoref{P:monoids-fast} and \autoref{T:comp}.

\section{Weighted Tree Automata}
\label{sec:wta}
We proceed to take a closer look at weighted tree automata as a worked example.
It is this example that mainly motivates the discussion of
non-cancellative monoids in the last section, since in this case the generic
algorithm improves on the run time of the best known specific algorithms in the
literature.

Weighted tree automata simultaneously generalize tree automata and
weighted (word) automata. A partition refinement construction for
weighted automata (w.r.t.~weighted bisimilarity) was first considered
by Buchholz~\cite[Theorem~3.7]{Buchholz08}. Högberg et al.\ first
provided an efficient partition refinement algorithm for tree
automata~\cite{HoegbergEA09}, and moreover for weighted tree
automata~\cite{HoegbergEA07}. Generally, tree automata differ from
word automata in replacing the input alphabet, which may be seen as
sets of unary operations, with an algebraic signature~$\Sigma$:

\begin{definition}\label{D:wta}
  Let $(M,+,0)$ be a commutative monoid. A (bottom-up) \emph{weighted tree
    automaton} (WTA) (over $M$) consists of a finite set $X$ of
  states, a finite signature $\Sigma$, an output map $f\colon X\to M$,
  and for each $k\ge 0$, a transition map
  $\mu_k\colon \Sigma_k\to M^{X^k\times X}$, where $\Sigma_k$ denotes
  the set of $k$-ary input symbols in $\Sigma$; the maximum arity of
  symbols in~$\Sigma$ is called the \emph{rank}.
\end{definition}

\noindent%
Given a weighted tree automaton $(X, f, (\mu_k)_{k \in \N})$ as in
\autoref{D:wta} we see that it is, equivalently, a finite coalgebra
for the functor $FX = M \times M^{(\Sigma X)}$, where we identify the
signature $\Sigma$ with its
corresponding polynomial functor $\Sigma X = \coprod_{\arity[\scriptstyle]{\sigma}{k} \in
  \Sigma} X^k$. Indeed, $(\mu_{k})_{k\ge 0}$ is equivalently
expressed by a map
\begin{equation}
  \bar \mu \colon X\to M^{(\Sigma X)}
  \quad
  \text{with}
  \quad
  \bar \mu (x)(\sigma(x_1,\ldots,x_k))
  := \mu_k(\sigma)((x_1,\ldots,x_k), x).
  \label{eqMuBar}
\end{equation}
Note that $\bar \mu(x)$ is finitely supported because $X$ and $\Sigma
X$ are finite.
Thus we obtain a coalgebra
\begin{equation}\label{coalgWTADef}
  g\colon X\to M\times M^{(\Sigma X)}
  \quad
  \text{with}
  \quad
  g(x) = (f(x), \bar \mu(x)).
\end{equation}

\begin{example}\label{ex:wta}
  Let $\Sigma = \{\arity{*}{2}\}$ be a signature with a single binary symbol.
  Consider the WTA $(X, f, \mu_2)$ for~$\Sigma$ over the monoid
  $(\Z,\max,-\infty)$ (simply $(\Z,\max)$ in the following)
  with $X= \{a,b,c,d\}$ and $f\colon X\to\Z$ being the constant function $x\mapsto 1$ and $\mu_2$
  given by
  \begin{alignat*}{4}
    \mu_2(*)(b, a, a) &= 3,\quad & \mu_2(*)(a, a, a) &= 5,\quad &\mu_2(*)(a, b, b) &= 5,\quad & \mu_2(*)(b, b, b) &= 2,\\
    \mu_2(*)(b, a, c) &= 5,\quad & \mu_2(*)(c, a, c) &= 7,\quad &\mu_2(*)(b, a, d) &= 5,\quad & \mu_2(*)(a, c, d) &= 7\quad\text{ and}\\
    \mu_2(*)(x, y, z) &= -\infty \quad &
    \text{otherwise.} %
  \end{alignat*}
  This WTA is equivalently expressed as the coalgebra
  $g\colon X\to\Z\times(\Z,\max)^{(X\times X)}$ given by
  $g(x) = (f(x), \bar\mu(x))$ with $\bar\mu\colon X\to (\Z,\max)^{(X\times X)}$
  defined as
  \begin{alignat*}{4}
    \bar\mu(a)(b, a) &= 3,\quad & \bar\mu(a)(a, a) &=5,\quad & \bar\mu(b)(a, b) &= 5,\quad & \bar\mu(b)(b, b) &= 2, \\
    \bar\mu(c)(b, a) &= 5, & \bar\mu(c)(c, a) &=7, & \bar\mu(d)(b, a) &= 5, & \bar\mu(d)(a, c) &= 7,
  \end{alignat*}
  where again $\bar\mu$ is $-\infty$ in all other cases. In the syntax of our tool,
  the map $\bar \mu\colon X\to \Z^{(X\times X)}$ can be written as:
\begin{verbatim}
(Z,max)^(X×X)
a: {(b, a): 3, (a, a): 5}
b: {(a, b): 5, (b, b): 2}
c: {(b, a): 5, (c, a): 7}
d: {(b, a): 5, (a, c): 7}
\end{verbatim}
\end{example}

\medskip\noindent%
For the minimization of weighted tree automata, two bisimulation notions are
considered in the literature~\cite{Buchholz08,HoegbergEA07}: \emph{forward} and
\emph{backward} bisimulation. Here, we treat backward
bisimulation, as it corresponds to coalgebraic behavioural equivalence.

\begin{definition}[Högberg et al.~{\cite[Def.~16]{HoegbergEA07}}]
  A \emph{backward bisimulation} on a weighted tree automaton
  $(X, f, (\mu_k)_{k \in \N})$ is an equivalence relation
  $R\subseteq X \times X$ such that for every $(p,q) \in R$, $\sigma/k\in \Sigma$, and
  every $L\in \{D_1\times\cdots\times D_k\mid D_1,\ldots,D_k\in X/R\}$ the
  following equation holds:
  \begin{equation}\label{eq:sum}
    \sum_{w\in L}
    \mu_k(\sigma)(w,p)
    =
    \sum_{w\in L}
    \mu_k(\sigma)(w,q).
  \end{equation}
\end{definition}
\begin{remark}\label{rem:w-in-L}
  Note that $L$ consists of the $w\in X^k$ such that $e^k(w) = L$, where
  $e^k\colon X^k\epito (X/R)^k$ is the $k$-fold power of the canonical quotient
  map $e\colon X \to X/R$.
\end{remark}
\begin{example}
  The equivalence relation $R = \{a,b\}^2\cup \{c\}^2\cup \{d\}^2$ is a backward
  bisimulation for the automaton defined in \autoref{ex:wta}.
\end{example}
\noindent
We can regard the output map $f$ as a transition map for a constant symbol, so
it suffices to consider the functor $FX = M^{(\Sigma X)}$ (and in fact the
output map is ignored in the definition of backward bisimulation given above and
in \cite{HoegbergEA07}). Then, we obtain the following result:
\begin{proposition}\label{wtaBackward} Backward bisimulation of
  weighted tree automata coincides with behavioural equivalence of
  $M^{(\Sigma(-))}$-coalgebras.
\end{proposition}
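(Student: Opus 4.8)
The plan is to reduce both notions to a single purely equational criterion on the quotient map of~$R$, and then invoke the standard fact that behavioural equivalence is exactly ``being identified by some coalgebra morphism''. Throughout I work with the functor $F = M^{(\Sigma(-))}$ and the coalgebra $\bar\mu\colon X\to M^{(\Sigma X)}$ of \eqref{eqMuBar}, ignoring the output map~$f$, which enters neither notion (as observed just before the statement). I fix an equivalence relation $R$ on~$X$ and write $e\colon X\epito X/R$ for its quotient map, so that $Fe = M^{(\Sigma e)}$.

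The crux is the following computation, which I would carry out first. Unfolding the action of~$\Sigma$ and of $M^{(-)}$ on the map~$e$, the element $M^{(\Sigma e)}(\bar\mu(p))\in M^{(\Sigma(X/R))}$ is obtained by summing $\bar\mu(p)$ over the fibres of $\Sigma e$. An element of $\Sigma(X/R)$ has the form $\sigma(D_1,\dots,D_k)$ with $\sigma/k\in\Sigma$ and $D_1,\dots,D_k\in X/R$, and since $\Sigma e$ preserves the leading operation symbol, its fibre over $\sigma(D_1,\dots,D_k)$ consists exactly of the terms $\sigma(x_1,\dots,x_k)$ with $(x_1,\dots,x_k)\in D_1\times\cdots\times D_k$ --- which is precisely the set $L$ of \autoref{rem:w-in-L}. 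Using $\bar\mu(p)(\sigma(x_1,\dots,x_k)) = \mu_k(\sigma)((x_1,\dots,x_k),p)$ from \eqref{eqMuBar}, and noting that all relevant sums are finite (as $X$ is finite and $\bar\mu(p)$ is finitely supported), I obtain
\[
  M^{(\Sigma e)}(\bar\mu(p))\big(\sigma(D_1,\dots,D_k)\big)
  = \sum_{w\in D_1\times\cdots\times D_k}\mu_k(\sigma)(w,p).
\]
Since two finitely supported maps on $\Sigma(X/R)$ agree iff they agree at every $\sigma(D_1,\dots,D_k)$, this shows that $M^{(\Sigma e)}(\bar\mu(p)) = M^{(\Sigma e)}(\bar\mu(q))$ holds if and only if \eqref{eq:sum} holds for the pair $(p,q)$, for all $\sigma/k$ and all $L=D_1\times\cdots\times D_k$.

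With this in hand the proposition follows formally. The displayed equivalence says that $R$ is a backward bisimulation iff $R \subseteq \ker(M^{(\Sigma e)}\cdot\bar\mu)$, i.e.\ iff $M^{(\Sigma e)}\cdot\bar\mu$ is constant on $R$-blocks, which is exactly the condition for the quotient $e$ to lift (uniquely) to a coalgebra structure $\bar c\colon X/R \to F(X/R)$ making $e$ an $F$-coalgebra morphism. Thus backward bisimulations on $(X,\bar\mu)$ are precisely the kernels of coalgebra morphisms out of $(X,\bar\mu)$. For the final step I would argue pointwise: if $p\sim q$ via a morphism $h$ with $h(p)=h(q)$, then $\ker h$ is an equivalence relation containing $(p,q)$ through whose quotient $h$ factors, so $\ker h$ is a backward bisimulation relating $p$ and~$q$; conversely, any backward bisimulation $R$ with $(p,q)\in R$ yields the coalgebra morphism $e$ with $e(p)=e(q)$, whence $p\sim q$. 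Hence the two notions coincide.

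The main obstacle is the bookkeeping in the central computation: correctly identifying the fibre of $\Sigma e$ over $\sigma(D_1,\dots,D_k)$ with the rectangle $L=D_1\times\cdots\times D_k$ (the content of \autoref{rem:w-in-L}) and re-indexing the finite-support sum defining $M^{(\Sigma e)}$ accordingly, all while keeping track of the convention that distinct operation symbols index disjoint summands of the coproduct~$\Sigma(X/R)$. The remaining passage from ``$R$ is a congruence'' to ``behavioural equivalence is the coarsest backward bisimulation'' is routine for set functors, since kernels of coalgebra morphisms coincide with the equivalence relations whose quotient admits a (necessarily unique) coalgebra structure.
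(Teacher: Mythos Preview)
Your proof is correct and follows essentially the same route as the paper: both hinge on the computation that $M^{(\Sigma e)}(\bar\mu(p))(\sigma(D_1,\dots,D_k)) = \sum_{w\in L}\mu_k(\sigma)(w,p)$ via identifying the fibres of $\Sigma e$, and then conclude that $R$ is a backward bisimulation iff the quotient $e$ is a coalgebra morphism. The only difference is cosmetic: the paper stops at that characterization (leaving the passage to behavioural equivalence implicit), whereas you spell out the final pointwise argument via $\ker h$, which is harmless extra detail.
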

\begin{proof}
  We show that for every $M^{(\Sigma(-))}$-coalgebra
  $\bar \mu\colon X\to M^{(\Sigma X)}$ defined as in \eqref{eqMuBar}, an
  equivalence relation $R\subseteq X \times X$ is a backward bisimulation on the
  corresponding WTA iff the canonical quotient map
  $e\colon X\twoheadrightarrow X/R$ is an $M^{(\Sigma(-))}$-coalgebra
  homomorphism with domain $(X,\bar \mu)$. First, let
  $x\in X, \arity{\sigma}{k}\in \Sigma$, $D_1,\dots,D_k\in X/R$, and
  $L=D_1\times\dots\times D_k$ for some equivalence relation $R$. Then we have
  $\sigma(D_1,\dots,D_k) \in \Sigma(X/R)$, and recalling
  Remark~\ref{rem:w-in-L}, we can rewrite the sums in~\eqref{eq:sum} in terms of
  the map $M^{(\Sigma e)}\colon M^{(\Sigma X)}\to M^{(\Sigma(X/R))}$ as follows:
  \begin{align*}
    \sum_{w\in L} \mu_k(\sigma)(w,x)
    &= \sum_{\mathclap{\substack{w\in X^k\\e^k(w) =L}}} \mu_k(\sigma)(w,x)
    = \sum_{\mathclap{\substack{w\in X^k\\\tau\in \Sigma_k\\e^k(w) =L\\\tau=\sigma}}} \mu_k(\tau)(w,x)
    = \sum_{\mathclap{\substack{w\in X^k\\\tau\in \Sigma_k\\\Sigma e(\tau (w)) =\sigma(L)}}} \mu_k(\tau)(w,x)
    =
    \sum_{\mathclap{\substack{\tau(w)\in \Sigma X\\
    \Sigma e(\tau(w)) =\sigma(L)}}} \mu_k(\tau)(w,x)
    \\ & \overset{\mathclap{\text{\eqref{eqMuBar}}}}{=}~~
    \sum_{\mathclap{\substack{\tau(w)\in \Sigma X\\
    \Sigma e(\tau(w)) =\sigma(L)}}} \bar \mu(x)(\tau(w))
    =
    \sum_{\mathclap{\substack{t\in \Sigma X\\
          \Sigma e(t) =\sigma(L)}}} \bar \mu (x)(t)
    = M^{(\Sigma e)}(\bar\mu(x))(\sigma(L)).
  \end{align*}
  Hence, for every equivalence relation $R\subseteq X\times X$ we have the
  following chain of equivalences:
  \begin{align*}
    &\text{$R$ is a backward bisimulation}
    \\
    \Leftrightarrow~
    & \forall (p,q)\in R, \arity{\sigma}{k}\in \Sigma, L\in (X/R)^k\colon&
    \sum_{w\in L} \mu_k(\sigma)(w,p) &=\sum_{w\in L} \mu_k(\sigma)(w,q)
    \\
    \Leftrightarrow~
    & \forall (p,q)\in R, \arity{\sigma}{k}\in \Sigma, L\in (X/R)^k\colon&
    M^{(\Sigma e)}(\bar\mu(p))(\sigma(L))
    &= M^{(\Sigma e)}(\bar\mu(q))(\sigma(L))
    \\
    \Leftrightarrow~
    & \forall (p,q)\in R, t\in \Sigma(X/R)\colon&
      M^{(\Sigma e)}(\bar\mu(p))(t)
      &= M^{(\Sigma e)}(\bar\mu(q))(t)
      \\
    \Leftrightarrow~
    & \forall (p,q)\in R\colon &
      M^{(\Sigma e)}(\bar\mu(p)) &= M^{(\Sigma e)}(\bar\mu(q))
      \\
    \Leftrightarrow~
    & \forall (p,q)\in R\colon &
      (M^{(\Sigma e)}\cdot\bar\mu)(p) &= (M^{(\Sigma
        e)}\cdot\bar\mu)(q)
    \end{align*}
    The last equation holds iff there
    exists a map $r\colon X/R\to M^{(\Sigma(X/R))}$ such that $r\cdot e
    = M^{(\Sigma e)}\cdot\bar\mu$, that is,
    iff $e$ is a coalgebra homomorphism:
    \[
      \begin{tikzcd}[row sep = 7mm,baseline=(B.base)]
        X
        \arrow{r}{\bar \mu}
        \arrow[->>]{d}[swap]{e}
        &
        M^{(\Sigma X)}
        \arrow[->>]{d}{~M^{(\Sigma e)}}
        \\
        X/R
        \arrow[dashed]{r}{r}
        &
        |[alias=B]| M^{(\Sigma (X/R))}
      \end{tikzcd}
      \tag*{\proofbox}
    \]
    \let\proofbox\relax
\end{proof}

\begin{example}
  As we mentioned already, weighted tree automata subsume two other notions:
  \begin{enumerate}
  \item For $M$ being the Boolean monoid (w.r.t.~\emph{or}), we obtain
    (bottom-up) non-deterministic tree automata and their bisimilarity.
  \item If all operations $\sigma \in \Sigma$ have arity one, then
    we obtain ordinary weighted automata and weighted bisimilarity.
  \end{enumerate}
\end{example}

\noindent Since $M^{(\Sigma (-))}$ is composed of~$M^{(-)}$ and a polynomial
functor~$\Sigma$, we have all the refinement interfaces defined already in
previous work and in \autoref{sec:monoidvalued}, distinguishing like Högberg et
al.~\cite{HoegbergEA07} between cancellative and non-cancellative monoids and
obtaining run time factors of $\rifactor(n, m) = 1$ and
$\rifactor(n, m) = \log\min(|M|, m)$ respectively. Both cases appear in the
applications of weighted tree automata to natural language processing, for example the
cancellative monoid of real numbers with addition
$(\R,+,0)$~\cite{PetrovEA06Learning,PetrovKlein07Improved}, or the
non-cancellative monoid $(\N,\max,0)$ from the tropical semiring~\cite{MayKnight06}.

In the following, we
additionally distinguish between finite and infinite monoids.\hpnote{}

\begin{theorem}\label{T:cancel}
  Let $M$ be a commutative monoid. On weighted tree automata with~$n$
  states,~$k$ transitions, and rank~$r$, our algorithm runs in time
  \begin{enumerate}
  \item\label{T:cancel:1} \(\CO((r^2k + r n)\cdot \log(k+n))\) if $M$ is cancellative or finite,
    and
  \item \(\CO\big((rk +n)\cdot \log(k+n) \cdot (\log k + r)\big)\) otherwise.
  \end{enumerate}
\end{theorem}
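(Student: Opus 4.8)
The plan is to obtain both bounds by instantiating \autoref{T:comp} to the composite functor $FX = M^{(\Sigma X)}$ and tracking the sizes produced by the modular reduction of \autoref{sec:des}. By \autoref{wtaBackward}, minimizing a weighted tree automaton (with respect to backward bisimulation) coincides with minimizing its associated coalgebra $\bar\mu\colon X\to M^{(\Sigma X)}$, so I work with $F=M^{(\Sigma(-))}$ throughout. As this functor is the composite of $M^{(-)}$ with the polynomial functor $\Sigma$ and need not be zippable, the algorithm first transforms the input into a coalgebra for $F'X = M^{(X)} + \Sigma X$ on the state set $C'=X+Y$, where $Y\subseteq\Sigma X$ collects the $\Sigma$-terms occurring in the support of some $\bar\mu(x)$; this yields component maps $c_X\colon X\to M^{(Y)}$ and the inclusion $c_Y\colon Y\to\Sigma X$.

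First I would bound the size of the transformed coalgebra. Since each term in $Y$ arises from at least one of the $k$ support entries, $|Y|\le k$, so the number of states is $n' = |X|+|Y| = \CO(n+k)$. For the edges, the layer $c_X$ contributes exactly $k$ edges (one per non-zero support entry of $\bar\mu$), while $c_Y$ contributes $\sum_{y\in Y}\op{ar}(y)\le r\,|Y|\le rk$ edges; hence $m' = \CO(rk)$ (using $r\ge 1$, so that $k\le rk$).

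Next I would compute the run time factor. By \autoref{R:rifactorSortMax} the modular interface for $F'$ has run time factor $\rifactor(c) = \max\bigl(\rifactor_{M^{(-)}}(c_X),\,\rifactor_\Sigma(c_Y)\bigr)$. The polynomial part satisfies $\rifactor_\Sigma(c_Y) = \op{rank}(c_Y)\le r$ by \autoref{expl:run-time-factor}. For the monoid part, \autoref{P:monoids-fast} gives $\rifactor_{M^{(-)}}(c_X) = \log\min(|M|,k)$, because $c_X$ has $k$ edges; moreover, when $M$ is cancellative the group embedding of \autoref{sec:mon:cancellative} yields factor $1$. Consequently, if $M$ is cancellative or finite then $\rifactor_{M^{(-)}}(c_X)=\CO(1)$ and $\rifactor(c)=r$, whereas if $M$ is infinite and non-cancellative then $\rifactor_{M^{(-)}}(c_X)=\log k$ and $\rifactor(c)=\max(\log k,\,r)$.

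Finally I would substitute into the bound $\CO((m'+n')\cdot\log n'\cdot\rifactor(c))$ of \autoref{T:comp}. In the cancellative-or-finite case this gives $\CO((rk+n)\cdot\log(k+n)\cdot r) = \CO((r^2k+rn)\cdot\log(k+n))$, which is part~\ref{T:cancel:1}; in the remaining case it gives $\CO((rk+n)\cdot\log(k+n)\cdot\max(\log k,r)) = \CO((rk+n)\cdot\log(k+n)\cdot(\log k + r))$, using $\max(\log k,r)=\Theta(\log k+r)$. The step requiring the most care is the bookkeeping in the reduction: one must check that the $\CO(rk)$ edges of the $\Sigma$-layer and the $k$ edges of the $M^{(-)}$-layer are counted correctly, and that \autoref{T:comp} multiplies the entire edge-plus-node count by a single run time factor, the maximum over the two component factors, since it is precisely this maximum that produces the products in the two stated bounds.
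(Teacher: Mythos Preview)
Your proposal is correct and follows essentially the same approach as the paper: reduce $M^{(\Sigma(-))}$ to $F'X=M^{(X)}+\Sigma X$ via the modularity mechanism, bound the resulting state and edge counts by $\CO(n+k)$ and $\CO(rk)$, take the run time factor as the maximum of the two component factors ($r$ for the polynomial part, and $1$ or $\log k$ for the monoid part according to whether $M$ is cancellative/finite or not), and plug into \autoref{T:comp}. The only cosmetic differences are that the paper takes the intermediate states to be pairs $(x,t)$ rather than just terms $t$ (giving $|K|=k$ exactly instead of $|Y|\le k$) and replaces $\max$ by the asymptotically equivalent sum when combining run time factors.
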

\begin{proof}
  \takeout{}%
The functor $FX = M^{(\Sigma X)}$ is first transformed into
$F'X = M^{(X)} + \Sigma X$ according to \autoref{sec:des}. Given a coalgebra
$c\colon C \to M^{(\Sigma C)}$ with $n = |C|$ states, this transformation
introduces a set~$K$ of intermediate states, one for every outgoing transition
from every $x \in C$:
\[
  K = \{(x,t) \in C\times \Sigma C \mid c(x)(t) \neq 0\},
\]
hence $|K| = k$. The given coalgebra structure yields the two
evident maps $c_1\colon C \to M^{(K)}$ and $c_2\colon K
\to \Sigma C$ given by
\[
  c_1(x)(x',t) =
    \begin{cases}
      c(x)(t) & \text{if $x' = x$} \\
      0 & \text{otherwise,}
    \end{cases}
         \qquad 
         \qquad 
  c_2(x,t) = t.
\]
It takes $k$ edges to encode $c_1$ and at most $k\cdot r$ edges to encode $c_2$.
Partition refinement is now performed on the following
$F'$-coalgebra:
\[
  C + K \xrightarrow{c_1 + c_2} M^{(K)} + \Sigma C
  \xrightarrow{M^{(\inr)} + \Sigma\inl} M^{(C+K)} + \Sigma(C+K),
\]
where $\inl\colon C\to C+K$ and $\inr\colon K\to C+K$ are the canonical
injections. This coalgebra on $C+K$ has $n' := |C + K| = n + k$ states and at
most $m'= (r+1)\cdot k$ edges. Since the refinement interface for $F'$ is a
combination of those of $M^{(-)}$ and $\Sigma$, its run time factor is given by
the maximum of the run time factors $p_M(c_1)$ and $p_\Sigma(c_2)$,
respectively, of those two refinement interfaces
(cf.~\autoref{R:rifactorSortMax}). We can further simplify this maximum to the
asymptotically equivalent\hpnote{} sum
\[
  \rifactor(c) = \rifactor_M(c_1) + \rifactor_\Sigma(c_2).
\]
Since $\rifactor_\Sigma(c_2) = r$ and the number of edges in $c_1$ is bounded by
$k$, we obtain, by \autoref{T:comp}, an overall time complexity of
\[
  \CO((m'+n') \cdot \log n' \cdot (\rifactor_M(c_1)+\rifactor_\Sigma(c_2)))
  = \CO((r \cdot k + n)\cdot \log (n+k) \cdot (\rifactor_M(c_1) + r)).
\]
We proceed by distinguishing the following cases:
\begin{enumerate}[label=(\alph*)]
  \item If $M$ is cancellative, then we can use the refinement interface for groups
    (see \autoref{E:refint}\ref{E:refint:2})
    with $\rifactor_M(c_1) = 1$ as explained in \autoref{sec:mon:cancellative}.
    Thus, the overall time complexity simplifies to
    \[
      \CO((r \cdot k + n)\cdot \log (n+k) \cdot r)
      = \CO((r^2 \cdot k + r\cdot n)\cdot \log (n+k)).
    \]
  \item Otherwise, we have $\rifactor_M(c_1) = \log \min(|M|, k)$ by
    \autoref{P:monoids-fast}, since $c_1$ has at most $k$ edges.
    \begin{enumerate}[label=(b\arabic*)]
    \item If $M$ is finite, then we have
      $\rifactor_M(c_1) \le \log |M| \in \CO(1)$, and thus obtain the same
      overall run time complexity as in the previous case. We have thus proved
      item~\ref{T:cancel:1} in the statement of the theorem.
    \item If $M$ is infinite, then $\log\min(|M|,k) =\log k$. Thus,
      the overall run time is in
      \[
        \CO((r \cdot k + n)\cdot \log (n+k) \cdot (\log k + r)). \tag*{\proofbox}
      \]
    \end{enumerate}
  \end{enumerate}
  \let\proofbox\relax %
\end{proof}
\vspace*{-10pt}%
\noindent
Note that the number $m$ of edges of the input coalgebra satisfies $m \leq
rk$. Thus, for a fixed input signature~$\Sigma$, we see that $m$ and $k$ are
asymptotically equivalent. If we further assume that $m \geq n$, which means
that there are no isolated states, then we obtain the bound in
\autoref{tab:instances}:
\begin{corollary}\label{C:table}
  Let $M$ be a commutative monoid. For a fixed input signature and on input
  coalgebras with~$n$ states and $m\ge n$ edges, our algorithm runs in time 
  \begin{enumerate}
  \item $\CO(m \cdot \log(m))$, if $M$ is cancellative, and
  \item $\CO(m \cdot \log(m)^2)$, otherwise.
  \end{enumerate}
\end{corollary}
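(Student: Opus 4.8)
The plan is to obtain both bounds as direct consequences of \autoref{T:cancel}, exploiting the observations recorded just before the statement: for a fixed signature~$\Sigma$ the rank~$r$ is a constant, i.e.\ $r\in\CO(1)$; the edge count~$m$ and the transition count~$k$ are then asymptotically equivalent, i.e.\ $k\in\Theta(m)$; and $n\le m$ by hypothesis. I would first record what these facts mean for the quantities appearing in \autoref{T:cancel}. Since $r\in\CO(1)$, both $r^2k+rn$ and $rk+n$ lie in $\CO(k+n)$, and because $k\in\CO(m)$ (by asymptotic equivalence) and $n\le m$, this in turn lies in $\CO(m)$. Likewise $k+n\in\CO(m)$ yields $\log(k+n)\in\CO(\log m)$, and $k\in\CO(m)$ yields $\log k\in\CO(\log m)$.

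For the cancellative case I would simply substitute these estimates into \itemref{T:cancel}{T:cancel:1}, turning $\CO\big((r^2k+rn)\cdot\log(k+n)\big)$ into $\CO(m\cdot\log m)$, which is the first claimed bound.

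For the non-cancellative case the one point needing care is that the case distinction of the corollary (cancellative vs.\ not) is coarser than that of \autoref{T:cancel} (cancellative-or-finite vs.\ infinite non-cancellative), so I would split into two subcases. If $M$ is finite and non-cancellative, then \itemref{T:cancel}{T:cancel:1} still applies and yields $\CO(m\cdot\log m)$, which lies in $\CO(m\cdot\log^2 m)$. If $M$ is infinite and non-cancellative, I would substitute into part~(2) of \autoref{T:cancel}: here $rk+n\in\CO(m)$, $\log(k+n)\in\CO(\log m)$, and the additional factor $\log k + r$ lies in $\CO(\log m)$ (as $\log k\in\CO(\log m)$ and $r\in\CO(1)\subseteq\CO(\log m)$), so that $\CO\big((rk+n)\cdot\log(k+n)\cdot(\log k+r)\big)$ collapses to $\CO(m\cdot\log^2 m)$. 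Combining the two subcases gives the second claimed bound. The only real subtlety is this bookkeeping of the mismatched case splits; mathematically the corollary is pure substitution of $r\in\CO(1)$ and $k\in\Theta(m)$ into the two formulas of \autoref{T:cancel}, so I expect no genuine obstacle beyond stating the subcase for finite non-cancellative~$M$ explicitly.
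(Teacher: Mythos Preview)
Your proposal is correct and follows essentially the same approach as the paper, which simply records the observations $m\le rk$ (hence $k\in\Theta(m)$ for fixed~$\Sigma$) and $m\ge n$ immediately before the corollary and treats the result as an immediate substitution into \autoref{T:cancel}. In fact, you are more careful than the paper in one respect: you explicitly handle the mismatch between the corollary's case split (cancellative vs.\ otherwise) and the theorem's case split (cancellative-or-finite vs.\ infinite non-cancellative), noting that finite non-cancellative~$M$ falls under \itemref{T:cancel}{T:cancel:1} and hence already satisfies the weaker bound $\CO(m\cdot\log^2 m)$; the paper leaves this implicit.
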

\begin{remark} We now provide a comparison of the
  complexity of Högberg et al.'s algorithm with
  the instances of our algorithm for weighted tree automata.
  \begin{enumerate}
  \item For arbitrary (non-cancellative) monoids, Högberg et al.\ establish a
    complexity of $\CO(r \cdot k \cdot n)$~\cite[Theorem~27] {HoegbergEA07}.
    Under the assumptions of \autoref{C:table}, we see that our bound indeed
    improves the complexity $\CO(r\cdot k\cdot n) = \CO(m \cdot n)$ of Högberg
    et al.'s algorithm. To see this, note first that the number $m$ of edges is
    in $\CO(n^{r+1})$, so that we obtain
    \[
      \CO(m\cdot \log(m)^2)
      \subsetneq
      \CO(m\cdot \sqrt[r+1]{m})
      \subseteq
      \CO(m\cdot \sqrt[r+1]{n^{r+1}})
      = \CO(m\cdot n)
    \]
    using in the first step that $\CO(\log(m)^d)\subsetneq \CO(m^{c})$ for every
    $d \ge 1$ and $0 < c < 1$.
  \item For cancellative monoids, the time bound given by Högberg et al.\ is
    $\CO(r^2\cdot k\cdot\log n)$ \cite[Theorem~29]{HoegbergEA07}. Assuming again
    that $m\ge n$, and recalling that $rk \geq m$, the complexity of our
    algorithm according to \autoref{T:cancel} is
    $\CO(r^2 \cdot k \cdot \log(k+n))$, i.e.\ only slightly worse for
    non-constant signatures.
  \end{enumerate}
\end{remark}

\noindent In addition to guaranteeing a good theoretical complexity, our tool
immediately yields an efficient implementation. For the case of
non-cancellative monoids, this is, to the best of our knowledge, the
only available implementation of partition refinement for weighted
tree automata.

\section{Evaluation and Benchmarking}\label{sec:bench}
We report on a number of benchmarks\footnote{The full set of benchmarks and
  their results can be found at
  \url{https://git8.cs.fau.de/software/copar-benchmarks}} that illustrate the
practical scalability of our tool \copar{} and hence our generic algorithm.
These benchmarks cover a selection of different system types and include randomly
generated inputs as well as real world examples. We also compare \copar{} with
two other minimization tools, where applicable. Details and results of further
benchmarks, in particular for the optimizations described in
\autoref{sec:combprod} and at the end of \autoref{sec:imp}, are reported
in~\cite{Deifel18}. All benchmarks were run and measured on the same Intel®
Core™ i5-6500 processor with 3.20GHz clock rate running a Linux system. We
report the timing results of our tool \copar{} (compiled with GHC 8.4.4)
separately for the three phases parsing, initialization and the actual
refinement loop.

Recall from \autoref{R:initPart}, that the input coalgebra implicitly defines an
initial partition according to the output behaviour of states. We have taken
care to ensure that in all the following benchmarks, this initial partition is
still coarse, i.e.~the algorithm has to perform some actual refinement steps
after initialization.

\subsection{Maximal Feasible Weighted Tree Automata}
We first focus on the instantiation of our algorithm for weighted tree automata
as described in \autoref{sec:wta}. Previous studies on the practical performance of
partition refinement on large labelled transition
systems~\cite{valmari2010simple,Valmari09} show that memory rather than run time
seems to be the limiting factor. Since labelled transition systems are a special
case of weighted tree automata, we expect to see similar phenomena. Hence, we
evaluate the maximal automata sizes that can be processed on a typical current
computer setup: We randomly generate weighted tree automata for various
signatures and monoids, looking for the maximal size of WTAs that can be handled
with 16 GB of RAM, and measure the respective run times of our tool.
\twnote{}
To this end, we minimize randomly generated coalgebras for the functors
\begin{equation}\label{eq:F}
  FX = M\times M^{(\Sigma X)},\quad \text{with $\Sigma X= 4\times X^r$},
\end{equation}
where $r$ ranges over $\{1,\ldots, 5\}$ and weight monoids~$M$ range over

\smallskip\noindent
\begin{tabular}{c@{\hspace{1em}}ll}
$\bullet$ & $(2,\vee,0)$ & (functor available as powerset \texttt{P(X)} in
$\copar$) \\[3pt]
$\bullet$ & $(\N,\max,0)$ & (syntactically: \texttt{(Z,max)\textasciicircum(X)})\\[3pt]
$\bullet$ & $(2,\vee,0)^{64}\cong (\Powf(64),\cup,\emptyset)$  & (syntactically: \texttt{(Word,or)\textasciicircum(X)})
\end{tabular}

\smallskip\noindent

\noindent We write~$n$ for the number of states, $k$ for the number of
transitions, and~$m$ for the number of edges in the coalgebra
encoding.\smnote{}  When
generating a coalgebra with $n$ states, we randomly create 50 outgoing
transitions per state, obtaining $k=50\cdot n$ transitions in total. The
transformation described in \autoref{sec:des} additionally introduces one
intermediate state per transition, leading to an actual number $n' = 51\cdot n$
of states. Every transition of rank~$r$ has one incoming edge and $r$ outgoing
edges, hence $m=(r+1)\cdot k = 50\cdot(r+1)\cdot n$.

\autoref{tab:extended} lists the maximal sizes of weighted tree automata that
\copar{} is able to process in the mentioned 16~GB of RAM, along with the
associated run times. Since our implementation of the refinement interface for
$\Powf\cong (2,\vee,0)^{(-)}$ is optimized for its specific functor, the tool
needs less memory in this case, allowing for higher values of~$n$, an effect
that decreases with increasing rank~$r$.\hpnote{}

When generating weighted tree automata for partition refinement, one needs to be
careful to avoid systems for which the partition refinement is trivial -- either
because the `init' step already distinguishes all states or because `refine'
does not split any blocks. To this end, we restrict to generating at most 50
different elements of $M$ in each automaton, and we generate 50 transitions per
state. As one can tell from the columns $P_1'$ and $P_f'$ in
\autoref{tab:extended}, this strategy excludes the above-mentioned trivial
cases: the initial partition $P_1'$ is small compared to the total number of
states $n'$ whereas the size $P_f'$ of the final partition nearly matches $n'$.
This implies that the actual partition refinement process has to do the main
work of distinguishing nearly all of the $n'$ states; hence, the test cases
present the worst case when it comes to the run time of the algorithm.

The first refinement step produces in the order of $|\Sigma|\cdot \min(50,
|M|)^r$ subblocks, implying earlier
termination for high values of~$|M|$ and~$r$ and explaining the slightly longer
run time for $M=(2,\vee,0)$ on small~$r$. We note in summary that WTAs with well
over $15$\hpnote{} million edges are processed in
less than five minutes, and in fact the run time of minimization is of the same
order of magnitude as that of input parsing.
\newcommand{\boolmon}{$(2,\vee,0)$}%
\newcommand{\Zmax}{$(\N,\max,0)$}%
\newcommand{\WordOr}{$(\Powf(64),\cup,\emptyset)$}%
\begin{table}
\caption{Processing times for partition refinement on maximal weighted tree
    automata (i.e.~coalgebras for $M\times M^{(\Sigma(-))}$) in 16~GB of memory with $n$ states and 50 transitions per state,
    leading to $n'$ states and $m$ edges in total. The column `Init' provides
    the time needed to compute the initial partition $P_1$ (on all $n'$ states), and `Refine' the
    time to compute the final partitions $P_f'$ (on $n'$ states) and $P_f$ (omitting intermediate states).}\label{tab:extended}
  \medskip
  \def\arraystretch{1.3}
  \begin{tabular}{@{}l@{\hspace{2mm}}lcr@{~~}r@{~~}r@{~~}r@{~~}cr@{~~}r@{~~}r@{~~}cr@{~~}r@{~~}r@{}}
\toprule
    \multicolumn{2}{c}{Functor Parameters}
    &&\multicolumn{4}{c}{Input File} 
    &&\multicolumn{3}{c}{Partition Sizes} 
    &&\multicolumn{3}{c}{Time (s) to} 
    \\
    \cline{1-2}
    \cline{4-7}
    \cline{9-11}
    \cline{13-15}
    Monoid $M$ & $\Sigma X =$ %
    && \multicolumn{1}{c}{$n$} & \multicolumn{1}{c}{$n'$}
     & \multicolumn{1}{c}{$m$} & \multicolumn{1}{c}{Size}
    && \multicolumn{1}{c}{$P_1'$} & \multicolumn{1}{c}{$P_f'$} & \multicolumn{1}{c}{$P_f$}
    && \multicolumn{1}{c}{Parse} & \multicolumn{1}{c}{Init} & Refine \\
\midrule
\boolmon & $4\times X$   && 154863 & 7898013 & 15486300 & 101 MB && 6 & 774313  & 154863 && 53 & 36 & 183 \\
         & $4\times X^2$ && 138000 & 7038000 & 20700000 & 143 MB && 6 & 7037670 & 138000 && 54 & 31 & 311 \\
         & $4\times X^3$ && 134207 & 6844557 & 26841400 & 191 MB && 6 & 6844557 & 134207 && 62 & 27 & 285 \\
         & $4\times X^4$ && 92491  & 4717041 & 23122750 & 163 MB && 6 & 4717041 & 92491  && 51 & 28 & 175 \\
         & $4\times X^5$ && 86852  & 4429452 & 26055600 & 186 MB && 6 & 4429452 & 86852  && 54 & 26 & 165 \\
\midrule
\Zmax    & $4\times X$   && 156913 & 8002563 & 15691300 & 131 MB && 437 & 784564  & 156913 && 62 & 31 & 93  \\
         & $4\times X^2$ && 118084 & 6022284 & 17712600 & 143 MB && 416 & 6021960 & 118084 && 58 & 30 & 119 \\
         & $4\times X^3$ && 100799 & 5140749 & 20159800 & 158 MB && 414 & 5140749 & 100799 && 57 & 28 & 105 \\
         & $4\times X^4$ && 92879  & 4736829 & 23219750 & 181 MB && 409 & 4736829 & 92879  && 60 & 28 & 105 \\
         & $4\times X^5$ && 94451  & 4817001 & 28335300 & 219 MB && 417 & 4817001 & 94451  && 63 & 29 & 108 \\
\midrule
\WordOr  & $4\times X$   && 152107 & 7757457 & 15210700 & 141 MB && 54 & 760534  & 152107 && 65 & 27 & 149 \\
         & $4\times X^2$ && 134082 & 6838182 & 20112300 & 176 MB && 54 & 6837891 & 134082 && 66 & 29 & 229 \\
         & $4\times X^3$ && 94425  & 4815675 & 18885000 & 157 MB && 54 & 4815675 & 94425  && 57 & 25 & 185 \\
         & $4\times X^4$ && 83431  & 4254981 & 20857750 & 170 MB && 54 & 4254981 & 83431  && 56 & 24 & 175 \\
         & $4\times X^5$ && 92615  & 4723365 & 27784500 & 223 MB && 54 & 4723365 & 92615  && 64 & 21 & 194 \\
  \bottomrule
\end{tabular}
\end{table}%
Since the publication of the conference paper~\cite{coparFM19}, we have
optimized the memory consumption in \copar, especially in the refinement
interface for the functor $\Sigma$. With the optimizations, \copar can now
handle coalgebras with 20\% more states within the same memory limit of 16GB of RAM
for the cases with $\Sigma X = 4 \times X$ and even 75\% more states for
the cases with $\Sigma X = 4 \times X^5$.

\subsection{Weighted Tree Automata from Grammars}
One widespread practical use of weighted tree automata is to learn natural language grammars~\cite{PetrovEA06Learning,PetrovKlein07Improved}. 
Methods of this type have been implemented in the \berkeleyparser{}
project\footnote{Available at
  \url{https://github.com/slavpetrov/berkeleyparser}}. The project also makes
six language grammars available, which we use to
demonstrate that \copar is capable of handling inputs that arise in practice.

We have parsed\footnote{See
  \url{https://git8.cs.fau.de/software/copar-benchmarks/-/tree/master/berkeleyparser}
for details} the weighted tree automata given in the
\berkeleyparser-specific format, obtaining coalgebras for the functor
\[
  FX= \R^{(\Sigma X)}
\]
where $\Sigma$ is (the polynomial functor for) a ranked alphabet containing, in the case at hand, only
operation symbols of arities~1 and~2. The ranked alphabet is implicit in the
original grammar file. In the grammar file, every state (i.e.~symbol of the grammar) is of the shape $S_i$ 
where $i\in \N$ is an index and~$S$ indicates that there are operation symbols
$\arity{S}{1}$  and $\arity{S}{2}$ in $\Sigma$. The rules in the grammar files
are of the shape
\[
  S_i \xrightarrow{w}  T_j R_k
  \qquad\text{or}\qquad
  S_i \xrightarrow{w}  T_j
\]
where $S_i$, $T_j$, $R_k$ are states and $w\in [0,1]$ is the weight of the
transition. In the induced coalgebra $\mu\colon X\to (\R,+,0)^{\Sigma X}$, the
state set $X$ is the set of symbols, and the above rules correspond to the transitions
\[
  \mu(S_i)(\arity{S}{2}(T_j, R_k)) = w
  \qquad\text{resp.}\qquad
  \mu(S_i)(\arity{S}{1}(T_j)) = w.
\]
Hence, by definition of the \berkeleyparser-format, every symbol (i.e.~state)
mentions at most two different operation symbols in its outgoing transitions.
\begin{table}
  \caption{Performance on weighted tree automata from the \berkeleyparser
    project, after transforming them to $\R^{(\Sigma(-))}$-coalgebras. The
    column `Init' provides the time needed to compute the initial partition
    $P_1'$ (on all $n'$ states) and `Refine' the time to compute the final
    partitions $P_f'$ (on $n'$ states) and $P_f$ (omitting intermediate states).}
  \medskip
  \def\arraystretch{1.3}
  \label{tab:bench-berkeley}
\begin{tabular}{@{}l@{}c@{}rrr@{}crrr@{}c@{}rrr@{}}
    \toprule
    Filename
    &\hspace*{2mm}& \multicolumn{3}{c}{Input}
    &\hspace*{2mm}& \multicolumn{3}{@{}c@{}}{Partion Sizes}
    &\hspace*{3mm}& \multicolumn{3}{c}{Time (s) to} \\
    \cline{3-5}
    \cline{7-9}
    \cline{11-13}
    && \multicolumn{1}{r}{$n$}
    & \multicolumn{1}{r}{$n'$}
    & \multicolumn{1}{r@{}}{$m$}
    && $P_1'$
    & $P_f'$
    & $P_f$
    && Parse
    & Init
    & Refine
    \\
    \midrule
\texttt{arb\_sm5}
                &
                & 1026
                & 946509
                & 2788134
                &
                & 315
                & 26427
                & 404
                &
                & 9.41
                & 2.47
                & 3.63
            \\
\texttt{bul\_sm5}
                &
                & 2545
                & 834777
                & 2173611
                &
                & 743
                & 37660
                & 1327
                &
                & 8.80
                & 2.07
                & 2.57
            \\
\texttt{chn\_sm5}
                &
                & 945
                & 1110445
                & 3236715
                &
                & 426
                & 31386
                & 579
                &
                & 11.31
                & 3.28
                & 4.43
            \\
\texttt{eng\_sm6}
                &
                & 1133
                & 1843351
                & 5410006
                &
                & 449
                & 30692
                & 548
                &
                & 19.71
                & 5.25
                & 7.59
            \\
\texttt{fra\_sm5}
                &
                & 737
                & 451171
                & 1343591
                &
                & 178
                & 18770
                & 319
                &
                & 4.21
                & 0.98
                & 1.57
            \\
\texttt{ger\_sm5}
                &
                & 986
                & 617762
                & 1848751
                &
                & 273
                & 20324
                & 484
                &
                & 6.22
                & 1.48
                & 2.03
            \\
\bottomrule
    \end{tabular}

\end{table}
\autoref{tab:bench-berkeley} shows the performance of \copar on the input files
obtained. Most of the running time is spent during parsing of our generic input
format (\autoref{sec:parsing}) whereas the actual partition refinement runs in
under 13 seconds on each of the files. The number of states mentioned explicitly
in the files is denoted by $n$, and~$n'$ denotes the total number of states
obtained after introducing intermediate states via our modularity mechanism
(\autoref{sec:des}).  As one can tell from the partition sizes $P_1'$ and
$P_f'$, a considerable number of refinement steps is needed before the final partition is reached.

\subsection{Benchmarks for PRISM Models}
\newcommand{\benchparam}[1]{\scriptsize \textnormal{(}#1\textnormal{)}}
\begin{table}
  \caption{Performance on PRISM benchmarks}
  \medskip
  \noshowkeys\label{tab:bench-prism}
  \def\arraystretch{1.2}
  \begin{tabular}{@{}llrr@{}c@{}r@{\hspace{0.5em}}rr@{}c@{}rr@{}}
    \toprule
    PRISM Model &Functor& \multicolumn{2}{c}{Input}
                &\hspace*{2mm}& \multicolumn{3}{c}{Time (s) to}
                &\hspace*{2mm}& \multicolumn{2}{c}{Time (s) of} \\
    \cline{3-4}
    \cline{6-8}
    \cline{10-11}
                        && \multicolumn{1}{l}{States}
                        & \multicolumn{1}{l}{Edges}
                        &
                        & Parse
                        & Init
                        & Refine
                        &
                        & Valmari
                        & mCRL2 \\
    \midrule
    \tt fms \benchparam{n=4} &$\R^{(-)}$& 35910   & 237120  && 0.48  & 0.12 & 0.16  && 0.21 & \multicolumn{1}{c}{--} \\
\tt fms \benchparam{n=5}     &$\R^{(-)}$& 152712  & 1111482 && 2.46  & 0.68 & 1.1   && 1.21 & \multicolumn{1}{c}{--} \\
\tt fms \benchparam{n=6}     &$\R^{(-)}$& 537768  & 4205670 && 9.94  & 2.91 & 5.56  && 5.84 & \multicolumn{1}{c}{--} \\[1mm]
    \midrule
    \multirow{1}{*}{%
\tt \makecell[l]{wlan2\_collide\\[-1mm]\benchparam{COL=2,TRANS\_TIME\_MAX=10}}}                                           &$\N\times\Pow(\N\times \Dist (-))$& 65718   & 94452   && 0.5   & 0.3  & 0.58  && 0.12 & 0.41                   \\[2mm]
\multirow{1}{*}{\tt \makecell[l]{wlan0\_time\_bounded\\[-1mm]{\scriptsize \textnormal{(}TRANS\_TIME\_MAX=10,DEADLINE=100\textnormal{)}}}} &$\N\times\Pow(\N\times \Dist (-))$& 582327  & 771088  && 5.19  & 3.13 & 5.5   && 0.88 & 3.18                   \\[2mm]
\multirow{1}{*}{\tt \makecell[l]{wlan1\_time\_bounded\\[-1mm]\benchparam{TRANS\_TIME\_MAX=10,DEADLINE=100}}}                              &$\N\times\Pow(\N\times \Dist (-))$& 1408676 & 1963522 && 13.37 & 6.18 & 16.18 && 2.44 & 8.44                   \\[2mm]
    \bottomrule
  \end{tabular}
\end{table}

In order to see how \copar performs on models of other system types that arise in practice, we have
taken two kinds of models from the benchmark
suite~\cite{DBLP:conf/qest/KwiatkowskaNP12} of the probabilistic model checker
PRISM~\cite{KNP11}. We derived coalgebras for the functors
\begin{itemize}
\item $FX = \R^{(X)}$ from continuous time Markov chains (CTMC), and
\item $FX = \N\times \Pow(\N\times(\Dist X))$ from Markov decision processes (MDP).
\end{itemize}
This translation deliberately ignores the variable valuations present in the
original benchmark models to avoid situations where all states are already
distinguished after the first refinement step. For MDPs, the translation instead
generates a coarse initial partition for each model (the outer $\N\times (-)$).
For the CTMCs considered, the functor $\R^{(-)}$ is already sufficient since the
initial partition distinguishes states by the accumulated weight of their
outgoing transitions.

Like in the case of WTAs, the functor for MDPs is a composite of several basic
functors and thus requires use of the construction described in
\autoref{sec:des}. Two of the benchmarks are shown in \autoref{tab:bench-prism}
with different parameters, resulting in three differently sized coalgebras
each. The \emph{fms} family of systems model a flexible manufacturing
system~\cite{DBLP:journals/pe/CiardoT93} as CTMCs (without initial partition),
and we minimize them under the usual weighted bisimilarity, i.e.~as
$\R^{(-)}$-coalgebras. The \emph{wlan}
benchmarks~\cite{DBLP:conf/papm/KwiatkowskaNS02} model various aspects of the
IEEE 802.11 Wireless LAN protocol as MDPs.

\autoref{tab:bench-prism} also includes the total run time of two additional
partition refinement tools: Valmari's C++ implementation\footnote{Available at
  \url{https://git8.cs.fau.de/hpd/mdpmin-valmari}} of algorithms described by Valmari and Franceschinis~\cite{ValmariF10,valmari2010simple},
which can minimize MDPs as well as CTMCs, and the tool \texttt{ltspbisim} from
the mCRL2 toolset\footnote{One should note that mCRL2 offers a whole suite of
  reasoning services besides partition refinement.}~\cite{BunteEA19} version
201808.0, which implements a recently discovered refinement algorithm for
MDPs~\cite{GrooteEA18} (but does not support CTMCs directly, hence there is no
data in the first three lines).

The results in \autoref{tab:bench-prism} show that refinement for the \emph{fms}
benchmarks is faster than for the respective \emph{wlan} ones, even though the
first group has more edges. This is due to~(a) the fact that the functor
for MDPs is more complex and thus introduces more indirection into our
algorithms, as explained in \autoref{sec:des}, and (b)~that our optimization for
one-element blocks fires much more often for
\emph{fms}.

It is also apparent that \copar{} is slower than both of the other tools in our
comparison for the presented examples in \autoref{tab:bench-prism}: CoPaR takes
slightly more than 30 seconds whereas Valmari's optimized implementation only
takes 2.44. To some extent, this performance difference can be attributed to the
fact that our implementation is written in Haskell and the other tools in C++.
In addition, \copar{} incurs a certain amount of overhead for genericity and
modularity. Moreover, CoPaR's input format as described in
\autoref{sec:parsing} is much more complex to parse than Valmari's format, which
is essentially a whitespace separated list of integers

\section{Conclusion and Future Work}\label{section:conclusion}
\label{sec:conc}

\takeout{}%

\noindent We have instantiated a generic and efficient partition refinement
algorithm that we introduced in previous work \cite{concurSpecialIssue} to
weighted (tree) automata, and we have refined the generic complexity analysis of
the algorithm to cover this case. Moreover, we have described an implementation
of the generic algorithm in the form of the tool \copar, which supports the
modular combination of basic system types without requiring any additional
implementation effort, and allows for easy incorporation of new basic system
types by implementing a generic refinement interface.  \copar{} is currently
concerned entirely with partition refinement, and does not implement other
algorithmic tasks (such as simulation, visualization, or model checking), which
for specific system types such as labelled transition
systems or Markov chains are covered by existing well-developed tool suites
(\autoref{sec:bench}). The salient feature of \copar{} is its genericity, which
allows for instantiation of the partition refinement algorithm to new system
types with minimal effort.

In future work, we will further broaden the range of system types that our
algorithm and tool can accommodate, and provide support for base categories
beyond sets, e.g.~nominal sets, which underlie nominal
automata~\cite{BojanczykEA14,SchroderEA17}, or algebraic
categories~\cite{ZoltanMaletti11,generalizeddeterminization}.

Concerning genericity, there is an orthogonal approach by Ranzato and
Tapparo~\cite{RanzatoT08}, which is generic over \emph{notions of process
  equivalence} but fixes the system type to standard labelled transition
systems; see also~\cite{GrooteEA17}. Similarly, Blom and
Orzan~\cite{BlomOrzan03,BlomOrzan05} present \emph{signature refinement}, which
covers, e.g.~strong and branching bisimulation as well as Markov chain lumping,
but requires adapting the algorithm for each instance. These algorithms have
also been improved using symbolic techniques (e.g.~\cite{vDvdP18}). Moreover,
many of the mentioned approaches and 
others~\cite{BergaminiEA05,BlomOrzan03,BlomOrzan05,GaravelHermanns02,vDvdP18}
focus on parallelization. We will explore in future work whether symbolic and
distributed methods can be lifted to coalgebraic generality.

In a recent alternative approach to bisimilarity minimization called
\emph{partition aggregation}~\cite{BjorklundCleophas2020}, the behavioural
equivalence relation is approximated from below, rather than from above as in
partition refinement. Partition aggregation has worse run-time complexity than
partition refinement on the global task of minimizing entire (reachable)
systems, but it can be executed partially and thus may be more efficient on the
local task of checking equivalence of two given states in a labelled transition
system. In future work, we aim at a coalgebraic generalization of partition
aggregation, which might help circumvent the bottleneck of (linear) memory
consumption that is intrinsic to partition
refinement~\cite{valmari2010simple,Valmari09}.

\bibliographystyle{alpha}
\bibliography{refs}

\label{lastpage}
\end{document}